\newcommand{\LWS}{\mathsf{LWS}}
\newcommand{\kLWS}{\mathsf{kD \; LWS}}
\newcommand{\kminusoneLWS}{\mathsf{(k-1)D \; LWS}}
\newcommand{\twoLWS}{\mathsf{2D \; LWS}}
\newcommand{\minplus}{\mathsf{(min, +)MM}}
\newcommand{\APSP}{\mathsf{APSP}}
\newcommand{\OV}{\mathsf{OV}}
\newcommand{\threeOV}{\mathsf{3OV}}
\newcommand{\kOV}{\mathsf{kOV}}
\newcommand{\kminusoneOV}{\mathsf{(k-1)OV}}
\newcommand{\SETH}{\mathsf{SETH}}
\newcommand{\kSAT}{\mathsf{kSAT}}
\newcommand{\threeMinIP}{\mathsf{3Min\text{-}IP}}
\newcommand{\Static}{\mathsf{Static}}
\newcommand{\MinIP}{\mathsf{Min\text{-}IP}}
\newcommand{\kMinIP}{\mathsf{kMin\text{-}IP}}
\newcommand{\kminusoneMinIP}{\mathsf{(k-1)Min\text{-}IP}}
\newcommand{\SAT}{\mathsf{SAT}}
\newcommand{\NegativeTriangle}{\mathsf{NegativeTriangle}}
\newcommand{\CRR}{\mathsf{CRR}}
\newcommand{\micro}{\mathsf{micro}}
\newcommand{\PT}{\mathsf{PT}}
\newcommand{\lis}{\mathsf{LIS}}
\newtheorem{theorem}{Theorem}[section]
\newtheorem*{theorem*}{Theorem}
\newtheorem{corollary}[theorem]{Corollary}
\newtheorem*{corollary*}{Corollary}
\newtheorem{lemma}[theorem]{Lemma}
\newtheorem{definition}[theorem]{Definition}
\newtheorem*{definition*}{Definition}
\newtheorem{conjecture}[theorem]{Conjecture}
\newcommand{\R}{\mathbb{R}}
\DeclareMathOperator{\poly}{poly}
\title{Tensor Ranks and the Fine-Grained Complexity of Dynamic Programming}
\author{Josh Alman\thanks{Department of Computer Science,
Columbia University, NY 10027. Email: \texttt{josh@cs.columbia.edu}.} \and Ethan Turok \thanks{Department of Computer Science,
Columbia University, NY 10027. Email: \texttt{ezt2102@columbia.edu}.}\and Hantao Yu\thanks{Department of Computer Science,
Columbia University, NY 10027. Email: \texttt{hantao.yu@columbia.edu}.} \and Hengzhi Zhang \thanks{Department of Computer Science,
Columbia University, NY 10027. Email: \texttt{hz2663@columbia.edu}.}}
\begin{document}

\maketitle

\begin{abstract}
    Generalizing work of K\"unnemann, Paturi, and Schneider [ICALP 2017], we study a wide class of high-dimensional dynamic programming (DP) problems in which one must find the shortest path between two points in a high-dimensional grid given a tensor of transition costs between nodes in the grid. This captures many classical problems which are solved using DP such as the knapsack problem, the airplane refueling problem, and the minimal-weight polygon triangulation problem. We observe that for many of these problems, the tensor naturally has low tensor rank or low slice rank.

    We then give new algorithms and a web of fine-grained reductions to tightly determine the complexity of these problems. For instance, we show that a polynomial speedup over the DP algorithm is possible when the tensor rank is a constant or the slice rank is 1, but that such a speedup is impossible if the tensor rank is slightly super-constant (assuming SETH) or the slice rank is at least 3 (assuming the APSP conjecture). 
    We find that this characterizes the known complexities for many of these problems, and in some cases leads to new faster algorithms. 
\end{abstract}

\section{Introduction}

Dynamic programming (DP) is one of the most common algorithmic paradigms, used throughout the theory and practice of diverse computational domains. See \cite{CLRS01} chapter 14 for a detailed introduction.

When one solves a problem using DP, a natural question arises: is this the fastest algorithm to solve the problem? Recently, fine-grained complexity has been used to show that for many important problems, the answer is yes. For instance, researchers have established conditional lower bounds for the longest common subsequence \cite{ABW15,BK15}, edit distance \cite{BI15}, Fréchet distance \cite{Bringmann14}, and regular expression matching \cite{BI16}, showing that there is no algorithm (whether or not it uses DP) that is faster than the standard DP algorithm by a polynomial factor.

On the other hand, there are some notable examples where a natural DP formulation is \emph{not} the fastest known way to solve a problem. Consider, for instance, the 
polygon triangulation problem from computational geometry. In this problem, we are given as input a convex polygon with $n$ nodes, where each node $i$ has a weight $w_i$. For each triple $i,j,k$ of nodes, a triangle with those nodes as vertices has weight $w_i \cdot w_j \cdot w_k$. The weight of a triangulation of the polygon is the sum of the weights of its constituent triangles. The goal in the problem is to find the triangulation of the polygon with minimum weight. This problem has applications in point visibility \cite{Hershberger89}, mesh generation \cite{BE95}, computer graphics \cite{NM95}, and even in visual cryptography~\cite{SSMB12}.

Polygon triangulation has a natural DP formulation as follows. Let $T[i,j]$ denote the minimum weight of a triangulation of the polygon consisting of just nodes $i, i+1, i+2, \ldots, j$ with an edge drawn between nodes $i$ and $j$. Thus our goal is to compute $T[1,n]$, and these values satisfy the recurrence 
\[
T[i,j] = \min_{i < k < j}\Bigl\{T[i,k] + T[k,j] + w_i \cdot w_j \cdot w_k \Bigr\}.
\] (Since there is an edge from $i$ to $j$ in the polygon, there must be a triangle involving those two nodes and a third node $k$; the recurrence corresponds to iterating over the choices of that third node.) 
\begin{figure}[ht]
\begin{center}
\resizebox{5cm}{!}{

\tikzset{every picture/.style={line width=0.75pt}} 

\begin{tikzpicture}[x=0.75pt,y=0.75pt,yscale=-1,xscale=1]

\draw [line width=2.25]    (218.72,122.32) -- (149.09,172.9) ;
\draw [line width=2.25]    (218.72,122.32) -- (304.78,122.32) ;
\draw [line width=2.25]    (304.78,122.32) -- (374.41,172.9) ;
\draw [color={rgb, 255:red, 0; green, 0; blue, 0 }  ,draw opacity=1 ][line width=2.25]  [dash pattern={on 2.53pt off 3.02pt}]  (304.78,387.18) -- (374.41,336.6) ;
\draw [line width=2.25]    (122.5,254.75) -- (149.09,336.6) ;
\draw [color={rgb, 255:red, 0; green, 0; blue, 0 }  ,draw opacity=1 ][line width=2.25]  [dash pattern={on 2.53pt off 3.02pt}]  (149.09,336.6) -- (218.72,387.18) ;
\draw [line width=2.25]    (374.41,336.6) -- (401,254.75) ;
\draw [color={rgb, 255:red, 208; green, 2; blue, 27 }  ,draw opacity=1 ][line width=2.25]  [dash pattern={on 6.75pt off 4.5pt}]  (122.5,254.75) -- (401,254.75) ;
\draw [color={rgb, 255:red, 0; green, 0; blue, 0 }  ,draw opacity=1 ][line width=2.25]    (218.72,387.18) -- (304.78,387.18) ;
\draw [color={rgb, 255:red, 0; green, 0; blue, 0 }  ,draw opacity=1 ][line width=2.25]  [dash pattern={on 2.53pt off 3.02pt}]  (122.5,254.75) -- (149.09,172.9) ;
\draw [color={rgb, 255:red, 0; green, 0; blue, 0 }  ,draw opacity=1 ][line width=2.25]  [dash pattern={on 2.53pt off 3.02pt}]  (374.41,172.9) -- (401,254.75) ;
\draw [color={rgb, 255:red, 208; green, 2; blue, 27 }  ,draw opacity=1 ][line width=2.25]  [dash pattern={on 6.75pt off 4.5pt}]  (122.5,254.75) -- (218.72,387.18) ;
\draw [color={rgb, 255:red, 208; green, 2; blue, 27 }  ,draw opacity=1 ][line width=2.25]  [dash pattern={on 6.75pt off 4.5pt}]  (218.72,387.18) -- (401,254.75) ;

\draw (203,96.4) node [anchor=north west][inner sep=0.75pt]  [font=\Large]  {$1$};
\draw (126,153.4) node [anchor=north west][inner sep=0.75pt]  [font=\Large]  {$2$};
\draw (307,100.4) node [anchor=north west][inner sep=0.75pt]  [font=\Large]  {$n$};
\draw (101,242.4) node [anchor=north west][inner sep=0.75pt]  [font=\Large]  {$i$};
\draw (412,238.4) node [anchor=north west][inner sep=0.75pt]  [font=\Large]  {$j$};
\draw (378,146.4) node [anchor=north west][inner sep=0.75pt]  [font=\Large]  {$n-1$};
\draw (210,393.4) node [anchor=north west][inner sep=0.75pt]  [font=\Large]  {$k$};
\draw (100,335.4) node [anchor=north west][inner sep=0.75pt]  [font=\Large]  {$i+1$};
\draw (296,392.4) node [anchor=north west][inner sep=0.75pt]  [font=\Large]  {$k+1$};
\draw (383,324.4) node [anchor=north west][inner sep=0.75pt]  [font=\Large]  {$j-1$};

\end{tikzpicture}

}
\end{center}

\caption{An example polygon triangulation problem. The polygon $P(i,j)$ is partitioned into 3 parts by choosing $k$ and forming a triangle $(i,j,k)$ whose weight is $w_i\cdot w_j\cdot w_k$.}
\end{figure}
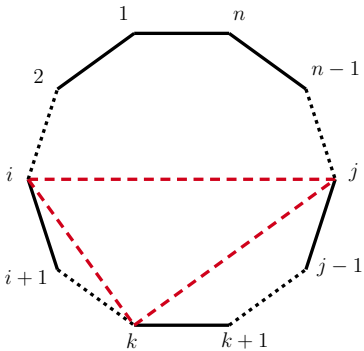

This recurrence leads to a DP algorithm which solves the problem in time $O(n^3)$. However, Hu and Shing showed in \cite{HS82, HS84} that the problem can actually be solved much faster, in time $O(n \log n)$. This is a surprising example where geometric techniques lead to a faster algorithm than the natural DP formulation.

\subsection{Least Weight Subsequence Problems}
Künnemann, Paturi, and Schneider~\cite{KPS17} initiated a general study of these phenomena. They studied a general class of problems intended to capture many one-dimensional DP problems called Least Weight Subsequence ($\LWS$) problems: Given as input a positive integer $n$ and an $n \times n$ matrix $w$, compute the value $T[n]$ defined by the recurrence:
\begin{align}
\label{eq:1dlsw}
    T[j]
    &=
    \begin{cases}
    0
    &
    \text{if } j = 0
    \\
    \displaystyle\min_{0 \leq i<j} \Bigl\{T[i]+w[i,j]\Bigr\}
    &
    \text{otherwise.}
    \end{cases}
\end{align}

$\LWS$ was first introduced by Hirschbert and Lamore \cite{HL87} to capture many known DP problems, including longest increasing subsequence \cite{Fredman75}, airplane refueling \cite{HL87}, coin change \cite{KPS17}, nested boxes \cite{KPS17}, pretty printing \cite{KP81}, and various knapsack problems. 

For illustration purposes, consider the longest increasing subsequence ($\lis$) problem: given an array of $n$ integers $X = [x_1, \ldots, x_n]$, return the length of the longest strictly increasing subsequence in $X$ \cite{Fredman75}. $\lis$ can be formulated as an $\LWS$ problem by setting 
\begin{align*}
    w[i, j]
    &=
    \begin{cases}
    -1
    &
    \text{if } x_i < x_j
    \\
    \infty
    &
    \text{otherwise.}
    \end{cases}
\end{align*}
Notice that $w[i,j]$ equals negative one when $x_i$ can be added to a subsequence which ends in $x_j$, thus increasing the length of a strictly increasing subsequence by $1$. Since $\lis$ is a maximization problem and $\LWS$ is a minimization problem, the weights are $-1$, not $1$, and the solution is given by $-T[n]$. Many algorithmic problems can be formulated as an $\LWS$ instance by appropriately setting the weight matrix $w$.

Straightforward DP solves the $\LWS$ problem in $O(n^2)$ time. Since the input matrix $w$ has $\Omega(n^2)$ entries, it requires quadratic time to read the input, so a faster algorithm isn't possible in general. However, if the $w$ matrix is given in a smaller, compressed form, then one may hope for subquadratic-time algorithms.\footnote{There has also been prior work on algorithms which assume $w$ has some additional structure which does not mean $w$ is compressible, but which lets one solve the problem without looking at most entries of $w$. For instance, \cite{HL87} and \cite{Wilber88} give $O(n\log n)$ and $O(n)$ time algorithms, respectively, for solving $\LWS$ with concave weights, i.e., when the entries of the matrix $w$ are promised to satisfy a quadrangle inequality. See also~\cite{GG89,KK90,Miller88}.}

One example that \cite{KPS17} focuses on is the case when $w$ is a low-rank matrix. If $w$ has rank $r < n^{o(1)}$, then one can be given as input matrices $A,B \in \R^{n \times r}$ such that $w = A \times B^T$, so the input size to the problem is only $n^{1 + o(1)}$. 

Interestingly, Künnemann showed via fine-grained reductions that this problem is subquadratic-time \emph{equivalent} to the well-studied $\MinIP$ problem for vectors of dimension $r$: Given as input $x_1, \ldots, x_n, y_1, \ldots, y_n \in \R^r$, find the $x_i, y_j$ minimizing the inner product $\langle x_i, y_j \rangle$. This problem can be solved in time $O(n^{2 - 1/r})$ using geometric techniques \cite{Yao82,Mat92,AESW90}, and thus has a truly-subquadratic time algorithm whenever $r$ is a constant. On the other hand, it is known that assuming the Strong Exponential Time Hypothesis ($\SETH$), the $\MinIP$ problem requires time $n^{2 - o(1)}$ even when $r$ is slightly super-constant $r = 2^{\log^* n}$ \cite{Chen18}, and thus the DP algorithm is essentially optimal. (Here $\log^*$ denotes the very slowly-growing iterated logarithm function.)

In this paper, we investigate the optimality of higher-dimensional DP formulations. 
We focus especially on generalizing $\LWS$ with low-rank matrices. As we will see, the known one-dimensional reductions do not generalize in a straightforward way, leading to a variety of generalizations and an intricate landscape of results. We will find that many classical higher-dimensional DP problems like the 
polygon triangulation problem are captured by our generalizations.

There are two choices to be made when generalizing $\LWS$ with low-rank matrices to higher dimensions: what is the higher-dimensional generalization of matrix rank (section \ref{sec: matrix rank generalization}) and what is the higher-dimensional generalization of the $\LWS$ recurrence (section \ref{sec: higher-dimensional LWS recurrences}).

\subsection{Generalizations of matrix rank}
\label{sec: matrix rank generalization}

The rank of a matrix has many equivalent definitions. However, when these definitions are generalized to higher-dimensional tensors, they lead to different notions. Prominent examples with applications in algebra, combinatorics, algorithm design, and complexity theory include the rank, subrank, border rank, slice rank, flattening rank, analytic rank, and geometric rank~(see, e.g.,~\cite{ottaviani2020tensor,kopparty2020geometric}). It is thus not clear, in general, which notion to use when generalizing results involving low-rank matrices.

We focus here on the two notions which arise naturally in known DP problems: tensor rank and slice rank.

\paragraph{Tensor Rank} A $d$-dimensional (order-$d$) tensor $w \in \R^{n_1 \times n_2 \times \cdots \times n_d}$ has rank $1$ if there are vectors $x_1 \in \R^{n_1}, \ldots, x_d \in \R^{n_d}$ such that, for all $i_1 \in [n_1], \ldots, i_d \in [n_d]$ we have $w[i_1, \ldots, i_d] = x_1[i_1] \cdot x_2 [i_2] \cdots x_d[i_d]$. More generally, the rank of tensor $w$ is the minimum non-negative integer $k$ such that there are rank $1$ tensors $w_1, \ldots, w_k$ for which $w = w_1 + \cdots + w_k$. This notion is sometimes also called canonical polyadic decomposition (CPD) rank.

For instance, in the polygon triangulation discussed earlier, the tensor $w$ whose entry $w[i,j,k]$ gives the weight of triangle $(i,j,k)$ has rank $1$ because the weight of the triangle $(i,j,k)$ is $w[i,j,k] = x_i\cdot x_j\cdot x_k$. 

For another example, consider the airplane refueling problem: an airplane is traveling on a grid with dimension $k$ such that each point in the grid is a refueling airport. The airplane starts at location $(1,\ldots,1)$ and wants to arrive at location $(n,\ldots,n)$. The cost of flying from $(i_1,\ldots,i_{\ell-1},j_{\ell},i_{\ell+1},\ldots,i_k)$ to $(i_1,\ldots,i_k)$ is $w[i_1,\ldots,i_k,j_{\ell}]$ (the airplane can only flies on the grid). The problem asks to minimize the cost of traveling.

One commonly studied cost of traveling from $(i_1,\ldots,i_{\ell-1},j_{\ell},i_{\ell+1},\ldots,i_k)$ to $(i_1,\ldots,i_k)$ is $w[i_1,\ldots,i_k,j_{\ell}] = (k-(i_{\ell}-j_{\ell}))^2$ for a fixed constant $k$~\cite{HL87}, which has rank $4$ since
\[
(k-(i_{\ell}-j_{\ell}))^2 = i_{\ell}^2\cdot 1+ 1\cdot j_{\ell}^2 + (i_{\ell}-k)\cdot (-2j_{\ell})+(i_{\ell}-\frac{k}{2})\cdot (-2k).
\]

\paragraph{Slice Rank} A $d$-dimensional (order-$d$) tensor $w \in \R^{n_1 \times n_2 \times \cdots \times n_d}$ has slice rank $1$ if there is a $j \in [d]$, a vector $a \in \R^{n_j}$, and a $(d-1)$-dimensional tensor $b \in \R^{n_1 \times \cdots \times n_{j-1} \times n_{j+1} \times \cdots \times n_d}$ such that, for all $i_1 \in [n_1], \ldots, i_d \in [n_d]$ we have $w[i_1, \ldots, i_d] = a[i_j] \cdot b[i_1, \ldots, i_{j-1}, i_{j+1}, \ldots, i_d]$. More generally, the slice rank of tensor $w$ is the minimum non-negative integer $k$ such that there are slice rank $1$ tensors $w_1, \ldots, w_k$ for which $w = w_1 + \cdots + w_k$. Slice rank was recently introduced in the context of solving the cap set problem from extremal combinatorics~\cite{croot2017progression,tao2016notes,tao2016notes2,blasiak2017cap}, but it has since found applications in algorithm design and complexity theory~\cite{blasiak2017cap,alman2021limits,alman2021limits2,blaser2020slice,blaser2021orbit} and other areas of combinatorics~\cite{ellenberg2017large,naslund2017upper,sawin2018bounds,lovasz2019lower}.

It is immediate that if a tensor $w$ has rank $d$, then it has slice rank at most $d$. However, there are many natural situations where the slice rank of a tensor may be much lower than its rank, and we could hope to take advantage of this to design faster algorithms.

For example, another reasonable cost function for the airplane refueling problem is the one that depends only on the destinations, e.g., each airport charges a fee for landing at that airport. In this scenario, the cost function would have slice rank $1$ but very large rank. We discuss the details in \Cref{sec: airplane refueling}.

\subsection{Higher-dimensional $\LWS$ recurrences}
\label{sec: higher-dimensional LWS recurrences}

Many problems solvable by $\LWS$ can be naturally generalized to higher dimensions, which motivates us to study high dimensional version of the $\LWS$ recurrence. We focus on two new recurrences which capture most examples. The first, which we call $\kLWS$, is perhaps the most general.

\begin{definition}[$\kLWS$]
For a positive integer $k$, consider $(k+1)$-dimensional tensors $w_1,\ldots,w_{k}$ of size $n \times n \times \cdots \times n$, where $w_{\ell}[i_1,\ldots,i_{k},j] \in \{-W,\ldots,W,\infty\}$ for all $1 \leq \ell \leq k$. The $\kLWS$ problem asks, given as input $w_1, \ldots, w_k$, to determine $T[n,\ldots,n]$ given the dynamic programming recurrence relation:
\begin{align*}
    T\Big[j_1,j_2,\ldots,j_k\Big] = 
    \begin{cases}
        0 \hspace{10em}\textup{ if } j_1 = j_2 = \ldots = j_k = 1 \\
        \displaystyle\min_{1\leq \ell \leq k}\Bigl\{\displaystyle\min_{1 \leq i_{\ell}<j_{\ell}}\Bigl\{T\Big[j_1,\ldots,j_{\ell-1},i_{\ell},j_{\ell+1},\ldots,j_k\Big]+w_{\ell}\Big[j_1,j_2,\ldots,j_k,i_{\ell}\Big]\Bigr\}\Bigr\} \textup{ otherwise.}
    \end{cases}
\end{align*}
\end{definition}

Intuitively, to compute $T[j_1,j_2,\ldots,j_k]$, we look at all \textit{previous} terms in the table $T$ that differ from $(j_1,j_2,\ldots,j_k)$ by \textit{one} coordinate. For example, when $k = 2$, $\twoLWS$ can be expressed as
\begin{align*}
    T[i, j]
    &=
    \begin{cases}
        0
        &
        \text{if } i = j = 1
        \\
        \min
        \begin{cases}
            \displaystyle\min_{1 \leq k < i} \{ T[k, j] + w_1[i, j, k] \}
            \\
            \displaystyle\min_{1 \leq k < j} \{ T[i, k] + w_2[i, j, k] \}
        \end{cases}
        &
        \text{otherwise.}
    \end{cases}
\end{align*}

$\kLWS$ captures high-dimensional analogs of many previous problems solved by $\LWS$ and also some new problems which we discuss below. This includes higher dimensional airplane refueling (see \Cref{sec: airplane refueling} below), $\kMinIP$ (\Cref{sec: Static kLWS Hierarchy}), all-pairs shortest paths (\Cref{sec: slice rank twoLWS}), multiple nested box chains (\Cref{sec: multiple nested boxes}), etc. An illustration of $\twoLWS$ is shown in \Cref{fig: kLWS}.

\begin{figure}[ht]
\begin{center}
\resizebox{5cm}{!}{
\tikzset{every picture/.style={line width=0.75pt}} 

\begin{tikzpicture}[x=0.75pt,y=0.75pt,yscale=-1,xscale=1]

\draw [line width=2.25]  (64.7,630) -- (532.7,630)(111.5,223.65) -- (111.5,675.15) (525.7,625) -- (532.7,630) -- (525.7,635) (106.5,230.65) -- (111.5,223.65) -- (116.5,230.65)  ;
\draw  [color={rgb, 255:red, 0; green, 0; blue, 0 }  ,draw opacity=1 ][fill={rgb, 255:red, 0; green, 0; blue, 0 }  ,fill opacity=1 ] (369,405) .. controls (369,396.16) and (376.16,389) .. (385,389) .. controls (393.84,389) and (401,396.16) .. (401,405) .. controls (401,413.84) and (393.84,421) .. (385,421) .. controls (376.16,421) and (369,413.84) .. (369,405) -- cycle ;
\draw  [draw opacity=0] (111.5,280) -- (462,280) -- (462,632) -- (111.5,632) -- cycle ; \draw   (111.5,280) -- (111.5,632)(161.5,280) -- (161.5,632)(211.5,280) -- (211.5,632)(261.5,280) -- (261.5,632)(311.5,280) -- (311.5,632)(361.5,280) -- (361.5,632)(411.5,280) -- (411.5,632)(461.5,280) -- (461.5,632) ; \draw   (111.5,280) -- (462,280)(111.5,330) -- (462,330)(111.5,380) -- (462,380)(111.5,430) -- (462,430)(111.5,480) -- (462,480)(111.5,530) -- (462,530)(111.5,580) -- (462,580)(111.5,630) -- (462,630) ; \draw    ;
\draw [color={rgb, 255:red, 0; green, 0; blue, 0 }  ,draw opacity=1 ][line width=1.5]    (158.5,406.25) -- (335,406) ;
\draw [shift={(338,406)}, rotate = 179.92] [color={rgb, 255:red, 0; green, 0; blue, 0 }  ,draw opacity=1 ][line width=1.5]    (14.21,-4.28) .. controls (9.04,-1.82) and (4.3,-0.39) .. (0,0) .. controls (4.3,0.39) and (9.04,1.82) .. (14.21,4.28)   ;
\draw [color={rgb, 255:red, 0; green, 0; blue, 0 }  ,draw opacity=1 ][line width=1.5]    (387,572) -- (387,458.5) ;
\draw [shift={(387,455.5)}, rotate = 90] [color={rgb, 255:red, 0; green, 0; blue, 0 }  ,draw opacity=1 ][line width=1.5]    (14.21,-4.28) .. controls (9.04,-1.82) and (4.3,-0.39) .. (0,0) .. controls (4.3,0.39) and (9.04,1.82) .. (14.21,4.28)   ;
\draw  [color={rgb, 255:red, 0; green, 0; blue, 0 }  ,draw opacity=1 ][fill={rgb, 255:red, 255; green, 255; blue, 255 }  ,fill opacity=1 ] (121,405) .. controls (121,396.16) and (128.16,389) .. (137,389) .. controls (145.84,389) and (153,396.16) .. (153,405) .. controls (153,413.84) and (145.84,421) .. (137,421) .. controls (128.16,421) and (121,413.84) .. (121,405) -- cycle ;
\draw  [color={rgb, 255:red, 0; green, 0; blue, 0 }  ,draw opacity=1 ][fill={rgb, 255:red, 255; green, 255; blue, 255 }  ,fill opacity=1 ] (373,603) .. controls (373,594.16) and (380.16,587) .. (389,587) .. controls (397.84,587) and (405,594.16) .. (405,603) .. controls (405,611.84) and (397.84,619) .. (389,619) .. controls (380.16,619) and (373,611.84) .. (373,603) -- cycle ;

\draw (380,638.4) node [anchor=north west][inner sep=0.75pt]  [font=\LARGE]  {$i$};
\draw (82,387.4) node [anchor=north west][inner sep=0.75pt]  [font=\LARGE]  {$j$};
\draw (123,637.4) node [anchor=north west][inner sep=0.75pt]  [font=\Large]  {$1$};
\draw (83,590.4) node [anchor=north west][inner sep=0.75pt]  [font=\LARGE]  {$1$};
\draw (72,643.4) node [anchor=north west][inner sep=0.75pt]  [font=\LARGE]  {$T$};

\end{tikzpicture}

}   
\end{center}

\caption{$\twoLWS$. To compute $T[i,j]$, we take the minimum of all possible white circles (plus their respective tensor values $w$) such that their coordinates differ from the target by one coordinate.}
\label{fig: kLWS}
\end{figure}
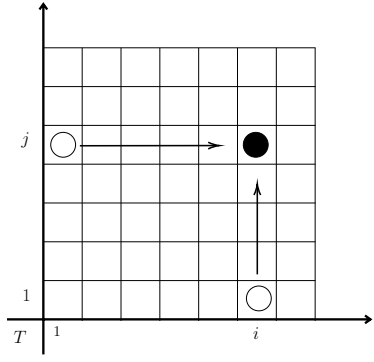

\paragraph{Static $\kLWS$}
Similar to \cite{KPS17}, we also define a notion of ``static" $\kLWS$ in which we are given some entries in the DP table, and we would like to compute new entries which depend only on the given entries. The main idea of $[\Static]\kLWS$ is that we have the information $(T[i_1,\ldots,i_k])$ for all $(i_1,\ldots,i_k)$ on a ``band" $D_{a,a+N}$ and we want to compute $T[i_1,\ldots,i_k]$ for all $(i_1,\ldots,i_k)$ on the next ``band" $D_{a+N,a+2N}$. A band $D_{\alpha,\beta}$ is defined to be all $(i_1,\ldots,i_k)$ such that their sum $i_1 + \cdots + i_k$ is in the interval $[\alpha,\beta)$.

\begin{definition}
$([\Static]\kLWS)$ Given intervals $D_{a,a+N},D_{a+N,a+2N}$ together with correctly computed values $T[i_1,\ldots,i_k]$ for all $1 \leq \ell \leq k$ and $(i_1,\ldots,i_k) \in D_{a,a+N}$, $[\Static]\kLWS$ asks to determine 
\begin{align*}
    T'\Big[j_1,\ldots,j_k\Big] = \min_{1\leq \ell \leq k}\Bigg\{&\displaystyle\min_{a-I_{\ell}\leq i_{\ell}<a+N-I_{\ell}}\Bigl\{T\Big[j_1,\ldots,j_{\ell-1},i_{\ell},j_{\ell+1},\ldots,j_k\Big]+w_{\ell}\Big[j_1,j_2,\ldots,j_k,i_{\ell}\Big]\Bigr\}\Bigg\}  
\end{align*} for all $(j_1,j_2,\ldots,j_k) \in D_{a+N,a+2N}$.
\end{definition}

For illustration purposes, consider the $[\Static] \twoLWS$ problem: given correctly computed values $T[i, j]$ for all $(i, j) \in D_{a, a+N}$, determine
\begin{align*}
    T'[i, j]
    &=
    \min
    \begin{cases}
        \displaystyle\min_{a - i \leq k < a + N - i} \{T[k, j] + w_1[i, j, k]\}
        \\
        \displaystyle\min_{a - j \leq k < a + N - j} \{T[i, k] + w_2[i, j, k]\}
    \end{cases}
\end{align*}
for all $(i, j) \in D_{a+N, a+2N}$. Figure \ref{fig: static kLWS} depicts the idea.

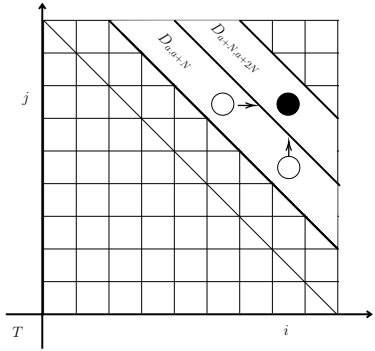
\begin{figure}[ht]
\begin{center}
\resizebox{5cm}{!}{

\tikzset{every picture/.style={line width=0.75pt}} 

\begin{tikzpicture}[x=0.75pt,y=0.75pt,yscale=-1,xscale=1]

\draw [line width=2.25]  (37.05,888.72) -- (596.05,888.72)(92.95,414.24) -- (92.95,941.44) (589.05,883.72) -- (596.05,888.72) -- (589.05,893.72) (87.95,421.24) -- (92.95,414.24) -- (97.95,421.24)  ;
\draw  [draw opacity=0][fill={rgb, 255:red, 255; green, 255; blue, 255 }  ,fill opacity=1 ] (94.95,438.72) -- (547,438.72) -- (547,887) -- (94.95,887) -- cycle ; \draw  [color={rgb, 255:red, 0; green, 0; blue, 0 }  ,draw opacity=1 ] (94.95,438.72) -- (94.95,887)(144.95,438.72) -- (144.95,887)(194.95,438.72) -- (194.95,887)(244.95,438.72) -- (244.95,887)(294.95,438.72) -- (294.95,887)(344.95,438.72) -- (344.95,887)(394.95,438.72) -- (394.95,887)(444.95,438.72) -- (444.95,887)(494.95,438.72) -- (494.95,887)(544.95,438.72) -- (544.95,887) ; \draw  [color={rgb, 255:red, 0; green, 0; blue, 0 }  ,draw opacity=1 ] (94.95,438.72) -- (547,438.72)(94.95,488.72) -- (547,488.72)(94.95,538.72) -- (547,538.72)(94.95,588.72) -- (547,588.72)(94.95,638.72) -- (547,638.72)(94.95,688.72) -- (547,688.72)(94.95,738.72) -- (547,738.72)(94.95,788.72) -- (547,788.72)(94.95,838.72) -- (547,838.72) ; \draw  [color={rgb, 255:red, 0; green, 0; blue, 0 }  ,draw opacity=1 ]  ;
\draw    (94.95,438.72) -- (544,889) ;
\draw  [draw opacity=0][fill={rgb, 255:red, 255; green, 255; blue, 255 }  ,fill opacity=1 ] (198.4,438.94) -- (294.95,438.72) -- (544.45,688.22) -- (544.23,784.77) -- cycle ;
\draw [line width=2.25]    (194.95,438.72) -- (544.62,788.38) ;
\draw  [draw opacity=0][fill={rgb, 255:red, 255; green, 255; blue, 255 }  ,fill opacity=1 ] (294.95,438.72) -- (393.73,438.73) -- (543.6,588.6) -- (543.62,687.38) -- cycle ;
\draw [line width=2.25]    (214.95,458.72) -- (544.62,788.38) ;
\draw [line width=2.25]    (294.95,438.72) -- (548.29,692.05) ;
\draw [line width=2.25]    (394.95,438.72) -- (544.95,588.72) ;
\draw  [fill={rgb, 255:red, 0; green, 0; blue, 0 }  ,fill opacity=1 ] (452,567) .. controls (452,557.61) and (459.61,550) .. (469,550) .. controls (478.39,550) and (486,557.61) .. (486,567) .. controls (486,576.39) and (478.39,584) .. (469,584) .. controls (459.61,584) and (452,576.39) .. (452,567) -- cycle ;
\draw  [fill={rgb, 255:red, 255; green, 255; blue, 255 }  ,fill opacity=1 ] (352,567) .. controls (352,557.61) and (359.61,550) .. (369,550) .. controls (378.39,550) and (386,557.61) .. (386,567) .. controls (386,576.39) and (378.39,584) .. (369,584) .. controls (359.61,584) and (352,576.39) .. (352,567) -- cycle ;
\draw  [fill={rgb, 255:red, 255; green, 255; blue, 255 }  ,fill opacity=1 ] (453,664) .. controls (453,654.61) and (460.61,647) .. (470,647) .. controls (479.39,647) and (487,654.61) .. (487,664) .. controls (487,673.39) and (479.39,681) .. (470,681) .. controls (460.61,681) and (453,673.39) .. (453,664) -- cycle ;
\draw [line width=1.5]    (392,569) -- (415,569) ;
\draw [shift={(418,569)}, rotate = 180] [color={rgb, 255:red, 0; green, 0; blue, 0 }  ][line width=1.5]    (14.21,-4.28) .. controls (9.04,-1.82) and (4.3,-0.39) .. (0,0) .. controls (4.3,0.39) and (9.04,1.82) .. (14.21,4.28)   ;
\draw [line width=1.5]    (470,647) -- (470,627) ;
\draw [shift={(470,624)}, rotate = 90] [color={rgb, 255:red, 0; green, 0; blue, 0 }  ][line width=1.5]    (14.21,-4.28) .. controls (9.04,-1.82) and (4.3,-0.39) .. (0,0) .. controls (4.3,0.39) and (9.04,1.82) .. (14.21,4.28)   ;

\draw (46,906.4) node [anchor=north west][inner sep=0.75pt]  [font=\LARGE]  {$T$};
\draw (280.43,452.7) node [anchor=north west][inner sep=0.75pt]  [font=\LARGE,rotate=-45]  {$D_{a,a+N}$};
\draw (360.96,436.54) node [anchor=north west][inner sep=0.75pt]  [font=\LARGE,rotate=-45]  {$D_{a+N,a+2N}$};
\draw (461,904.4) node [anchor=north west][inner sep=0.75pt]  [font=\LARGE]  {$i$};
\draw (61,546.4) node [anchor=north west][inner sep=0.75pt]  [font=\LARGE]  {$j$};

\end{tikzpicture}

}
\end{center}

\caption{$[\Static] \twoLWS$. To calculate $T'[i,j]$ (black circle), we take the minimum over all possible white circles (plus their respective weight values $w$) such that they share all but one coordinate with $T[i,j]$.}
\label{fig: static kLWS}
\end{figure}

\cite{KPS17} showed that in the $k=1$ case, $[\Static]\LWS$ is subquadratic-\emph{equivalent} to the original $\LWS$ problem. We will find that the relationships among the higher-dimensional versions are more complicated.

\subsection{Polygon Triangulation}

$\kLWS$ as we defined above captures many different high-dimensional DP problems, but it is not the only conceivable way to generalize $\LWS$. We consider here another example we call $\twoLWS^{\PT}$, which captures optimization over sets that are counted by the Catalan numbers.

Recall that the Catalan numbers $C_0, C_1, C_2, \ldots$ can be defined recursively by $C_0 = 1$ and \begin{align}\label{catrec}C_n = \sum_{k=1}^n C_{k-1} \cdot C_{n-k}.\end{align} $C_n$ counts many different combinatorial objects, such as the number of triangulations of a convex polygon with $n+2$ vertices, and the number of binary trees with $n+1$ leaves. (See, e.g.,~\cite{stanley2015catalan}.) The variable $k$ being summed over in \Cref{catrec} typically corresponds to ways to partition the combinatorial object into two smaller parts. This leads to our new definition, in which we want to instead \emph{minimize} over all ways to partition the object:

\begin{definition}[$\twoLWS^{\PT}$]
Given as input an $n \times n \times n$ tensor $w$, the $\twoLWS^{\PT}$ problem asks to compute the value of $T[n,n]$ given the dynamic programming recurrence relation:
\begin{align*}
    T[i,j] = 
    \begin{cases}
        0 & \textup{ if } j-i \leq 1 \\
        \displaystyle\min_{i<k<j}\Big\{T[i,k]+T[k,j]+w[i,j,k]\Big\}&\textup{otherwise.}
    \end{cases}
\end{align*}
\end{definition}

For instance, this captures the polygon triangulation problem defined above when $w[i,j,k] = w_i \cdot w_j \cdot w_k$, which is unsurprising as polygon triangulations are counted by the Catalan numbers; this inspires the name $\twoLWS^{\PT}$. We show in \Cref{sec:PT} below that this recurrence also captures other natural problems such as optimal binary search tree construction (optimizing over binary trees, which are counted by Catalan numbers) and matrix chain multiplication (optimizing over sequences of properly matched parentheses, again counted by Catalan numbers). Furthermore, in each of these examples, the rank (for polygon triangulation) or slice rank (for the other two examples) of $w$ is $1$.

\subsection{Main Results and Proof Structure Overview}

\paragraph{Reduction notation} Before stating our results, we introduce one piece of notation for denoting the results of fine-grained reductions between problems with different running times. For computational problems $P,Q$, we say that $P$ ``reduces" to $Q$, denoted by $P\rightarrow Q$ if a polynomial speedup for $Q$ yields a polynomial speedup for $P$. More precisely, suppose $P,Q$ are solved in time $T_p, T_q$, respectively, via the straightforward algorithms. We say that $P$ ``reduces" to $Q$, denoted by $P\rightarrow Q$ if for every 
$\varepsilon>0$ there exists a $\delta>0$ such that, given a $O(T_q^{1-\varepsilon})$ time algorithm for $Q$, one gets a $O(T_p^{1-\delta})$ time algorithm for $P$.

For example, $\SAT \rightarrow \MinIP_{n,c\log n}$ means that if there is an algorithm for $\MinIP_{n,c\log n}$ with running time $O(n^{2-\varepsilon})$ for some $\varepsilon>0$, then there is an algorithm for $\SAT$ with running time $O(2^{(1-\delta)n})$ for some $\delta>0$. $\threeMinIP \rightarrow \MinIP$ means that if there is an algorithm for $\MinIP$ with running time $O(n^{2-\varepsilon})$ for some $\varepsilon>0$, then there is an algorithm for $\threeMinIP$ with running time $O(n^{3-\delta})$ for some $\delta>0$. When it may be unclear, the formal statements of our results are stated in the theorems below.

\subsubsection{$\kLWS$ Hierarchy and Hardness.} We establish a hierarchy of $\kLWS$ problems and describe their connections to $\kMinIP$, summarized by the following diagram.
\begin{center}
\includegraphics[scale=0.15]{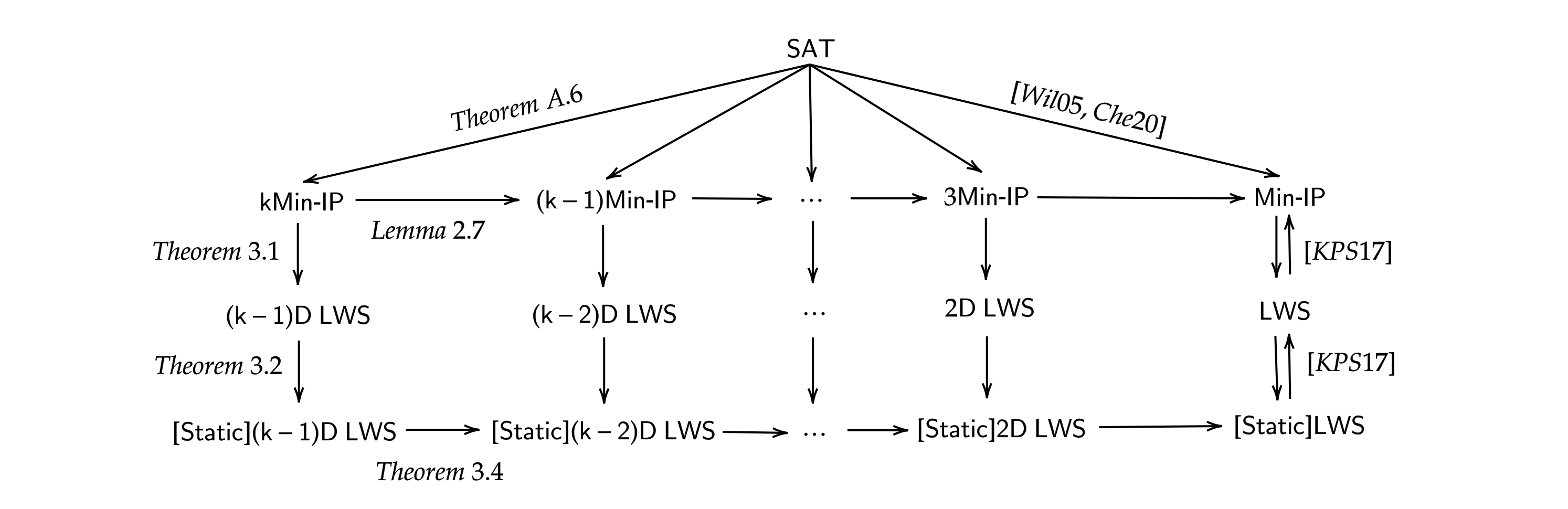}    
\end{center} 

In particular, assuming $\SETH$, it follows that the straightforward algorithms for all the problems in the diagram cannot be substantially improved. The results depicted in this diagram, more precisely stated, are described by the following four theorems.

Building on Chen's reduction~\cite{Chen18} from $\SAT$ to $\MinIP_{n,2^{O(\log^{*}n)}}$, we show that more generally, $\SAT$ also reduces to $\kMinIP$. (Note that $\kMinIP$ reduces to $\kminusoneMinIP$ in a straightforward way, but a reduction in the other direction is not known.)
\begin{theorem*}[\Cref{thm: SAT reduces to kMinIP}]
Assuming $\SETH$, there is no algorithm for $\kMinIP_{n,2^{O(\log^{*}n)}}$ running in time $O(n^{k-\varepsilon})$ for any $\varepsilon>0$.
\end{theorem*}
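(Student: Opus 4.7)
The plan is to establish the theorem by a chain of fine-grained reductions that generalizes Chen's \cite{Chen18} reduction $\SAT \to \MinIP_{n, 2^{O(\log^* n)}}$ from the $k=2$ case. Specifically, I would show
\[
    \kSAT \;\to\; \kOV_{n, c\log n} \;\to\; \kMinIP_{n, c\log n} \;\to\; \kMinIP_{n, 2^{O(\log^* n)}},
\]
where the first reduction comes from $\SETH$ via the standard split-and-list argument, the second is immediate, and the third is the key new technical step.

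For the first two steps, the split-and-list reduction from $\kSAT$ on $N$ variables partitions the variables into $k$ equal blocks and enumerates the $2^{N/k} = n$ partial assignments per block; each partial assignment is encoded as a $\{0,1\}^m$-vector whose $c$-th coordinate is $1$ iff the partial assignment fails to satisfy clause $c$, where $m = O(N) = O(\log n)$. A satisfying assignment corresponds exactly to a $k$-wise orthogonal tuple, so under $\SETH$ no $O(n^{k-\varepsilon})$ algorithm exists for $\kOV_{n, c\log n}$. The embedding into $\kMinIP$ is then trivial: treating the $\{0,1\}$-vectors as reals, the $k$-wise inner product $\sum_{j} \prod_{i=1}^{k} x_i[j]$ is a non-negative integer that equals zero iff the tuple is $k$-wise orthogonal, so the minimum $k$-wise inner product is zero iff the $\kOV$ instance is a yes-instance.

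The main technical work is in the third step, reducing the dimension from $c \log n$ down to $2^{O(\log^* n)}$. Following Chen, I would prove a one-step ``compression'' lemma that turns a $\kMinIP$ instance with $n$ vectors in dimension $d$ into an equivalent instance with $n \cdot 2^{d^{o(1)}}$ vectors in dimension $d^{o(1)}$, preserving at least whether the minimum $k$-wise IP is zero. Iterating this $O(\log^* n)$ times, starting from $d = c\log n$, drives the dimension down to $2^{O(\log^* n)}$ while blowing up $n$ by only $n^{o(1)}$ in total, so a hypothetical $O(n^{k-\varepsilon})$ algorithm for $\kMinIP_{n, 2^{O(\log^* n)}}$ would propagate back to an $O(n^{k-\Omega(\varepsilon)})$ algorithm for $\kOV_{n, c\log n}$, refuting $\SETH$.

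The hard part will be proving the one-step compression for general $k$. Chen's original argument is tailored to $k=2$ and uses a two-party probabilistic-polynomial / MA-style encoding of the inner-product check. To handle $k$ sets of vectors, I would try either (a) extending the encoding so that a single low-dimensional representation directly verifies the $k$-fold product $\prod_i x_i[j]$, taking care that the encoding preserves the minimum across all $k$-tuples simultaneously rather than merely pairwise, or (b) applying Chen's bichromatic compression iteratively to one of the $k$ vector sets at a time while holding the other $k-1$ fixed as ``parameters,'' and then arguing that the $k$-wise minimum IP structure is preserved across rounds. In either approach, the delicate points are keeping the per-step blowup in vector count at $2^{d^{o(1)}}$ so that $\log^*(n)$ iterations multiply $n$ by only $n^{o(1)}$, and verifying that the compression commutes with the $k$-ary minimum rather than just the binary one.
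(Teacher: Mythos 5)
Your chain of reductions is the right skeleton and matches the paper's plan: $\SETH$ gives hardness of $\kOV_{n,c\log n}$ by split-and-list, the embedding into $\kMinIP$ is trivial, and the real work is compressing dimension from $c\log n$ to $2^{O(\log^* n)}$. But you mischaracterize the compression mechanism, and that matters. It is \emph{not} a probabilistic-polynomial or MA-style communication encoding; those techniques appear in Chen's paper for \emph{approximate} Max-IP, not for the exact dimension reduction used here. The actual tool is a deterministic Chinese Remainder Representation ($\CRR$) encoding, and -- contrary to your worry that it is ``tailored to $k=2$'' -- it generalizes to $k$-fold products almost for free. The paper constructs $\psi_{b,\ell}:\{0,1\}^{b\ell}\to\mathbb{Z}^{\ell}$ and a set $V_{b,\ell}$ such that $\langle x_1,\ldots,x_k\rangle=0$ iff $\langle\psi_{b,\ell}(x_1),\ldots,\psi_{b,\ell}(x_k)\rangle\in V_{b,\ell}$. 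For small $b$, $\psi$ is literal $\CRR$ with $b$ primes $q_1,\ldots,q_b>\ell$: the $k$-ary inner product of the encodings, reduced mod $q_t$, is $\sum_{d}\prod_{i} x_i^{d}[t]\le\ell<q_t$, so it vanishes mod $q_t$ iff it is actually zero -- multiplicativity of $\CRR$ handles $k$-fold products exactly as it does pairwise ones. The only $k$-dependence is that in the recursive step the primes must lie in a range like $[b^{k}\ell,\,b^{2k}\ell^{2}]$ so they dominate the magnitude of $k$-fold products of encoded values, and the exponent $(3k)^{\log^* b}$ tracks this growth. The $\log^* n$ rounds you anticipate are folded into the recursive definition of $\psi$ (via the choice of $b_{\micro}$ with $\ell^{(3k)^{\log^* b_{\micro}}\cdot b_{\micro}}=b$), giving a single outer reduction from $\kOV_{n,c\log n}$ to $n^{o(1)}$ instances of $\mathbb{Z}$-$\kOV_{n,2^{O(\log^* n)}}$, which then embeds trivially into $\kMinIP$; doing the compression in the $\kOV$ world rather than inside $\kMinIP$ is also cleaner, since you only need to preserve the zero/nonzero distinction rather than the full minimum.

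Your alternative (a) -- directly extending the encoding to verify $\prod_i x_i[j]$ while preserving the $k$-ary structure -- is the right instinct and is exactly what the $\CRR$ construction delivers. Alternative (b), compressing one of the $k$ vector sets at a time while holding the other $k-1$ fixed as parameters, does not work: a compressed representative of $X_i$ must be consistent across all $n^{k-1}$ choices of the parameter vectors, so you would either pay an $n^{k-1}$ overhead (destroying the running-time budget) or need something equivalent to (a) anyway. Commit to (a) and flesh it out with the $\CRR$ argument rather than a communication-protocol encoding.
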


Just as $\MinIP$ reduces to $\LWS$, we show that $\kMinIP$ reduces to $\kminusoneLWS$.
\begin{theorem*}[\Cref{thm: kMinIP to k-1 D LWS}]
Suppose there exists an algorithm for $\kminusoneLWS$ with rank $d$ with running time $O(n^{k-\varepsilon})$ for some $\varepsilon>0$, then there exists an algorithm for $\kMinIP$ with rank $d$ with running time $O(n^{k-\delta})$ for some $\delta>0$. 
\end{theorem*}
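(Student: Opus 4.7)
The plan is to construct a $\kminusoneLWS$ instance on the grid $[N]^{k-1}$ with $N = 2n+1$ whose weights encode the given $\kMinIP$ vectors, so that the final DP value $T[N,\ldots,N]$ equals the minimum inner product plus a known additive offset. After a shift I may assume every $k$-way inner product $\langle x_{1,i_1},\ldots,x_{k,i_k}\rangle$ lies in $[0,W]$. I partition each coordinate axis into a ``selection'' region $[1,n+1]$ and a ``terminal'' region $[n+2,2n+1]$: a cell in the selection region encodes a partial choice of indices $(i_1,\ldots,i_{k-1})$ via its grid coordinates, and the first transition that pushes some coordinate from the selection region into the terminal region will carry a weight equal to an extended $k$-way inner product, with the remaining index $i_k$ read off from the offset $j_\ell - n - 1$ within the terminal region.

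To avoid explicit $\infty$ penalties on forbidden ``shortcut'' transitions (which would hurt the rank bound), I first extend each vector $x_{\ell, i} \in \R^d$ to $\tilde x_{\ell, i} \in \R^{d+1}$ by appending a coordinate equal to $1$ for every valid $i \in [n]$, and I introduce dummy vectors $\tilde x_{\ell, 0}$ for $\ell \in [k-1]$ that are $0$ in the first $d$ coordinates and $L := W+2$ in the last. With this extension the extended $k$-way inner product equals the original inner product plus $1$ when all arguments lie in $[n]$, and equals at least $L$ when any argument is a dummy; the second case corresponds exactly to a path that skipped part of the selection phase.

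Concretely, I set
\[
    w_\ell[j_1,\ldots,j_{k-1},i'_\ell]
    \;=\;
    \mathcal{I}(j_1,\ldots,j_{k-1},i'_\ell) \cdot \big\langle \tilde x_{1,j_1-1}, \ldots, \tilde x_{\ell,i'_\ell-1}, \ldots, \tilde x_{k-1,j_{k-1}-1}, \tilde x_{k, j_\ell - n - 1} \big\rangle,
\]
where $\mathcal{I} := \ind[j_\ell \in [n+2,2n+1]] \cdot \ind[i'_\ell \in [1,n+1]] \cdot \prod_{\ell' \neq \ell} \ind[j_{\ell'} \in [1, n+1]]$ is the indicator that this is a ``first entry into terminal'' step. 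I then show: (i) every cell in the selection region has $T = 0$, by induction on $\sum_\ell j_\ell$, since every incoming transition has $\mathcal{I} = 0$; (ii) every path from $(1,\ldots,1)$ to $(N,\ldots,N)$ contains exactly one transition with $\mathcal{I} = 1$, namely its first step that enters the terminal region in any coordinate---at that moment no other coordinate can yet be in terminal by monotonicity, so $\mathcal{I} = 1$, while any later ``entering terminal'' step in a different coordinate sees that first coordinate still in terminal, forcing $\mathcal{I} = 0$; and (iii) that unique $\mathcal{I}=1$ weight equals $\langle x_{1,i_1},\ldots,x_{k,i_k}\rangle + 1$ for valid $(i_1,\ldots,i_k)\in[n]^k$ and at least $L > W+1$ for forbidden shortcuts, so $T[N,\ldots,N] = \min_{(i_1,\ldots,i_k)\in[n]^k}\langle x_{1,i_1},\ldots,x_{k,i_k}\rangle + 1$.

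Finally, the rank of each $w_\ell$ is at most $d+1$: the extended inner product is a rank-$(d+1)$ tensor in $(j_1, \ldots, j_{k-1}, i'_\ell)$, and $\mathcal{I}$ is a product of $k$ single-variable indicators and therefore has rank $1$, so their product has rank $d+1$. Constructing this low-rank representation takes $O(nkd)$ time, after which the assumed $O(N^{k-\varepsilon}) = O(n^{k-\varepsilon})$ algorithm for $\kminusoneLWS$ (applied at rank $d+1$) yields an $O(n^{k-\delta})$ algorithm for $\kMinIP$ at rank $d$, completing the reduction. The hardest part will be the monotonicity argument in claim (ii), where I must rule out degenerate paths that never enter the terminal region (impossible since $(N,\ldots,N)$ is itself in terminal) and also rule out paths that pick up multiple $\mathcal{I}=1$ contributions (impossible once any single coordinate has committed to terminal).
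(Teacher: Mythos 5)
Your proposal is correct as a reduction, and the overall structure (pad the $(k-1)$-dimensional grid, keep the DP value at zero throughout a ``cheap'' region, and arrange for the minimum inner product to be picked up on the unique transition that crosses into the expensive region) matches the paper's strategy. Claims (i)--(iii) all hold: once a coordinate enters the terminal region, the indicator $\mathcal{I}$ kills every later cross-over, so exactly one transition per path is charged, and after bounding the dummy-vector contributions by $L > W+1$ the argument goes through.

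The execution, however, differs from the paper's in two respects worth noting. First, the paper works on a $[kn]^{k-1}$ grid and pads the vector families directly: for $\ell \in [k-1]$ it puts $y_{\ell,j}=0^d$ on $j\le (k-1)n$ and $y_{\ell,j}=x_{\ell,j\bmod n}$ above, and does the complementary thing for $y_{k,\cdot}$. This makes every weight $\langle y_{1,j_1},\ldots,y_{k-1,j_{k-1}},y_{k,i_\ell}\rangle$ vanish automatically whenever the target cell lies outside the corner block or the source index lies inside it---no indicator tensor, no extra coordinate, and the rank of the resulting $w_\ell$ is exactly $d$. By contrast, your reduction multiplies by an indicator tensor $\mathcal{I}$ and appends a coordinate (for the dummy penalty), so you produce a rank-$(d+1)$ instance of $\kminusoneLWS$ from a rank-$d$ $\kMinIP$ instance. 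That is a strictly weaker statement than the theorem as written, which preserves rank exactly. It does not break any of the paper's downstream corollaries---both the constant-rank algorithmic corollary and the $2^{O(\log^* n)}$-rank hardness corollary are robust to a $+1$ in the rank---but it is a deviation you should flag or repair. The cleanest fix is precisely the paper's move: encode the on/off logic into which $y$-vectors are zero rather than into a separate indicator factor, which removes both the indicator and the need for a dummy coordinate. Second, your grid has side length $2n+1$ versus the paper's $kn$; both are $O(n)$ for constant $k$ so this has no bearing on the asymptotics.

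Two smaller remarks. Your preliminary ``shift so that all inner products lie in $[0,W]$'' is not free if you implement it by appending a coordinate---that alone would cost another rank---but as you note it can be avoided entirely by taking $L$ larger than the maximum possible $|\langle \cdot\rangle|+1$ rather than by shifting, so this is a presentation issue, not a gap. Finally, your final answer is $\min\langle\cdot\rangle+1$ rather than $\min\langle\cdot\rangle$; this is of course fine (just subtract $1$), but should be stated when the reduction is invoked.
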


By a divide and conquer method similar to  \cite[Lemma 3.5]{KPS17}, we show that $\kLWS$ can be reduced to $[\Static]\kLWS$.

\begin{theorem*}[\Cref{thm: kLWS to Static kLWS}]
Suppose there exists an algorithm for $[\Static]\kLWS_{n,N,d}$ with running time $O(N^{2-\varepsilon}\cdot n^{k-1})$ for some $\varepsilon>0$, then there exists an algorithm for $\kLWS_{n,d}$ with running time $O(n^{k+1-\delta})$ for some $\delta>0$.    
\end{theorem*}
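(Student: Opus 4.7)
The plan is to mirror the divide-and-conquer proof of Lemma 3.5 in \cite{KPS17}, lifted from $1$D intervals to $k$D bands ordered by the coordinate sum $s = j_1 + \cdots + j_k$. The key structural fact is that the $\kLWS$ recurrence replaces exactly one coordinate $j_\ell$ by a strictly smaller $i_\ell$, so $T[j_1,\ldots,j_k]$ depends only on entries whose coordinate sum is strictly smaller. Consequently, if we process the DP table in increasing order of $s$, the only non-trivial inter-band work is combining a fully computed source band with a later target band of equal width, which is exactly what $[\Static]\kLWS$ solves.

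Concretely, I would define a recursive routine $\textsc{Solve}(a_0, a_1)$ satisfying the following invariant: upon entry, each entry $T[j_1, \ldots, j_k]$ with $s \in [a_0, a_1)$ has already been updated to reflect the minimum of $T[\ldots] + w_\ell[\ldots]$ over all transitions whose source entry has coordinate sum strictly less than $a_0$; upon return, $T$ is fully and correctly computed on $D_{a_0, a_1}$. The routine sets $m = (a_0+a_1)/2$, $N = m - a_0$, and proceeds as: first call $\textsc{Solve}(a_0, m)$; then invoke the assumed $[\Static]\kLWS_{n,N,d}$ algorithm (with $a := a_0$), which folds the now-complete contributions from $D_{a_0, m}$ into entries of $D_{m, a_1}$; then call $\textsc{Solve}(m, a_1)$. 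The base case $a_1 - a_0 = 1$ requires no work, since within a single-sum band all dependencies lie outside. The top-level call is $\textsc{Solve}(k, kn+1)$ after initializing $T[1,\ldots,1] = 0$ and every other entry to $+\infty$; the invariant holds vacuously because no coordinate sum lies below $k$. Correctness then follows by induction on the recursion depth: the static call exactly supplies the contributions from $D_{a_0, m}$, which together with the outer invariant account for all transitions originating in sums $< m$, which is precisely the precondition for $\textsc{Solve}(m, a_1)$.

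For the running time, let $\mathcal{T}(M)$ denote the time of $\textsc{Solve}$ on a band of width $M$. Then
\[
\mathcal{T}(M) \;\leq\; 2\,\mathcal{T}(M/2) \;+\; C \cdot M^{2-\varepsilon}\, n^{k-1}.
\]
Since $\varepsilon > 0$, the nonrecursive term $M^{2-\varepsilon}$ beats $M^{\log_2 2} = M$, so the geometric sum solves to $\mathcal{T}(M) = O(M^{2-\varepsilon} n^{k-1})$. At the top level $M = \Theta(n)$, yielding total time $O(n^{k+1-\varepsilon})$, a polynomial speedup over the standard $O(n^{k+1})$ DP for $\kLWS_{n,d}$; setting $\delta = \varepsilon$ completes the reduction.

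The main point to verify beyond the structural plan is that the original $\kLWS$ domain constraint $1 \leq i_\ell < j_\ell$ is respected inside the static step. The upper bound $i_\ell < a+N - I_\ell$ of the static range is automatically at most $j_\ell$, because $(j_1,\ldots,j_k) \in D_{a+N, a+2N}$ forces $I_\ell + j_\ell \geq a + N$; the lower bound can be clipped to $\max(1, a - I_\ell)$ without changing the asymptotics. Standard rounding handles the case when $a_1 - a_0$ is not a power of two. None of these are substantive obstacles: the real work is the invariant bookkeeping described above.
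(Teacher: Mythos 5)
Your proof is correct and follows essentially the same route as the paper: a divide-and-conquer over coordinate-sum bands with the $[\Static]\kLWS$ algorithm used to propagate contributions from the first half into the second, with the same runtime recurrence $\mathcal{T}(M) \leq 2\mathcal{T}(M/2) + O(M^{2-\varepsilon}n^{k-1})$. The only minor difference is that the paper tracks the partial values explicitly via an auxiliary array $t$ and bounds the recurrence with an extra $\log N$ factor (absorbed by a slightly smaller $\delta$), whereas you maintain the partially updated $T$ in place and observe that the geometric sum already yields $O(M^{2-\varepsilon}n^{k-1})$; both are correct.
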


In addition, we show that $[\Static]\kLWS$ also exhibits a hierarchy similar to $\kMinIP$.
\begin{theorem*}[\Cref{thm: static kLWS to static k-1 LWS}]
Suppose there exists an algorithm for $[\Static]\kminusoneLWS_{n,N,d}$ with running time $O(N^{2-\varepsilon}\cdot n^{k-2})$ for some $\varepsilon>0$, then there exists an algorithm for $[\Static]\kLWS_{n,N,d}$ with running time $O(N^{2-\delta}\cdot n^{k-1})$ for some $\delta>0$.    
\end{theorem*}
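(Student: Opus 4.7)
The plan is to reduce $[\Static]\kLWS$ to $O(n)$ independent instances of $[\Static]\kminusoneLWS$, obtained by fixing one output coordinate at a time. The key observation is that the recurrence defining $T'[j_1,\ldots,j_k]$ is a min over $\ell \in [k]$, and the inner term $A_\ell[j_1,\ldots,j_k] := \min_{i_\ell}\{T[j_1,\ldots,i_\ell,\ldots,j_k]+w_\ell[j_1,\ldots,j_k,i_\ell]\}$ depends on $j_{\ell'}$ only passively (as a parameter) whenever $\ell' \ne \ell$. So fixing $j_{\ell'}=v$ reduces the simultaneous computation of $A_\ell$ for all $\ell \ne \ell'$ (over the sub-band of output entries with $j_{\ell'}=v$) to a single $[\Static]\kminusoneLWS$ instance on the remaining $k-1$ coordinates.

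Concretely, for each $\ell' \in \{1,2\}$ and each $v \in [n]$, I would construct a $[\Static]\kminusoneLWS_{n,N,d}$ instance on the $k-1$ coordinates other than $\ell'$, with input band $D_{a-v,a+N-v}$, output band $D_{a+N-v,a+2N-v}$, input $T$-values inherited from the given $T$ restricted to $j_{\ell'}=v$, and weight tensors $\tilde w_m$ obtained by slicing each $w_m$ (for $m \ne \ell'$) at $j_{\ell'}=v$. Slicing preserves the rank bound $d$, and one checks that the sliced input band is exactly the restriction of the $k$-dimensional band $D_{a,a+N}$ to $j_{\ell'}=v$, and similarly for the output band. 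Running the hypothesized $O(N^{2-\varepsilon} n^{k-2})$-time algorithm on this sub-instance returns, for every $(j_1,\ldots,j_k)$ in the original output band with $j_{\ell'}=v$, the partial min $\min_{\ell \ne \ell'} A_\ell[j_1,\ldots,j_k]$.

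The $\ell'=1$ round (over all $v \in [n]$) covers $\ell \in \{2,\ldots,k\}$ and the $\ell'=2$ round covers $\ell \in \{1,3,\ldots,k\}$; taking the coordinate-wise min of the two outputs yields $T'[j_1,\ldots,j_k]$ for every entry of $D_{a+N,a+2N}$. This uses $2n$ sub-instances, each in $O(N^{2-\varepsilon}n^{k-2})$ time, plus an $O(Nn^{k-1})$ combining step, for a total of $O(N^{2-\delta} n^{k-1})$ with $\delta=\varepsilon$. The main care needed is the bookkeeping for the shifted bands and verifying that the rank bound survives slicing; I do not anticipate any conceptual obstacle beyond those routine checks.
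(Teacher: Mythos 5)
Your proof is correct and takes the same core approach as the paper: fix one output coordinate $j_{\ell'}=v$, observe that the remaining $k-1$ inner minima form a $[\Static]\kminusoneLWS$ instance on shifted bands $D_{a-v,a+N-v}$ and $D_{a+N-v,a+2N-v}$ with sliced (hence still rank-$\leq d$) weight tensors, and iterate over $v$. The one place you diverge is in how you recover the dropped term $A_{\ell'}$: the paper fixes only $j_1$ and handles the $\ell=1$ contribution with a separate one-dimensional $\LWS$ subroutine, then says to ``repeat on all $k$ coordinates,'' which leaves the bookkeeping and the $O(N^2 n)$ contribution to the runtime somewhat murky. You instead run the slicing trick twice, once with $\ell'=1$ and once with $\ell'=2$, and observe that $\min\bigl(\min_{\ell\neq 1}A_\ell,\ \min_{\ell\neq 2}A_\ell\bigr)=\min_{\ell\in[k]}A_\ell$ since $k\geq 2$. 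This removes the need for any auxiliary $\LWS$ call, keeps every subproblem a clean $[\Static]\kminusoneLWS$ instance, and gives a tighter and more transparent running-time bound of $2n\cdot O(N^{2-\varepsilon}n^{k-2})+O(Nn^{k-1})=O(N^{2-\varepsilon}n^{k-1})$, so you even get $\delta=\varepsilon$ rather than an unspecified smaller constant. The only routine items left implicit are the ones you already flag: that slicing preserves the rank-$d$ decomposition, that the shifted one-lower-dimensional bands are exactly the restrictions of the original bands to $j_{\ell'}=v$, and that out-of-range $v$ simply yield empty subinstances. These all check out.
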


As one consequence of these reductions, $[\Static]\kLWS,\kLWS$ all reduce to $\MinIP$, which has a truly sub-quadratic algorithm when the vector length is constant, leading to a polynomial speedup:

\begin{corollary*} 
For every constant $c>0$ there is an $\varepsilon>0$ such that $\kLWS, [\Static]\kLWS$ can be solved in time $O(n^{k+1-\varepsilon})$, $O(N^{2-\varepsilon}\cdot n^{k-1})$ respectively if the rank of the tensor $w$ is at most $c$.
\end{corollary*}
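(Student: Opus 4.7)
The proof would be a composition of three ingredients: a geometric base case for $[\Static]\LWS$, an inductive lift via \Cref{thm: static kLWS to static k-1 LWS}, and a final appeal to \Cref{thm: kLWS to Static kLWS}. I would prove the $[\Static]\kLWS$ bound first and then transfer it to $\kLWS$.

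For the base case ($k=1$), I would solve $[\Static]\LWS$ with a rank-$c$ tensor $w[i,j] = \sum_{\ell=1}^{c} u_\ell[i]\,v_\ell[j]$ in truly subquadratic time. Augmenting each factor by one coordinate, set $\tilde u(i) = (u_1[i],\ldots,u_c[i],T[i]) \in \R^{c+1}$ and $\tilde v(j) = (v_1[j],\ldots,v_c[j],1) \in \R^{c+1}$; then the static recurrence $T'[j] = \min_{a \leq i < a+N}\{T[i]+w[i,j]\}$ is exactly $T'[j] = \min_i \langle \tilde u(i), \tilde v(j)\rangle$, i.e., a batch of $N$ min-inner-product queries in dimension $c+1$. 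Feeding these into the classical constant-dimensional offline nearest-neighbor data structures of \cite{Yao82,Mat92,AESW90} resolves all $N$ queries in total time $O(N^{2-1/\Theta(c)})$, hence $O(N^{2-\varepsilon_0})$ for some $\varepsilon_0 = \varepsilon_0(c)>0$.

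Next I would feed this base case into \Cref{thm: static kLWS to static k-1 LWS} and iterate $k-1$ times, obtaining an algorithm for $[\Static]\kLWS_{n,N,c}$ of running time $O(N^{2-\varepsilon_k}\cdot n^{k-1})$ for some $\varepsilon_k = \varepsilon_k(c,k)>0$; each iteration shrinks the exponent saving but keeps it strictly positive. A final application of \Cref{thm: kLWS to Static kLWS} then upgrades this to an $O(n^{k+1-\delta})$ algorithm for $\kLWS$ with rank at most $c$, for some $\delta = \delta(c,k)>0$, establishing both bounds claimed in the corollary.

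The only nontrivial step is the base case: one must check that a rank-$c$ decomposition of $w$ lets us embed $[\Static]\LWS$ into constant-dimensional $\MinIP$ without the $T[i]$ values blowing up the ambient dimension. The standard trick of absorbing $T[i]$ as an extra coordinate inflates the dimension by only $1$, so the exponent of the geometric data structure stays bounded away from $2$; the remaining steps are pure composition of theorems already proved in the paper.
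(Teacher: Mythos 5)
Your proposal follows the same route the paper uses: chain Theorem~\ref{thm: static kLWS to static k-1 LWS} down from $[\Static]\kLWS$ to $[\Static]\LWS$, handle the base case by reducing to constant-dimensional $\MinIP$ queries, and lift to $\kLWS$ via Theorem~\ref{thm: kLWS to Static kLWS}. The paper leaves the base case implicit (citing \cite{KPS17} and \cite{Williams18,Yao82,Mat92,AESW90} for $[\Static]\LWS \to \MinIP$ and the sub-quadratic constant-dimension $\MinIP$ algorithm), whereas you spell out the embedding $\tilde u(i) = (u_1[i],\ldots,u_c[i],T[i])$, $\tilde v(j)=(v_1[j],\ldots,v_c[j],1)$ that absorbs the DP value as a $(c{+}1)$-st coordinate; that is exactly the K\"unnemann--Paturi--Schneider trick and is correct, including the observation that the batch ``min inner product per query'' version is what the geometric data structures provide. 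The argument is sound and essentially matches the paper's intended derivation.
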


For one application of this corollary, we show in \Cref{sec: airplane refueling} that the generalized airplane refueling problem \cite{HL87} in higher dimensions can be solved polynomially faster than the straightforward DP formulation.

Our reductions from $\SAT$ also give hardness of $\kLWS$ assuming $\SETH$, showing that the fast algorithm cannot extend much beyond constant rank:

\begin{corollary*}
Under $\SETH$, for any $k>1$ and $\varepsilon>0$, there is no algorithm running in time $O(n^{k+1-\varepsilon})$ for $\kLWS$ when the weight tensor has rank $2^{O(\log^{*}n)}$.
\end{corollary*}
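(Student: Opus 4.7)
The plan is to derive the corollary by directly chaining \Cref{thm: SAT reduces to kMinIP} and \Cref{thm: kMinIP to k-1 D LWS}, applying each with the index shifted by one. Concretely, I would first apply \Cref{thm: kMinIP to k-1 D LWS} with its parameter ``$k$'' replaced by $k+1$, so that the roles of $\kminusoneLWS$ and $\kMinIP$ become $\kLWS$ and $(k+1)\MinIP$, respectively. This says: if there is an algorithm solving $\kLWS$ with rank-$d$ weight tensors in time $O(n^{k+1-\varepsilon})$, then there is an algorithm for $(k+1)\MinIP$ on vectors of dimension $d$ running in time $O(n^{k+1-\delta})$ for some $\delta > 0$.

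Next I would invoke \Cref{thm: SAT reduces to kMinIP} with the same index shift, instantiated at $k+1$ and $d = 2^{O(\log^{*} n)}$: under $\SETH$, there is no algorithm for $(k+1)\MinIP_{n,2^{O(\log^{*} n)}}$ running in time $O(n^{k+1-\delta})$ for any $\delta > 0$. Composing the two, the existence of an $O(n^{k+1-\varepsilon})$-time algorithm for $\kLWS$ with weight tensors of rank $2^{O(\log^{*} n)}$ would yield an $O(n^{k+1-\delta})$-time algorithm for $(k+1)\MinIP_{n,2^{O(\log^{*} n)}}$, contradicting $\SETH$. This establishes the corollary.

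There is essentially no new mathematical content here; the argument is a two-step chain of already-proved reductions, so the main thing to check is that the parameters line up correctly under the index shift. In particular, one verifies that the rank/dimension parameter $d$ is passed through \Cref{thm: kMinIP to k-1 D LWS} without blow-up (so that a rank-$d$ $\kLWS$ algorithm really does yield a $(k+1)\MinIP$ algorithm on dimension-$d$ vectors), and that $d$ is allowed to be as large as $2^{O(\log^{*} n)}$ in the hardness theorem. Both points are immediate from the statements, so this book-keeping presents no obstacle, and the corollary follows with $k+1$ substituted in place of $k$ in the hierarchy diagram.
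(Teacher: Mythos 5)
Your proposal is correct and is essentially the argument the paper intends: the corollary follows by chaining \Cref{thm: SAT reduces to kMinIP} and \Cref{thm: kMinIP to k-1 D LWS}, each instantiated with the index shifted from $k$ to $k+1$, so that SETH-hardness of $(k{+}1)\MinIP_{n,2^{O(\log^{*}n)}}$ transfers to $\kLWS$ with rank $2^{O(\log^{*}n)}$ tensors. Your bookkeeping of the rank/dimension parameter $d$ through the reduction is also the relevant (and only) thing to check, and it goes through as you say.
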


Since for rank $r$, the input size is $nr$, one could have imagined a better running time than $O(n^{k+1})$ for any $r \ll n^k$; our lower bound shows that it is (conditionally) impossible even for the slightly super-constant $r = 2^{\Omega(\log^{*}n)}$.

\subsubsection{Slice Rank in $\kLWS$.} Slice rank is a another way to define tensor rank. If a tensor with rank $d$, then it trivially has slice rank at most $d$, which makes $\twoLWS$ with slice rank more powerful. Indeed, we show that $\twoLWS$ with slice rank becomes hard very quickly. Interestingly, this new hardness result builds on the $\APSP$ conjecture, rather than $\SETH$; the $\APSP$ conjecture has not previously been connected to $\LWS$ problems to our knowledge.

\begin{theorem*}[\Cref{thm: APSP to twoLWS}]
Assuming the $\APSP$ conjecture, there is no truly sub-cubic algorithm for $\twoLWS$ or $[\Static]\twoLWS$ with slice rank $3$.
\end{theorem*}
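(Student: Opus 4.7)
The plan is to reduce $\NegativeTriangle$ (which is $\APSP$-equivalent by Williams--Williams) to $[\Static]\twoLWS$ with slice rank at most $3$, and then transfer hardness to $\twoLWS$ by embedding the static construction into a non-static DP with a trivial preamble. Recall $\NegativeTriangle$ asks, given $n\times n$ integer matrices $A,B,C$, whether $\min_{i,j,k}(A[i,j]+B[j,k]+C[k,i])<0$. The observation driving the slice-rank bound is that the $3$-tensor $w[I,J,K]=A[I,J]+B[J,K]+C[K,I]$ is the sum of three slice-rank-$1$ tensors---one independent of each coordinate---so $w$ has slice rank at most $3$. Implanting this $w$ as the weight $w_2$ of the $[\Static]\twoLWS$ recurrence will therefore automatically meet the slice-rank bound.

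I would fix parameters $N=\Theta(n)$, $a=\Theta(n)$, and integer shifts $\alpha,\beta,\gamma$ so that $(i+\alpha,k+\beta)$ for $(i,k)\in[n]^2$ lies in the input band $D_{a,a+N}$, $(i+\alpha,j+\gamma)$ for $(i,j)\in[n]^2$ lies in the output band $D_{a+N,a+2N}$, and the $K$-range in the static recurrence for each output cell covers $\{1+\beta,\dots,n+\beta\}$. A short arithmetic check shows, for example, that $a=2$, $N=3n$, $\alpha=n$, $\beta=0$, $\gamma=2n$ works. I would then set the input values $T[i+\alpha,k+\beta]:=0$ for $(i,k)\in[n]^2$ and $T[I,K]:=M$ for every other input-band entry, where $M\gg 3nW$ is a large constant; set $w_1\equiv M$ (slice rank $1$) so branch~$1$ never attains the minimum; and set
\[
    w_2[I,J,K] \;=\; A[I-\alpha,\,J-\gamma] \;+\; B[J-\gamma,\,K-\beta] \;+\; C[K-\beta,\,I-\alpha],
\]
extending $A,B,C$ by $0$ outside $[n]^2$ so $w_2$ is defined over the full range.

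By construction, the branch-$2$ minimum gives $T'[i+\alpha,j+\gamma]=A[i,j]+\min_{k\in[n]}(B[j,k]+C[k,i])$ for every $(i,j)\in[n]^2$: $K$-values outside $\{\beta+1,\dots,\beta+n\}$ are blocked by the penalty $T[\cdot,K]=M$ and never win the minimum. A final $O(n^2)$-time sweep over the output band returns $\min_{i,j,k}(A[i,j]+B[j,k]+C[k,i])$, resolving $\NegativeTriangle$. The instance has $n,N=\Theta(n)$, so its straightforward runtime is $\Theta(nN^2)=\Theta(n^3)$; a truly sub-cubic algorithm for $[\Static]\twoLWS$ with slice rank $3$ would thus yield an $O(n^{3-\eps})$ algorithm for $\NegativeTriangle$, contradicting the $\APSP$ conjecture. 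Both weight tensors have slice rank $\le 3$: $w_1$ is constant (slice rank $1$), and $w_2$ is the sum of three slice-rank-$1$ tensors as explained above.

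For $\twoLWS$, I would precede the above construction with a preamble region on which $w_1\equiv w_2\equiv 0$; starting from $T[1,1]=0$, the recurrence forces $T\equiv 0$ throughout the preamble, reproducing the static input values. The combined weight tensor is then a piecewise combination of preamble zeros and main-region weights, and because the ``main region'' of output cells is a product of an $I$-interval and a $J$-interval, its indicator factors as $\chi_I[I]\chi_J[J]$; this lets each of the three slice-rank-$1$ summands survive the masking and keeps the overall slice rank at $3$. The main obstacle I expect is this final step: in the non-static DP, the branch-$2$ minimum for a main-region cell $T[I,J]$ can also pull in already-computed main-region entries $T[I,k]$ (with $2n+1\le k<J$), and these must be penalized---ideally by folding a $k$-dependent mask into the $B$-slice $B[J-\gamma,K-\beta]$, which keeps the three-term decomposition, and hence slice rank $3$, intact.
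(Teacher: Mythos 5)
Your static half is sound: with $a=2$, $N=3n$, output cells $(i+n,\,j+2n)$ and input cells $(i+n,\,k)$ for $(i,k)\in[n]^2$, the branch-$2$ range $a-I\le K<a+N-I$ covers $K\in[1,n]$ for every $i\in[n]$, and the $M$-penalty kills both the extraneous input-band cells and branch $1$. Going through $[\Static]\twoLWS$ is also a genuinely different route from the paper, which reduces $\NegativeTriangle$ directly to $\twoLWS$ itself.

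The non-static half, which is what \Cref{thm: APSP to twoLWS} actually asserts, has a real gap that the ``fold a $K$-mask into the $B$-slice'' idea does not close, for two separate reasons. First, $w_1\equiv M$ no longer keeps the preamble at zero once the $T$-values are determined by the recurrence rather than set by hand: on the first column branch $2$ is vacuous, so $T[2,1]=T[1,1]+w_1[2,1,1]=M$ and the pollution propagates. The paper sidesteps this by taking $w_1=w_2=\alpha$ with $\alpha\equiv 0$ on the preamble. Second, and more fundamentally, your mask points in the wrong direction. In $\twoLWS$ only $T[n,n]$ is returned---there is no final $O(n^2)$ sweep---so the table itself must accumulate the $3$-dimensional minimum as $(i,j)$ grows. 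That requires main-region entries $T[i,k]$ and $T[k,j]$ to be pulled in with \emph{weight zero}, not penalized away. The paper achieves this by putting the $k$-mask $g_2(k)=\mathbbm{1}[k\le n]$ on the slice that depends only on $(i,j)$, so $\alpha[i,j,k]=0$ whenever $i,j$ are active and $k$ lies in the main range; the induction then shows $T[i,j]$ is the running minimum over $[1,i-n]\times[1,j-n]\times[n]$. Your $w_2$ instead evaluates to $A[i,j]$ for all $K>n$ (the $B$- and $C$-slices vanish under the zero-extension, but $A$ does not), and a $B$-mask cannot remove that stray $A$ term: the mask must multiply the $A$-slice. If you drop the $M$-penalty, set $w_1=w_2$, work on a $[1,2n]^2$ grid, and put $\mathbbm{1}[k\le n]$ on the $A$-slice, you essentially recover the paper's construction.

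Finally, hardness of $[\Static]\twoLWS$ follows from hardness of $\twoLWS$ via \Cref{thm: kLWS to Static kLWS}, which preserves slice rank; so your static reduction, while correct, proves the half that comes for free once the $\twoLWS$ reduction is in place.
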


However, we can design a truly sub-cubic algorithm for $\twoLWS$ with slice rank $1$.

\begin{theorem*}[\Cref{thm: twoLWS slice rank 1 is truly sub-cubic},\Cref{thm: static twoLWS with slice rank 1 is truly sub-cubic}]
There are truly sub-cubic time algorithms for $\twoLWS$ and $[\Static]\twoLWS$ with slice rank 1.
\end{theorem*}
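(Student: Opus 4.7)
The plan is to apply the divide-and-conquer of \Cref{thm: kLWS to Static kLWS}; since the sub-tensors arising in each recursive step inherit the slice-rank-$1$ structure of the originals, it suffices to design an $O(N^{2-\varepsilon}\,n)$ algorithm for $[\Static]\twoLWS$ under the slice-rank-$1$ promise. A slice-rank-$1$ tensor $w_\ell[i,j,k]$ decomposes as a product of a vector depending on a single \emph{pulled-out} index $s_\ell\in\{i,j,k\}$ and a matrix in the other two, so I would case-split on $(s_1,s_2)$.

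In the \emph{easy} cases $s_1\in\{i,k\}$ (and symmetrically $s_2\in\{j,k\}$), the inner minimization reduces to a lower-envelope query. For example, when $w_1[i,j,k]=a[i]\cdot b[j,k]$, then $\min_k\{T[k,j]+a[i]\cdot b[j,k]\}$ is, for each fixed $j$, the value at $x=a[i]$ of the lower envelope of the $N$ lines $\ell_k(x)=T[k,j]+x\cdot b[j,k]$. The convex-hull trick builds each envelope in $O(N\log N)$ and answers each query in $O(\log N)$, for $O(nN\log N)$ total, which is truly sub-cubic. The $s_\ell=k$ subcases and the symmetric cases for $w_2$ are handled identically; the case $s_1=k$ corresponds to pulling $b[i,j]$ outside the min.

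The main obstacle is the \emph{hard} case $s_1=j$ (and symmetrically $s_2=i$), where $w_1[i,j,k]=a[j]\cdot b[i,k]$ gives $\min_k\{T[k,j]+a[j]\cdot b[i,k]\}$ with a $k$-dependent slope $b[i,k]$ and an intercept $T[k,j]$ both involving variables outside the min, so no direct lower-envelope structure emerges. My approach is to split the $j$'s by $\sign(a[j])$, rescale $\tilde T[k,j]:=T[k,j]/a[j]$, and recast the computation as a banded $(\min,+)$-matrix product $D[i,j]=\min_k\{b[i,k]+\tilde T[k,j]\}$ in which $\tilde T$ has only $O(nN)$ finite entries. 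Because the weights lie in $\{-W,\ldots,W\}$ and the DP values stay polynomially bounded in magnitude, I would invoke a bounded-entry $(\min,+)$-product algorithm in the Alon--Galil--Margalit / Zwick tradition, leveraging the $n\times N$ rectangular band shape to save a polynomial factor over the naive $O(nN^2)$. This is the hardest step and requires careful parameter tuning so that a sub-cubic bound holds uniformly for $N$ up to $n$. Combining the easy and hard cases gives the required $O(N^{2-\varepsilon}\,n)$ algorithm for $[\Static]\twoLWS$, and \Cref{thm: kLWS to Static kLWS} lifts it to $\twoLWS$.
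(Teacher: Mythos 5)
Your treatment of the easy cases is sound and matches the paper's approach. Fixing $j$ in the $w_1$-minimization (resp.\ $i$ in the $w_2$-minimization) gives a one-dimensional table of length $O(N)$ together with a weight matrix that is rank $1$ whenever the pulled-out index is the one being fixed or the one being minimized over, and the convex-hull/lower-envelope query you describe is exactly the rank-$1$ $[\Static]\LWS$ solver from K\"unnemann--Paturi--Schneider that the paper invokes in its case analysis. You also correctly isolate where this strategy breaks: when the pulled-out index of $w_1$ is the one the sub-problem already fixes ($s_1=j$, so $w_1[i,j,k]=a[j]\cdot b[i,k]$), the residual weight matrix $b[i,k]$ is unstructured, and the computation degenerates into a $(\min,+)$-type matrix product with no low-rank handle.

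The proposed repair for that hard case, however, does not work. The weights live in $\{-W,\ldots,W\}$ with $W$ only required to fit in $O(1)$ machine words, so $W$ can be polynomial in $n$, and the DP values (and a fortiori the rescaled $T[k,j]/a[j]$, which need not even be integral) are of comparable or larger magnitude. The Alon--Galil--Margalit/Zwick family of $(\min,+)$ algorithms runs in $\tilde O(M n^{\omega})$ where $M$ bounds the \emph{entry magnitudes}, which is truly sub-cubic only when $M \le n^{3-\omega-\Omega(1)}$; that fails here, and the $n\times N$ rectangular band structure does not by itself reduce $M$. So the hard case remains open in your write-up. It is worth noting that this is also the most delicate step of the paper's own argument: in the proof of the static theorem the paper lists the same three sub-cases and, for the analogue of your hard case, says to ``fix $j$ and run $[\Static]\LWS$,'' which makes the weight matrix rank $1$ but leaves a genuinely two-dimensional table $T[i,k]$ rather than the one-dimensional table the $[\Static]\LWS$ primitive expects. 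Your diagnosis that ``no direct lower-envelope structure emerges'' is accurate; the bounded-entry $(\min,+)$ route you pick to close the gap is not a valid substitute for whatever reduction the paper intends there, and a correct proof needs a different idea for this sub-case.
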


We then use this to design faster algorithms for natural dynamic programming problems. For instance, we generalize the nested boxes problem defined in \cite{KPS17} to a multiple nested boxes problem and show that it can be formulated as $\kLWS$ with slice rank $1$, and thus it can be solved in time $O(n^{k+1-\varepsilon})$ for some $\varepsilon>0$.
\begin{theorem*}[\Cref{thm: multiple nested boxes}]
Multiple nested boxes with dimension $k$ can be solved in time $O(n^{k+1-\varepsilon})$ for some $\varepsilon>0$.
\end{theorem*}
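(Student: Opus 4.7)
The plan is to cast the multiple nested boxes problem with dimension~$k$ as an instance of $\kLWS$ whose weight tensors each have slice rank~$1$, and then to invoke the chain of reductions established earlier in the paper. First I would set up the DP: sort the $n$ boxes along one fixed coordinate so that any valid chain is forced to respect that ordering, and let $T[j_1,\ldots,j_k]$ denote the length of the longest chain of properly nested boxes ending at the box identified by the remaining coordinate tuple $(j_1,\ldots,j_k)$. A DP transition then corresponds to prepending a single box while decreasing one of the $k$ free coordinates, which is exactly the $\kLWS$ recurrence. The weight tensor $w_{\ell}[j_1,\ldots,j_k,i_{\ell}]$ should take value $-1$ when the candidate predecessor genuinely nests inside the current box and $\infty$ otherwise; since nesting in this formulation is controlled by a comparison along the single coordinate being decreased (with the remaining coordinates serving only as an index for the current box), each $w_{\ell}$ decomposes in the form $a_{\ell}[i_{\ell}]\cdot b_{\ell}[j_1,\ldots,j_k]$ along the $\ell$-axis, giving it slice rank at most~$1$.

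With the slice-rank-$1$ formulation in hand, I would chain together the reductions already proved. By \Cref{thm: kLWS to Static kLWS} it suffices to solve $[\Static]\kLWS$ with the same slice-rank-$1$ weights in time $O(N^{2-\varepsilon}\cdot n^{k-1})$. I would then iterate \Cref{thm: static kLWS to static k-1 LWS} a total of $k-2$ times, reducing the problem to $[\Static]\twoLWS$. Because each of these reductions only sub-selects and re-indexes entries of $w$, and sub-tensors of a slice-rank-$1$ tensor have slice rank at most~$1$, the slice-rank structure is preserved at every stage. Finally, \Cref{thm: static twoLWS with slice rank 1 is truly sub-cubic} provides a truly sub-cubic-time algorithm for $[\Static]\twoLWS$ with slice rank~$1$, and unwinding the reductions yields the advertised $O(n^{k+1-\varepsilon})$ bound for some $\varepsilon>0$.

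The principal obstacle is the first step: verifying that the ``multiple'' nested boxes problem genuinely admits a slice-rank-$1$ encoding. The precise combinatorial meaning chosen for the multiple nesting constraint dictates how the validity test factors across coordinates, and one must be careful that the per-transition check decomposes in the single-axis form above, rather than requiring a joint comparison across several coordinates simultaneously (which would inflate the slice rank). Once the encoding is nailed down, the remainder of the argument is routine: confirming that the reductions of \Cref{thm: kLWS to Static kLWS} and \Cref{thm: static kLWS to static k-1 LWS} preserve the slice-rank-$1$ property (they do, since sub-tensors of slice-rank-$1$ tensors have slice rank at most~$1$), and composing the $\varepsilon$ exponents through the $k-2$ iterations to extract a single $\delta>0$ in the final running-time bound.
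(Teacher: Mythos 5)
Your overall plan — cast the problem as $\kLWS$ with slice-rank-$1$ weights and chain the reductions of \Cref{thm: kLWS to Static kLWS}, \Cref{thm: static kLWS to static k-1 LWS}, and \Cref{thm: static twoLWS with slice rank 1 is truly sub-cubic} — is the route the paper takes, but the specific slice-rank factorization you give does not hold, and it is exactly the obstacle you flagged. You claim $w_{\ell}[j_1,\ldots,j_k,i_{\ell}]=a_{\ell}[i_{\ell}]\cdot b_{\ell}[j_1,\ldots,j_k]$, slicing along the predecessor coordinate $i_\ell$. But $w_\ell$ is the indicator of whether $B_{i_\ell}$ nests inside $B_{j_\ell}$, a genuine two-variable function of $(i_\ell,j_\ell)$: it is essentially the $0/1$ adjacency matrix of a partial order on boxes and generically has high matrix rank, so the dependence on $i_\ell$ cannot be pulled out into a standalone vector factor while leaving $j_\ell$ inside $b_\ell$. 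What actually gives slice rank~$1$ for $k\geq 2$, and what the paper observes, is that $w_\ell$ is \emph{constant} in every coordinate other than $j_\ell$ and the last one; one can therefore peel off an all-ones vector along any $j_m$ with $m\neq\ell$. The paper pushes this further: since $w_\ell$ is a 2D matrix in disguise, it applies the coordinate-fixing argument of \Cref{thm: static twoLWS with slice rank 1 is truly sub-cubic} to reduce directly to $[\Static]\kminusoneLWS$ with \emph{rank}~$1$, rather than iterating \Cref{thm: static kLWS to static k-1 LWS} down to $[\Static]\twoLWS$ as you propose (either route works once the structural observation is corrected).

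A secondary discrepancy is the weight for non-nesting transitions: the paper uses $0$, not $\infty$. This matters because the paper's correctness invariant reads $T[i_1,\ldots,i_k]$ as ``the outer box of pile $t$ has rank \emph{at most} $i_t$,'' so increasing a coordinate without adding a box must be a legal cost-$0$ relaxation of the bound. Your $\infty$ encoding forces the stricter reading ``outer box has rank exactly $i_t$,'' under which the base case $T[1,\ldots,1]=0$ and the final readout at the top corner are no longer the right quantities without introducing a dummy box that contains everything; you would need to redo the correctness argument accordingly. (Sorting by a single side length versus by volume is immaterial; both are linear extensions of the nesting partial order.)
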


We also show how to give a polynomial speedup for the airplane refueling problem in dimension $k$ if the cost only depends on where the airplane lands, since that would mean the tensor has slice rank $1$. We discuss the details in \Cref{sec: airplane refueling}.

\subsubsection{Hardness of Polygon Triangulation.} 

We show similar algorithms and hardness for $\twoLWS^{\PT}$.

\begin{corollary*}[\Cref{cor: PT hardness rank}]
Under $\SETH$, there is no truly sub-cubic algorithm for $\twoLWS^{\PT}$ with weight function whose rank is $2^{O(\log^{*}n)}$ or above.
\end{corollary*}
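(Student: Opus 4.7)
The plan is to derive this corollary by combining two ingredients: the existing hardness of $\threeMinIP$ (the $k=3$ specialization of the earlier theorem giving $\SAT \to \kMinIP_{n, 2^{O(\log^{*}n)}}$) together with a new rank-preserving reduction $\threeMinIP \to \twoLWS^{\PT}$. Once these are in place, a truly sub-cubic algorithm for $\twoLWS^{\PT}$ at rank $2^{O(\log^{*}n)}$ would yield one for $\threeMinIP_{n, 2^{O(\log^{*}n)}}$, refuting $\SETH$. The ``or above'' clause follows because any higher-rank instance can simulate a lower-rank one by padding coordinates with zeros.

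For the key reduction, given three sets $A = \{a_1,\ldots,a_n\}$, $B = \{b_1,\ldots,b_n\}$, $C = \{c_1,\ldots,c_n\}$ of vectors in $\R^d$, I would form a convex polygon on $3n$ vertices partitioned into three consecutive arcs $V_A = \{1,\ldots,n\}$, $V_B = \{n+1,\ldots,2n\}$, $V_C = \{2n+1,\ldots,3n\}$. Define extended vectors $\tilde a, \tilde b, \tilde c : [3n] \to \R^d$ by $\tilde a[i] = a_i \cdot \mathbbm{1}[i \in V_A]$ and analogously for $\tilde b, \tilde c$, and set
\[
w[i,j,k] = \sum_{\ell = 1}^d \tilde a[i][\ell] \cdot \tilde b[j][\ell] \cdot \tilde c[k][\ell].
\]
By construction $w$ is a sum of $d$ rank-$1$ tensors, so its tensor rank is at most $d$; the rank blow-up from the reduction is therefore trivial.

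Correctness rests on the combinatorial fact that every triangulation of a convex polygon whose vertex set is partitioned into three consecutive arcs contains \emph{exactly one} triangle with one vertex in each arc (the ``central'' triangle). I would prove this by induction on $N$: any diagonal splits the polygon into two sub-polygons, and by case analysis on which arcs the diagonal's endpoints lie in, exactly one sub-polygon inherits vertices from all three arcs while the other inherits from at most two (so contributes a ``central triangle count'' of zero by a second induction). Since every non-central triangle has vertices in at most two arcs and hence weight zero under our construction (some factor $\tilde a, \tilde b,$ or $\tilde c$ vanishes), the weight of any triangulation equals $\langle a_i, b_j, c_k \rangle$ for the indices of its central triangle. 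Thus the minimum triangulation weight equals $\min_{(i,j,k) \in V_A \times V_B \times V_C} \langle a_i, b_j, c_k \rangle$, solving $\threeMinIP$.

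The step I expect to be the main obstacle is nailing down the ``unique central triangle'' lemma cleanly, together with the boundary behavior at the junctions of the three arcs (e.g.\ triangles spanning positions $n$, $n+1$, $2n+1$). A robust fix is to insert a constant number of separator vertices between the arcs and zero out the corresponding weights; this preserves the rank-$d$ decomposition since the extended vectors at separator indices are simply the zero vector. With the combinatorial lemma in hand, the construction is of linear size and the reduction runs in $O(nd)$ time, so an $O(n^{3-\varepsilon})$ algorithm for $\twoLWS^{\PT}$ yields an $O(n^{3-\delta})$ algorithm for $\threeMinIP_{n,d}$, completing the chain to $\SETH$.
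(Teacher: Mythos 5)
Your proposal is correct in substance but takes a genuinely different route from the paper. The paper derives this corollary by composing two previously established reductions: first it observes that the $\kMinIP\rightarrow\kminusoneLWS$ reduction (Theorem~\ref{thm: kMinIP to k-1 D LWS}) specialized to $k=3$ produces a $\twoLWS$ instance in which the two weight tensors $w_1,w_2$ coincide, and then it invokes the separate rank-preserving reduction from $\twoLWS$ with $w_1=w_2$ to $\twoLWS^{\PT}$ (Theorem~\ref{thm: twoLWS reduces to PT}, which blows up $n$ to $2n$ via a zero-padding trick). You instead give a \emph{direct} reduction from $\threeMinIP$ to $\twoLWS^{\PT}$ on a $3n$-gon with three monochromatic consecutive arcs, resting on the ``unique rainbow triangle'' lemma. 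Your lemma is in fact true and can be proved cleanly by the top-level recursion of $\twoLWS^{\PT}$ itself: in the sub-polygon $P(i,j)$ (which touches all three arcs iff $i\le n$ and $j\ge 2n+1$), the split point $k$ lands in one of the three arcs, and in each case exactly one of \{the triangle $(i,k,j)$, the sub-polygon $P(i,k)$, the sub-polygon $P(k,j)$\} touches all three arcs while the others touch at most two; induction closes the argument. In particular, the separator-vertex workaround you flag as a fallback is unnecessary --- the junction triangles pose no difficulty. What your approach buys is a self-contained, geometrically transparent reduction on $3n$ vertices; what the paper's buys is modularity, since Theorem~\ref{thm: twoLWS reduces to PT} is reused verbatim (in its slice-rank form, Theorem~\ref{thm: twoLWS reduces to PT slice}) to establish the $\APSP$-based hardness in Corollary~\ref{cor: PT hardness slice rank}.

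One small but real slip: your weight formula $w[i,j,k]=\sum_{\ell}\tilde a[i][\ell]\,\tilde b[j][\ell]\,\tilde c[k][\ell]$ assigns the $B$-supported factor to the second coordinate and the $C$-supported factor to the third, but in the $\twoLWS^{\PT}$ recurrence $T[i,j]=\min_{i<k<j}\{T[i,k]+T[k,j]+w[i,j,k]\}$ the index ordering is $i<k<j$, so the triangle's largest-index vertex (which lies in $V_C$) occupies the \emph{second} slot and the middle vertex (in $V_B$) occupies the \emph{third}. As written, $w$ vanishes identically on rainbow triangles. The fix is just to swap the roles: set $w[i,j,k]=\sum_{\ell}\tilde a[i][\ell]\,\tilde c[j][\ell]\,\tilde b[k][\ell]$, after which exactly the rainbow triangles $(i,k,j)\in V_A\times V_B\times V_C$ receive weight $\langle a_i,b_{k-n},c_{j-2n}\rangle$ and all others receive $0$. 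With that correction and the combinatorial lemma in hand, your reduction is a valid alternative proof of Corollary~\ref{cor: PT hardness rank}.
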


\begin{corollary*}[\Cref{cor: PT hardness slice rank}]
Under $\APSP$ conjecture, there is no truly sub-cubic algorithm for $\twoLWS^{\PT}$ with weight function whose slice rank is $3$ or above.    
\end{corollary*}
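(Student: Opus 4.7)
The plan is to derive this corollary by combining \Cref{thm: APSP to twoLWS} with a reduction from $\twoLWS$ with slice rank $3$ to $\twoLWS^{\PT}$ whose weight tensor has slice rank a small constant. Given $\twoLWS$ tensors $w_1, w_2$ on $[n]$ each of slice rank $3$, I arrange $2n$ points on a line in the order $\alpha_n, \alpha_{n-1}, \ldots, \alpha_1, \beta_1, \ldots, \beta_n$, placing $\alpha_i$ at position $n-i+1$ and $\beta_j$ at position $n+j$. The key combinatorial property of this layout is that the open interval $(\alpha_i, \beta_j)$ contains exactly $\{\alpha_k : k < i\} \cup \{\beta_k : k < j\}$, which is precisely the predecessor set that $T^{\LWS}[i,j]$ depends on in the $\twoLWS$ recurrence.

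I next define the $\twoLWS^{\PT}$ weight tensor $w$ on $[2n]^3$ by setting $w[\alpha_i, \beta_j, \alpha_k] := w_1[i,j,k]$, $w[\alpha_i, \beta_j, \beta_k] := w_2[i,j,k]$, and $w \equiv 0$ on every triple whose first two coordinates lie in the same $\alpha/\beta$-block. A short induction on interval length shows $T^{\PT}[\alpha_i, \alpha_k] = 0$ and $T^{\PT}[\beta_k, \beta_j] = 0$ for all nested pairs within a single block; substituting these into the PT recurrence at $[\alpha_i, \beta_j]$ collapses it to
\[
T^{\PT}[\alpha_i, \beta_j] \;=\; \min\Bigl(\min_{k<i}\{T^{\LWS}[k,j] + w_1[i,j,k]\},\; \min_{k<j}\{T^{\LWS}[i,k] + w_2[i,j,k]\}\Bigr) \;=\; T^{\LWS}[i,j],
\]
so that $T^{\PT}[\alpha_n, \beta_n] = T^{\LWS}[n,n]$ recovers the answer to the original $\twoLWS$ instance. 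A truly sub-cubic algorithm for $\twoLWS^{\PT}$ on $2n$ points would therefore solve $\twoLWS$ with slice rank $3$ in truly sub-cubic time, contradicting \Cref{thm: APSP to twoLWS} under the $\APSP$ conjecture.

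The main obstacle is controlling the slice rank of $w$. Starting from slice rank $3$ decompositions $w_\ell = \sum_{r=1}^{3} \phi_{\ell,r} \otimes \psi_{\ell,r}$ (with $\phi_{\ell,r}$ depending on one coordinate and $\psi_{\ell,r}$ on the remaining two) for $\ell \in \{1,2\}$, I extend each summand to $[2n]^3$ by multiplying by the appropriate single-coordinate indicators of the $\alpha$- or $\beta$-blocks and absorbing each such indicator into the one-variable factor of that summand; this preserves slice rank $1$ for each of the six summands, so the slice rank of $w$ is at most $6$. On the other hand, if I set $w[\alpha_i, \beta_j, \alpha_k] := w_1[i,j,k]$ for all $(i,j,k) \in [n]^3$ (not merely $k < i$, which is harmless since those extra entries are never consulted by the PT recurrence), then the restriction of $w$ to the sub-cube $A \times B \times A$ equals $w_1$ up to reindexing, and since slice rank does not grow under sub-cube restriction, the slice rank of $w$ is at least that of $w_1$, namely $3$. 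The constructed $\twoLWS^{\PT}$ instance thus lies in the slice rank $\geq 3$ family, completing the corollary.
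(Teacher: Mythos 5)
Your reduction from $\twoLWS$ to $\twoLWS^{\PT}$ is in essence the paper's: two blocks of $n$ points, the value $T^{\LWS}[i,j]$ encoded at $T^{\PT}[\alpha_i,\beta_j]$, and the third coordinate of the weight tensor made periodic over the two blocks so that choosing the split point in the $\alpha$-block simulates the $w_1$-branch of the $\twoLWS$ recurrence and choosing it in the $\beta$-block simulates the $w_2$-branch. The inductive verification and the sub-cube lower bound of $\geq 3$ are both correct. The gap is in the slice-rank upper bound. You only establish slice rank at most $6$, and then argue this is acceptable because the corollary says ``slice rank is $3$ or above.'' That reading is off: read alongside \Cref{thm: twoLWS slice rank 1 is truly sub-cubic}, the corollary asserts that hardness already sets in at slice rank $3$, so the reduction must produce hard $\twoLWS^{\PT}$ instances whose weight tensor has slice rank at most $3$. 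A hard family whose slice rank you can only bound in $[3,6]$ establishes hardness only for slice rank $\le 6$ and leaves slice ranks $3$, $4$, $5$ unresolved, which is a strictly weaker statement.

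The missing observation --- which the paper makes explicitly and which your construction can also exploit --- is that the $\APSP$-to-$\twoLWS$ reduction of \Cref{thm: APSP to twoLWS} sets $w_1 = w_2 = \alpha$. Once you restrict to $w_1 = w_2$, fix a single slice-rank-$3$ decomposition of $\alpha$ and lift it for both branches. For each of its three summands, the two lifted tensors you build (one with the third coordinate supported on the $\alpha$-block, the other on the $\beta$-block) agree on whichever factor does not absorb the third-coordinate block indicator, and the factors that do absorb it can simply be added; the pair therefore collapses into a single slice-rank-$1$ tensor. This drops your upper bound from $6$ to $3$, matching your sub-cube lower bound and giving slice rank exactly $3$. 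This is precisely why the paper's slice-rank-preserving reduction, \Cref{thm: twoLWS reduces to PT slice}, is stated only under the hypothesis $w_1 = w_2$: without it, the slice ranks of $w_1$ and $w_2$ would indeed add, as they do in your current bound.
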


These results are proved by making use of a reduction to $\twoLWS^{\PT}$ from a \emph{special case} of $\twoLWS$ where the two tensors $w_1,w_2$ must be equal.  We then show that our previous hardness results hold even in this special case, yielding the above corollaries.

In fact, previous work shows that in some special cases when $w$ has rank 1, $\twoLWS^{\PT}$ can be solved in truly sub-cubic time.

\begin{theorem*}[\cite{HS82,HS84,HS02,LG21}]
Suppose there exists $x_i \in \mathbb{N}, 1\leq i \leq n$ such that $w[i,j,k] = x_i\cdot x_j \cdot x_k$ for all $1\leq i,j,k \leq n$, then $\twoLWS^{\PT}$ with tensor $w$ can be solved in $O(n\log n)$ time. 
\end{theorem*}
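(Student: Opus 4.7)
The plan is to reduce the problem to Optimal Matrix Chain Multiplication (MCM) and then invoke the classical Hu--Shing algorithm. For the reduction, I would associate to each edge between polygon vertices $i$ and $i+1$ a matrix $M_i$ of dimensions $x_i \times x_{i+1}$. Each triangulation of the polygon (treating the chord from vertex $1$ to vertex $n$ as the base edge) is then in bijection with a parenthesization of the product $M_1 M_2 \cdots M_{n-1}$: a triangle $(i,j,k)$ with $i < k < j$ corresponds to combining two already-formed sub-products of shapes $(x_i, x_k)$ and $(x_k, x_j)$, at scalar-multiplication cost exactly $x_i \cdot x_k \cdot x_j = w[i,j,k]$. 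It therefore suffices to solve MCM in time $O(n \log n)$.

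For MCM, I would carry out the Hu--Shing algorithm. Let $x_m = \min_i x_i$. The core structural fact (the Hu--Shing lemma) is that some optimal triangulation admits a decomposition of the polygon into \emph{basic sub-polygons}, each triangulated as a \emph{fan} from its locally-minimum vertex. The algorithm has two phases: first, a linear-time stack-based \emph{one-sweep} procedure using a local visibility test with respect to $x_m$ to partition the polygon into basic sub-polygons; second, for each basic sub-polygon, compute the cost of the fan triangulation from its minimum vertex in time proportional to its size. The extra $O(\log n)$ factor (beyond linear) arises from sorting or priority-queue operations used to locate and compare local minima during the partitioning, and summing the second-phase costs yields the final answer.

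The main obstacle is the structural lemma itself: one must prove both that the one-sweep procedure correctly identifies an arc partition into basic sub-polygons, and that the fan triangulation from the minimum vertex is optimal within each such sub-polygon. Both facts are established by exchange arguments that leverage the minimality of $x_m$, showing that any triangle not in the claimed canonical form can be locally rearranged (via edge flips) without increasing the total cost. The algorithmic bookkeeping is relatively standard once these exchange arguments are in place, and the $O(n \log n)$ bound then follows directly from the two-phase description above.
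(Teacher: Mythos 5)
The paper gives no proof of this statement; it is cited directly from \cite{HS82,HS84,HS02,LG21}, so there is nothing in-paper to compare against. Your proposal is, in substance, the same appeal to Hu--Shing, prefaced by a reduction to matrix chain multiplication and followed by a sketch of their algorithm. The reduction to MCM (matrix $M_i$ of dimensions $x_i \times x_{i+1}$, triangle $(i,k,j)$ corresponding to the merge with scalar cost $x_i x_k x_j$) is logically correct, but it is also unnecessary and runs opposite to the literature: Hu and Shing prove their results directly for polygon triangulation with weight function $w_i w_j w_k$, and it is MCM that \emph{they} reduce to polygon triangulation, not the other way around. $\twoLWS^{\PT}$ with $w[i,j,k] = x_i x_j x_k$ is exactly their problem, so one can invoke the algorithm without any translation.

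More importantly, the sketch of the Hu--Shing algorithm is too coarse to stand on its own as a proof, and a few of the details are off. Their structure theory is substantially more delicate than ``decompose into fans from locally-minimum vertices and triangulate each as a fan'': Part~I establishes theorems about which arcs must belong to some optimal triangulation, Part~II introduces the notion of \emph{h-arcs} and a decomposition into \emph{monotone} sub-polygons whose optimum is in general not a single fan, and the linear-time stack-based one-sweep procedure you describe is, in its original form, presented as an approximation heuristic rather than an exact sub-routine. The $O(n\log n)$ bound arises from a careful recursion over this decomposition, not from ``sorting or priority-queue operations used to locate and compare local minima,'' and the exchange arguments behind the structure theorems are noticeably subtler than the local edge-flip intuition you invoke (this subtlety is precisely what later works such as \cite{HS02,LG21} were needed to nail down). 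If the intent is to give a self-contained argument rather than cite, these structural lemmas would have to be stated and proved; otherwise, citing \cite{HS82,HS84,HS02,LG21} as the paper does is the appropriate move.
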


Our results help explain why the examples of $\twoLWS^{\PT}$ problems where faster algorithms are known correspond to tensors $w$ of rank or slice rank $1$; see~\Cref{sec:PT} for more details.

\subsection{Organization}

Section \ref{preliminaries} contains the preliminaries: our notation, background on fine-grained complexity, and definitions of relevant problems. Section \ref{sec: kLWS} discusses the $\kLWS$ hierarchy and hardness, proving that a polynomial speedup over the standard DP algorithm is possible when the tensor rank is $O(1)$ but impossible when the tensor rank is $2^{O(\log^* n)}$ (assuming $\SETH$). Section \ref{sec: polygon triangulation} discusses the polygon triangulation problem $\twoLWS^{\PT}$ and its connections with $\twoLWS$. Sections \ref{sec: applications} and \ref{sec:PT} respectively discuss applications of $\kLWS$ and $\twoLWS^\PT$ to various real-world problems.

\section{Preliminaries}
\label{preliminaries}

In this section, we state our core problems and relevant problems in fine-grained complexity. We also state our notations for convenience.


For a problem $P$, we say that it is truly sub-quadratic (or sub-cubic) if there exists an algorithm for it with running time $O(n^{2-\varepsilon})$ (or $O(n^{3-\varepsilon})$) for any $\varepsilon>0$. We say that $P$ and $Q$ are sub-quadratic (sub-cubic) equivalent if $P$ is truly sub-quadratic (sub-cubic) if and only if $Q$ is truly sub-quadratic (sub-cubic).

For a positive integer $n$, we let $[n] = \{1,\ldots,n\}$. We assume the word-RAM model with word size $\Theta(\log n)$ throughout this paper, and assume that all integer inputs are chosen from $\{-W,\ldots,W, \infty\}$ where $W$ fits in a constant number of words. We usually use $d$ to denote the length of vectors (rank of a tensor) in our problems, $k$ to denote the dimension of our problems, and $n$ to denote the input size. We put these parameters as subscripts of problems. Since we will discuss both rank and slice rank, we make it clear which we are talking about in the discussions. In this paper, ``$\kLWS$ has (slice) rank $d$" means the array $w$ has (slice) rank $d$. 

For a $k$-dimensional dynamic programming array $T$ with entries $T[i_1,i_2,\ldots,i_k]$, we let $I_{\ell}$ denote the sum of all $i_k$'s except $i_{\ell}$. Let $D_{a,b}$ denote the set of all $(i_1,\ldots,i_k)$ such that $a \leq i_1+\ldots+i_k<b$.

\subsection{Strong Exponential Time Hypothesis and $\MinIP$}

We state some important problems and definitions in fine-grained complexity that we will refer to later. 

\begin{conjecture}[Strong Exponential Time Hypothesis $(\SETH)$]
For every $\varepsilon>0$, there exists a positive integer $k$ such that $\kSAT$ requires time $\Omega(2^{(1-\varepsilon)n})$.    
\end{conjecture}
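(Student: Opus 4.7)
The final displayed statement is a \emph{conjecture} rather than a theorem, so there is no proof to propose in the usual sense. $\SETH$ is a longstanding open problem: no unconditional proof is known, and establishing it would be a landmark result in complexity theory, since it is strictly stronger than $\mathsf{P} \neq \mathsf{NP}$. It asserts not only that $\SAT$ is intractable, but that exhaustive search for $\kSAT$ cannot be sped up by any fixed polynomial factor uniformly across all $k$. None of the known techniques for proving circuit or formula lower bounds come close to the $2^{(1-\varepsilon)n}$ barrier that would be required.

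The reasons for believing $\SETH$ are empirical and structural. Algorithmically, the best known worst-case $\kSAT$ algorithms (PPSZ, Sch\"oning-style random walks, and their derandomizations) achieve running times of the form $2^{n(1 - c/k)}$ for a positive constant $c$; the savings $c/k$ tend to $0$ as $k \to \infty$, which is exactly the behavior $\SETH$ forbids one from improving uniformly. Decades of effort in both the theoretical and applied $\SAT$ communities have failed to produce a uniform-in-$k$ speedup. Structurally, $\SETH$ has been related to numerous other plausible hardness assumptions and to average-case-versus-worst-case phenomena, and refuting it would have dramatic consequences throughout fine-grained complexity.

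The plan is therefore to simply \emph{adopt} $\SETH$ as an axiom for the remainder of the paper. All subsequent hardness statements of the form ``under $\SETH$, problem $X$ requires time $T$'' should be read contrapositively: an algorithm breaking such a bound, composed with the fine-grained reductions developed in later sections, would yield an $O(2^{(1-\delta)n})$-time algorithm for $\kSAT$ for every constant $k$, contradicting the conjecture. The main obstacle to an actual proof is that no known lower-bound technique gets within reach of genuine exponential-time savings, which is precisely why a conditional framework built on $\SETH$ is necessary in the first place.
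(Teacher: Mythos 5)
You correctly recognize that $\SETH$ is a conjecture, not a theorem, and that the paper (like all work in fine-grained complexity) simply adopts it as a hardness assumption rather than proving it; the paper itself offers no proof, only the statement. Your summary of the evidence for $\SETH$ and of how conditional lower bounds are read contrapositively is accurate and consistent with the paper's usage.
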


$\SETH$ is well-known for being a stronger version of $\mathsf{P}\neq \mathsf{NP}$, i.e. it implies $\mathsf{P}\neq \mathsf{NP}$.

\begin{definition}[$\OV_{n,d}$]
Given two sets of vectors $A = \{a_1,\ldots,a_n\}, B = \{b_1,\ldots,b_n\}$ such that $a_i,b_j \in \{0,1\}^d$ for all $i,j$, the \textup{Orthogonal Vectors} problem $(\OV_{n,d})$ asks to determine whether there exists $1 \leq i,j \leq n$ such that $\langle a_i,b_j \rangle = 0$.    
\end{definition}

In \cite{Williams05}, it is shown that assuming $\SETH$, for any positive $c$ there exists $\varepsilon>0$ such that $\OV_{n,c\log n}$ cannot be solved in time $O(n^{2-\varepsilon})$.

\begin{definition}[$\MinIP_{n,d}$]
Given two sets of vectors $A = \{a_1,\ldots,a_n\},B = \{b_1,\ldots,b_n\}$ such that $a_i,b_j \in \{-W,\ldots,W, \infty\}^d$ for all $i,j$ and a natural number $r \in \mathbb{Z}$, the \textup{Minimal Inner Product} $(\MinIP_{n,d})$ asks to determine whether there exists $1 \leq i,j \leq n$ such that $\langle a_i,b_j \rangle \leq r$. 
\end{definition}

$\OV_{n,d}$ trivially reduces to $\MinIP_{n,d}$, and in fact these two problems are sub-quadratic equivalent \cite{CW19}. It is known that this decision version of $\MinIP_{n,d}$ is sub-quadratic equivalent to its counting version where it asks to output $\displaystyle\min_{1 \leq i,j \leq n}\langle a_i,b_j \rangle$. We will use the counting version throughout the rest of the paper.

\subsection{Higher Dimension $\OV,\MinIP$}
$\OV,\MinIP$ can be naturally generalize to higher dimensions as follows.

\begin{definition}[$\kOV_{n,d}$]
Given $k$ sets of vectors 
\[
X_1 = \{x_{11},\ldots,x_{1n}\},\ldots,X_k = \{x_{k1},\ldots,x_{kn}\}
\] such that $x_{ij} \in \{0,1\}^d$ for all $i,j$, $\kOV_{n,d}$ asks to determine whether there exists $1 \leq i_1,\ldots,i_k \leq n$ such that 
\[
\langle x_{1,i_{1}},x_{2,i_{2}},\ldots,x_{k,i_{k}} \rangle = 0.
\]
\end{definition}

\begin{definition}[$\kMinIP_{n,d}$]
Given $k$ sets of vectors     
\[
X_1 = \{x_{11},\ldots,x_{1n}\},\ldots,X_k = \{x_{k1},\ldots,x_{kn}\}
\] such that $x_{ij} \in \{-W,\ldots,W, \infty\}^{d}$ for all $i,j$ and a natural number $r \in \mathbb{Z}$, $\kMinIP_{n,d}$ asks to determine whether there exists $1 \leq i_1,\ldots,i_k \leq n$ such that 
\[
\langle x_{1,i_{1}},x_{2,i_{2}},\ldots,x_{k,i_{k}} \rangle \leq r.
\]
\end{definition}

Just from the definitions, $\kOV$ trivially reduces to $\kMinIP$. In addition, it is not hard to show $\kOV_{n,d} \rightarrow \kminusoneOV_{n,d}$ and $\kMinIP_{n,d} \rightarrow \kminusoneMinIP_{n,d}$ for all $k \geq 3$.

\begin{lemma}
\label{lem: kOV reduces to k-1 OV}
Suppose there exists an algorithm for $\kminusoneOV_{n,d}$ that runs in time $O(n^{k-1-\varepsilon})$ for some $\varepsilon>0$, then there exists an algorithm for $\kOV_{n,d}$ that runs in time $O(n^{k-\delta})$ for some $\delta>0$.   
\end{lemma}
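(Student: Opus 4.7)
The plan is to use the standard ``fix one coordinate and merge'' reduction. Recall that since all entries are in $\{0,1\}$, the condition $\langle x_{1,i_1}, x_{2,i_2}, \ldots, x_{k,i_k}\rangle = 0$ is equivalent to requiring that the componentwise product (equivalently, AND) of the $k$ vectors be the zero vector in $\{0,1\}^d$.

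First, I would iterate over all $n$ choices of $x_{1,i_1} \in X_1$. For each fixed $i_1$, I would build a new set
\[
X_2' = \{\, x_{1,i_1} \wedge x_{2,i_2} : i_2 \in [n]\,\} \subseteq \{0,1\}^d,
\]
where $\wedge$ denotes componentwise AND. This preprocessing takes $O(nd)$ time per choice of $i_1$. Then, the existence of indices $i_2,\ldots,i_k$ with $\langle x_{1,i_1}, x_{2,i_2}, \ldots, x_{k,i_k}\rangle = 0$ is equivalent to the existence of a vector in $X_2'$ together with vectors in $X_3, \ldots, X_k$ whose componentwise product is zero. That is exactly an instance of $\kminusoneOV_{n,d}$ on the sets $X_2', X_3, \ldots, X_k$.

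Invoking the hypothesized $O(n^{k-1-\varepsilon})$-time algorithm for $\kminusoneOV_{n,d}$ on each of the $n$ instances gives a total running time of
\[
O\bigl(n \cdot (nd + n^{k-1-\varepsilon})\bigr) = O(n^{k-\varepsilon}),
\]
using that $d \leq \poly\log n$ in the regime of interest (or more generally, that the $nd$ term is dominated for any nontrivial $d$). Setting $\delta = \varepsilon$ yields the claim. The original $\kOV_{n,d}$ instance has a ``yes'' answer if and only if at least one of these $n$ sub-instances does.

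The argument is essentially a ``peel-off-one-set'' reduction, and I do not anticipate a real obstacle: the only things to verify are that (i) componentwise AND preserves the ambient dimension $d$, so the resulting instance is still a $\kminusoneOV_{n,d}$ instance with the same $d$, and (ii) the preprocessing cost per outer iteration is negligible compared to the inner call. Both are immediate. The lemma is the base step for the hierarchy $\kOV \to \kminusoneOV \to \cdots \to \OV$, mirroring what the paper later establishes for $\kMinIP$.
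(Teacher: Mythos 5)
Your reduction is exactly the paper's: fix $x_{1,i_1}$, replace $X_2$ by $\{x_{1,i_1}\wedge x_{2,i_2}\}$ (which the paper writes as $x_{1,i_1}\cdot X_2$), and invoke the $\kminusoneOV$ oracle once per choice of $i_1$, giving $O(n^{k-\varepsilon})$ total. You even get the final running time right where the paper's displayed expression has a typo ($O(n^{k-1-\varepsilon})$ instead of $O(n^{k-\varepsilon})$), so no concerns.
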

\begin{proof}
Given an $\kOV_{n,d}$ instance with sets $X_1,\ldots,X_k$, we trivially compute $\langle x_{1,i_1},x_{2,i_{2}}\rangle$ for all $1 \leq i_1,i_2 \leq n$ using time $O(n^2d)$. Now for each $1 \leq i_1 \leq n$, run the algorithm for $\kminusoneOV_{n,d}$ with $x_{1,i_1}\cdot X_2, X_3,\ldots,X_k$. If there are no zeros then output no; otherwise output yes.

This algorithm is correct because we have covered all possible $\langle x_{1,i_{1}},x_{2,i_{2}},\ldots,x_{k,i_{k}}\rangle$. The running time is $O(n^2d)+O(n^{k-1-\varepsilon})\cdot n = O(n^{k-1-\varepsilon})$.
\end{proof}

\begin{lemma}
Suppose there exists an algorithm for $\kminusoneMinIP_{n,d}$ that runs in time $O(n^{k-1-\varepsilon})$ for some $\varepsilon>0$, then there exists an algorithm for $\kMinIP_{n,d}$ that runs in time $O(n^{k-\delta})$ for some $\delta>0$.
\end{lemma}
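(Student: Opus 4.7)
The plan is to mimic \Cref{lem: kOV reduces to k-1 OV} essentially verbatim, replacing the orthogonality check with the inner-product threshold. Given a $\kMinIP_{n,d}$ instance with sets $X_1,\ldots,X_k$ and threshold $r$, I would iterate over all $n$ choices of the first index $i_1 \in [n]$. For each such $i_1$, define the vectors $y_{i_2} := x_{1,i_1} \odot x_{2,i_2} \in \{-W^2,\ldots,W^2,\infty\}^d$ (coordinate-wise product, where we set any coordinate involving $\infty$ to $\infty$) for all $i_2 \in [n]$, and then invoke the assumed $\kminusoneMinIP_{n,d}$ algorithm on the input $(\{y_{i_2}\}_{i_2}, X_3, \ldots, X_k)$ with the same threshold $r$. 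Finally output yes if any of these $n$ subroutine calls returns yes.

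Correctness follows from the identity
\[
\langle x_{1,i_1}, x_{2,i_2}, \ldots, x_{k,i_k} \rangle
= \sum_{\ell=1}^d x_{1,i_1}[\ell]\cdot x_{2,i_2}[\ell] \cdots x_{k,i_k}[\ell]
= \langle y_{i_2}, x_{3,i_3}, \ldots, x_{k,i_k} \rangle,
\]
so the minimum of $\langle x_{1,i_1},\ldots,x_{k,i_k}\rangle$ over all tuples equals the minimum over $i_1$ of the $(k{-}1)$-way minimum-inner-product value of the transformed instance indexed by $i_1$. Since the minimization version and the decision version of $\kMinIP$ are sub-quadratic equivalent (as remarked after the definition of $\MinIP_{n,d}$), we may work with either formulation.

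For running time, forming the vectors $y_{i_2}$ for a fixed $i_1$ takes $O(nd)$ time, and each of the $n$ calls to the $\kminusoneMinIP_{n,d}$ solver costs $O(n^{k-1-\varepsilon})$. The total is $n \cdot O(nd + n^{k-1-\varepsilon}) = O(n^2 d) + O(n^{k-\varepsilon})$. Under the standing assumption $d \leq n^{o(1)}$ (in particular the interesting regime $d = 2^{O(\log^* n)}$), and using $k \geq 3$, the $O(n^2 d)$ term is absorbed, giving overall runtime $O(n^{k-\delta})$ for any $\delta < \varepsilon$. The only mild wrinkle is bookkeeping for $\infty$ entries in the coordinate-wise product, which is routine: treating $\infty$ as an absorbing element preserves all inner-product values exactly. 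No step appears to be a genuine obstacle; this lemma is essentially the $\MinIP$ analog of the preceding $\OV$ lemma.
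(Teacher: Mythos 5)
Your proof is correct and takes essentially the same approach as the paper, which simply states that the argument is identical to the $\kOV \rightarrow \kminusoneOV$ reduction: fix the first index, fold the first vector into the second family via coordinate-wise products, and invoke the $\kminusoneMinIP$ solver $n$ times. Your write-up is actually more careful than the paper's (which contains a minor typo in the final runtime bound), explicitly handling the $O(n^2 d)$ preprocessing term and the $\infty$ entries.
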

\begin{proof}
The proof is exactly the same as the proof of \Cref{lem: kOV reduces to k-1 OV}.
\end{proof}

\subsection{All Pair Shortest Path}

All Pair Shortest Path $(\APSP)$ is a well-known problem in fine grained complexity and graph theory. 

\begin{definition}[$\APSP$] Given a directed graph $G$ with nodes $V = \{v_1,\ldots,v_n\}$, the \textup{All Pair Shortest Path} $(\APSP)$ asks to determine the distance between $v_i$ and $v_j$ for all $1 \leq i<j \leq n$.    
\end{definition}

Currently there is no algorithm for $\APSP$ that runs in $O(n^{3-\varepsilon})$ time for any $\varepsilon>0$, and it is conjectured that no such algorithm exists. 

\begin{conjecture}[$\APSP$ Conjecture]
There is no truly sub-cubic algorithm for $\APSP$.
\end{conjecture}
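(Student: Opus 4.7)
The $\APSP$ Conjecture is the assertion that no algorithm solves all-pairs shortest paths in time $O(n^{3-\varepsilon})$ for any $\varepsilon > 0$. It is labeled a \emph{conjecture} precisely because, despite decades of attention, no proof is known; the statement is adopted as a foundational hardness assumption in fine-grained complexity on par with $\SETH$. A genuine proof proposal must therefore be speculative, and I will only sketch the natural strategies together with the obstructions that have so far made them unsuccessful.

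The plan would be to target the $(\min,+)$ structure directly. By the Fischer--Meyer style equivalences (and subsequent refinements by V.~Vassilevska Williams and R.~Williams), $\APSP$ is sub-cubic equivalent to $(\min,+)$ matrix multiplication, negative-triangle detection, second shortest simple path, and related problems; so it suffices to rule out a truly sub-cubic algorithm for any one of them. The first step would be to try the most direct fine-grained reduction: start from a problem that is unconditionally hard at the right scale and reduce it to $\NegativeTriangle$. The immediate obstruction is that no unconditional super-linear, let alone cubic, lower bounds are currently known for any problem in $\mathsf{P}$, so this route cannot even get off the ground without a major breakthrough in circuit or algebraic complexity.

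A more ambitious plan is to restrict to an algebraic/straight-line model over the tropical semiring and exhibit a genuine cubic lower bound there, exploiting the fact that $(\min,+)$ does not admit Strassen-style tensor decomposition. The hard part, and the main reason the conjecture remains open, is Williams's algorithm (STOC 2014) solving $\APSP$ in $n^{3}/2^{\Omega(\sqrt{\log n})}$ by encoding min-plus products into $(+,\times)$ products of polynomially bounded integers and invoking fast matrix multiplication. Any proof of the $\APSP$ conjecture must explain precisely why this sub-polynomial speedup is tight, i.e., why one cannot push $2^{\sqrt{\log n}}$ to $n^{\varepsilon}$; since no existing lower-bound technique discriminates between these two growth rates, the real obstacle is that current techniques are simply not fine-grained enough, and a resolution appears to require fundamentally new ideas. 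Consequently, in the rest of this paper we follow the standard practice of treating the $\APSP$ conjecture as a hypothesis and deriving conditional lower bounds from it.
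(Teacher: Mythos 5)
You correctly recognize that this is a \emph{conjecture}, not a theorem: the paper does not prove it and could not, since it is an open hardness hypothesis used as a basis for conditional lower bounds, exactly parallel to how $\SETH$ is used. There is therefore no paper proof to compare against, and your discussion of why the conjecture resists proof (the lack of unconditional lower bounds in $\mathsf{P}$, the sub-cubic equivalence class via $\minplus$ and $\NegativeTriangle$, and the barrier posed by Williams's $n^{3}/2^{\Omega(\sqrt{\log n})}$ algorithm) is accurate context rather than a gap. The only thing worth noting is that the task as posed -- ``prove this statement'' -- is a category error, and you handled it appropriately by explaining the status of the conjecture and deferring to it as a hypothesis, which is precisely what the paper does.
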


It is known that $\APSP$ is sub-cubic equivalent to many problems such that min-plus matrix multiplication, negative triangle etc \cite{WW18}. Therefore, assuming $\APSP$ conjecture, none of these problems are truly sub-cubic. We will use the hardness of these problems to obtain our hardness results. 

\begin{definition}[$\minplus$]
Given two $n \times n$ matrices $A,B$, compute its min-plus product $C$ where
\[
C[i,j] = \min_{1 \leq k \leq n}\{A[i,k]+B[k,j]\}
\] for all $1 \leq i,j \leq n$.
\end{definition}

\begin{definition}[$\NegativeTriangle$]
Given an undirected, weighted graph, determines whether there exists a triangle such that its weight (the sum of weights of three sides) is negative.
\end{definition}

\subsection{Least Weight Subsequence}

We formally define $\LWS$ using the definition provided in \cite{KPS17}. 

\begin{definition}[$\LWS$]
Consider a sequence of $n$ data items $x_1, \ldots, x_n$ and a weight matrix $w$ of size $n \times n$ where $w[i,j] \in \{-W,\ldots,W, \infty\}$ for all $1 \leq i< j \leq n$. The $\LWS$ problem asks to determine $T[n]$, which is defined by the following DP formulation:
\begin{align*}
    T[j]
    &=
    \begin{cases}
        0
        &
        \text{if } j = 0
        \\
        \displaystyle\min_{0 \leq i<j} \Bigl\{T[i]+w[i,j]\Bigr\}
        &
        \text{otherwise.}
    \end{cases}
\end{align*}
\end{definition}

Given a sequence of $n$ items, $\LWS$ computes a subsequence of those items which minimizes the total weight from the items chosen. We assume all the entries of $w$ can be accessed in $O(1)$ time. Figure \ref{fig: LWS} captures the idea of $\LWS$.

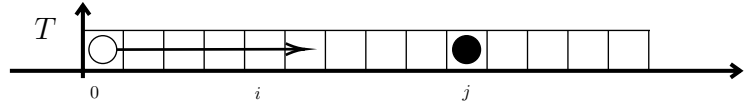
\begin{figure}[ht]

\begin{center}
\resizebox{10cm}{!}{

\tikzset{every picture/.style={line width=0.75pt}} 

\begin{tikzpicture}[x=0.75pt,y=0.75pt,yscale=-1,xscale=1]

\draw [line width=2.25]  (50,604.1) -- (592,604.1)(104.2,555.5) -- (104.2,609.5) (585,599.1) -- (592,604.1) -- (585,609.1) (99.2,562.5) -- (104.2,555.5) -- (109.2,562.5)  ;
\draw  [draw opacity=0] (104.2,574.1) -- (525.1,574.1) -- (525.1,605) -- (104.2,605) -- cycle ; \draw   (104.2,574.1) -- (104.2,605)(134.2,574.1) -- (134.2,605)(164.2,574.1) -- (164.2,605)(194.2,574.1) -- (194.2,605)(224.2,574.1) -- (224.2,605)(254.2,574.1) -- (254.2,605)(284.2,574.1) -- (284.2,605)(314.2,574.1) -- (314.2,605)(344.2,574.1) -- (344.2,605)(374.2,574.1) -- (374.2,605)(404.2,574.1) -- (404.2,605)(434.2,574.1) -- (434.2,605)(464.2,574.1) -- (464.2,605)(494.2,574.1) -- (494.2,605)(524.2,574.1) -- (524.2,605) ; \draw   (104.2,574.1) -- (525.1,574.1)(104.2,604.1) -- (525.1,604.1) ; \draw    ;
\draw  [color={rgb, 255:red, 0; green, 0; blue, 0 }  ,draw opacity=1 ][fill={rgb, 255:red, 0; green, 0; blue, 0 }  ,fill opacity=1 ] (378.67,588.33) .. controls (378.67,582.63) and (383.29,578) .. (389,578) .. controls (394.71,578) and (399.33,582.63) .. (399.33,588.33) .. controls (399.33,594.04) and (394.71,598.67) .. (389,598.67) .. controls (383.29,598.67) and (378.67,594.04) .. (378.67,588.33) -- cycle ;
\draw  [color={rgb, 255:red, 0; green, 0; blue, 0 }  ,draw opacity=1 ][fill={rgb, 255:red, 255; green, 255; blue, 255 }  ,fill opacity=1 ] (108.67,588.33) .. controls (108.67,582.63) and (113.29,578) .. (119,578) .. controls (124.71,578) and (129.33,582.63) .. (129.33,588.33) .. controls (129.33,594.04) and (124.71,598.67) .. (119,598.67) .. controls (113.29,598.67) and (108.67,594.04) .. (108.67,588.33) -- cycle ;
\draw [color={rgb, 255:red, 0; green, 0; blue, 0 }  ,draw opacity=1 ][line width=1.5]    (129.33,588.33) -- (266,588.01) ;
\draw [shift={(269,588)}, rotate = 179.86] [color={rgb, 255:red, 0; green, 0; blue, 0 }  ,draw opacity=1 ][line width=1.5]    (14.21,-4.28) .. controls (9.04,-1.82) and (4.3,-0.39) .. (0,0) .. controls (4.3,0.39) and (9.04,1.82) .. (14.21,4.28)   ;

\draw (384.33,613.07) node [anchor=north west][inner sep=0.75pt]    {$j$};
\draw (230.33,615.07) node [anchor=north west][inner sep=0.75pt]  [color={rgb, 255:red, 0; green, 0; blue, 0 }  ,opacity=1 ]  {$i$};
\draw (66.67,563.4) node [anchor=north west][inner sep=0.75pt]  [font=\LARGE]  {$T$};
\draw (108,614.07) node [anchor=north west][inner sep=0.75pt]    {$0$};

\end{tikzpicture}

}
\end{center}

\caption{$\LWS$. To compute the value of $T'[j]$ (black circle), we start from $T[0]$ and go through all possible $T[i]$ such that $1 \leq i<j \leq n$ and takes the minimum of all possible values (plus their respective weight values $w$).}
\label{fig: LWS}
\end{figure}

\cite{KPS17} also defines a ``static" version of $\LWS$ which is central to their reductions.

\begin{definition}
$([\Static]\LWS)$ Fix an instance of $\LWS$ with matrix $w$. Given intervals $I = \{a,a+1,\ldots,a+N-1\}, J = \{a+N,a+N+1,\ldots,a+2N-1\}$, together with correctly computed values $T[i]$ for all $i \in I$, the $[\Static]\LWS$ problem asks to determine
\[
T'[j] = \min_{i \in I}\Bigl\{T[i]+w[i,j]\Bigr\} \qquad \textup{ for all } j\in J.
\] 
\end{definition}

$[\Static] \LWS$ is a parallel, batch version of $\LWS$ that applies the $\LWS$ recurrence relation to many values of j at once, rather than to just a single $j$ value at a time. Indeed, we can compute all $T'[j]$ values where $j \in J$ in parallel because each $T'[j]$ only depends on the $T[i]$ values where $i \in I$, not on any other $T'[j]$ values. 

We use the notation $T'$ to highlight that $T'[j]$ may not equal $T[j]$ since $T'[j]$ is computed with partial information (in $I$). Figure \ref{fig: [static] LWS} captures the idea of $[\Static]\LWS$.

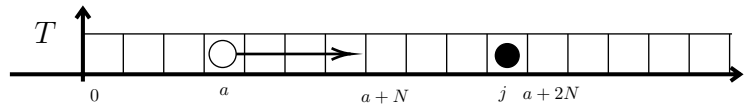
\begin{figure}[ht]
\begin{center}
\resizebox{10cm}{!}{

\tikzset{every picture/.style={line width=0.75pt}} 

\begin{tikzpicture}[x=0.75pt,y=0.75pt,yscale=-1,xscale=1]

\draw [line width=2.25]  (50,604.1) -- (592,604.1)(104.2,555.5) -- (104.2,609.5) (585,599.1) -- (592,604.1) -- (585,609.1) (99.2,562.5) -- (104.2,555.5) -- (109.2,562.5)  ;
\draw  [draw opacity=0] (104.2,574.1) -- (586,574.1) -- (586,605.1) -- (104.2,605.1) -- cycle ; \draw   (104.2,574.1) -- (104.2,605.1)(134.2,574.1) -- (134.2,605.1)(164.2,574.1) -- (164.2,605.1)(194.2,574.1) -- (194.2,605.1)(224.2,574.1) -- (224.2,605.1)(254.2,574.1) -- (254.2,605.1)(284.2,574.1) -- (284.2,605.1)(314.2,574.1) -- (314.2,605.1)(344.2,574.1) -- (344.2,605.1)(374.2,574.1) -- (374.2,605.1)(404.2,574.1) -- (404.2,605.1)(434.2,574.1) -- (434.2,605.1)(464.2,574.1) -- (464.2,605.1)(494.2,574.1) -- (494.2,605.1)(524.2,574.1) -- (524.2,605.1)(554.2,574.1) -- (554.2,605.1)(584.2,574.1) -- (584.2,605.1) ; \draw   (104.2,574.1) -- (586,574.1)(104.2,604.1) -- (586,604.1) ; \draw    ;
\draw  [color={rgb, 255:red, 0; green, 0; blue, 0 }  ,draw opacity=1 ][fill={rgb, 255:red, 0; green, 0; blue, 0 }  ,fill opacity=1 ] (410.33,590.17) .. controls (410.33,585.29) and (414.29,581.33) .. (419.17,581.33) .. controls (424.05,581.33) and (428,585.29) .. (428,590.17) .. controls (428,595.05) and (424.05,599) .. (419.17,599) .. controls (414.29,599) and (410.33,595.05) .. (410.33,590.17) -- cycle ;
\draw  [color={rgb, 255:red, 0; green, 0; blue, 0 }  ,draw opacity=1 ][fill={rgb, 255:red, 255; green, 255; blue, 255 }  ,fill opacity=1 ] (217.95,589.03) .. controls (217.95,583.49) and (213.46,579) .. (207.93,579) .. controls (202.39,579) and (197.9,583.49) .. (197.9,589.03) .. controls (197.9,594.56) and (202.39,599.05) .. (207.93,599.05) .. controls (213.46,599.05) and (217.95,594.56) .. (217.95,589.03) -- cycle ;
\draw [color={rgb, 255:red, 0; green, 0; blue, 0 }  ,draw opacity=1 ][line width=1.5]    (217.95,589.03) -- (300,589) ;
\draw [shift={(303,589)}, rotate = 179.98] [color={rgb, 255:red, 0; green, 0; blue, 0 }  ,draw opacity=1 ][line width=1.5]    (14.21,-4.28) .. controls (9.04,-1.82) and (4.3,-0.39) .. (0,0) .. controls (4.3,0.39) and (9.04,1.82) .. (14.21,4.28)   ;

\draw (411.33,611.07) node [anchor=north west][inner sep=0.75pt]    {$j$};
\draw (309.33,614.07) node [anchor=north west][inner sep=0.75pt]  [color={rgb, 255:red, 0; green, 0; blue, 0 }  ,opacity=1 ]  {$a+N$};
\draw (66.67,563.4) node [anchor=north west][inner sep=0.75pt]  [font=\LARGE]  {$T$};
\draw (108,614.07) node [anchor=north west][inner sep=0.75pt]    {$0$};
\draw (204.2,612.5) node [anchor=north west][inner sep=0.75pt]  [color={rgb, 255:red, 0; green, 0; blue, 0 }  ,opacity=1 ]  {$a$};
\draw (429.33,612.07) node [anchor=north west][inner sep=0.75pt]  [color={rgb, 255:red, 0; green, 0; blue, 0 }  ,opacity=1 ]  {$a+2N$};

\end{tikzpicture}

}
\end{center}

\caption{$[\Static] \LWS$. To compute $T'[j]$ (black circle), we take the minimum of all possible white circles from $T[a]$ to $T[a+N-1]$ (plus their respective weight values $w$).}
\label{fig: [static] LWS}
\end{figure}

\subsection{Higher Dimension $\LWS$}

We now define the core problems that this paper discusses again, which is a generalization of $\LWS$ to higher dimensions.

\begin{definition}[$\kLWS$]
Fix a positive integer $k$. Consider $(k+1)$-dimensional tensors $w_1,\ldots,w_{\ell}$ such that $w_{\ell}[i_1,\ldots,i_{k},j] \in \{-W,\ldots,W,\infty\}$ for all $1 \leq i_1,\ldots,i_k,j \leq n, 1 \leq \ell \leq k$. The $\kLWS$ problem asks to determine $T[n,\ldots,n]$ given the dynamic programming recurrence relation:
\begin{align*}
    T\Big[j_1,j_2,\ldots,j_k\Big] = 
    \begin{cases}
        0 \hspace{23.5em}\textup{ if } j_1 = j_2 = \ldots = j_k = 1 \\
        \displaystyle\min_{1\leq \ell \leq k}\Bigl\{\displaystyle\min_{1 \leq i_{\ell}<j_{\ell}}\Bigl\{T\Big[j_1,\ldots,j_{\ell-1},i_{\ell},j_{\ell+1},\ldots,j_k\Big]+w_{\ell}\Big[j_1,j_2,\ldots,j_k,i_{\ell}\Big]\Bigr\}\Bigr\} \textup{ otherwise.}
    \end{cases}
\end{align*}
\end{definition} An illustration of $\twoLWS$ can be found in \Cref{fig: kLWS}.

Like \cite{KPS17}, we also define $[\Static] \kLWS$, a generalization of $[\Static] \LWS$ to higher dimensions which is central to our reductions.

\begin{definition}
$([\Static]\kLWS)$ Given intervals $D_{a,a+N},D_{a+N,a+2N}$ together with correctly computed values $T_{\ell}[i_1,\ldots,i_k]$ for all $1 \leq \ell \leq k$ and $(i_1,\ldots,i_k) \in D_{a,a+N}$, $[\Static]\kLWS$ asks to determine 
\begin{align*}
    T'\Big[j_1,\ldots,j_k\Big] = \min_{1\leq \ell \leq k}\Bigg\{&\displaystyle\min_{a-I_{\ell}\leq i_{\ell}<a+N-I_{\ell}}\Bigl\{T_{\ell}\Big[j_1,\ldots,j_{\ell-1},i_{\ell},j_{\ell+1},\ldots,j_k\Big]+w_{\ell}\Big[j_1,j_2,\ldots,j_k,i_{\ell}\Big]\Bigr\}\Bigg\}  
\end{align*} for all $(j_1,j_2,\ldots,j_k) \in D_{a+N,a+2N}$.
\end{definition}

An illustration of $[\Static]\twoLWS$ can be found in \Cref{fig: static kLWS}.

\subsection{Tensor Ranks}

We give definitions of rank and slice rank for tensors. For notational convenience, we say that a problem has rank $d$ if its associated array/tensor has rank $d$.

\begin{definition}[Rank]
We say that a $k$-dimension array $w$ has \textup{rank} $d$ if there exists $k$ sets 
\[
X_1 = \{x_{11},\ldots,x_{1n}\},\ldots,X_k = \{x_{k1},\ldots,x_{kn}\}
\] of vectors with length $d$ such that 
$w[i_1,\ldots,i_k] = \langle x_{1,i_1},x_{2,i_2},\ldots,x_{k,i_ {k}} \rangle$ for all $1 \leq i_1,\ldots,i_k \leq n$.
\end{definition}

\begin{definition}[Slice Rank]
A $k$-dimensional (order-$k$) tensor $w \in \R^{n_1 \times n_2 \times \cdots \times n_d}$ has slice rank $1$ if there is a $j \in [d]$, a vector $a \in \R^{n_j}$, and a $(d-1)$-dimensional tensor $b \in \R^{n_1 \times \cdots \times n_{j-1} \times n_{j+1} \times \cdots \times n_d}$ such that, for all $i_1 \in [n_1], \ldots, i_d \in [n_d]$ we have 
\[
w[i_1, \ldots, i_d] = a[i_j] \cdot b[i_1, \ldots, i_{j-1}, i_{j+1}, \ldots, i_d].
\] More generally, the slice rank of tensor $w$ is the minimum non-negative integer $k$ such that there are slice rank $1$ tensors $w_1, \ldots, w_k$ for which $w = w_1 + \cdots + w_k$.
\end{definition}

\subsection{Polygon Triangulation}

In computational geometry, a polygon triangulation is a partition of a polygon $P$ into triangles. It is known that the number of partitions of a convex polygon is counted by the Catalan numbers \cite{stanley2015catalan}. We discuss the problem of finding the triangulation that minimizes the sum of weights of all triangles.

\begin{definition}[Polygon Triangulation]
Let $P(n)$ denote an $n$-sided convex polygon and fix an ordering of the polygon vertices. A triangulation of $P(n)$ is a partition of $P(n)$ into disjoint triangles using straight, internal edges between pair of nodes of $P(n)$. For each triangle $(v_i,v_j,v_k)$, let $w(i,j,k)$ be its weight, and the weight of a partition is the sum of all weights of its triangles. The polygon triangulation problem asks to determine the minimal weight of all partitions.
\end{definition}

For an example, consider the polygon triangulation of a polygon $P$ with 10 sides (see figure \ref{fig: polgyon triangulation}). Starting from the side $(1,10)$, we choose a node $6$ and partitions $P$ into 3 parts $P(1,6)$, triangle $(1,6,10)$ and $P(6,10)$, denoted by the black dashed lines. We further partitions $P(i,j)$ by choosing a node $i<k<j$ and partitions it to $P(i,k)$, triangle $(i,j,k)$ and $P(k,j)$. 

\begin{figure}[ht]
\begin{center}
\resizebox{5cm}{!}{

\tikzset{every picture/.style={line width=0.75pt}} 

\begin{tikzpicture}[x=0.75pt,y=0.75pt,yscale=-1,xscale=1]

\draw  [line width=1.5]  (373,246.5) -- (346.93,326.73) -- (278.68,376.32) -- (194.32,376.32) -- (126.07,326.73) -- (100,246.5) -- (126.07,166.27) -- (194.32,116.68) -- (278.68,116.68) -- (346.93,166.27) -- cycle ;
\draw [line width=1.5]  [dash pattern={on 5.63pt off 4.5pt}]  (194.32,116.68) -- (278.68,376.32) ;
\draw [line width=1.5]  [dash pattern={on 5.63pt off 4.5pt}]  (278.68,116.68) -- (278.68,376.32) ;
\draw [color={rgb, 255:red, 208; green, 2; blue, 27 }  ,draw opacity=1 ][line width=1.5]  [dash pattern={on 5.63pt off 4.5pt}]  (194.32,116.68) -- (100,246.5) ;
\draw [color={rgb, 255:red, 208; green, 2; blue, 27 }  ,draw opacity=1 ][line width=1.5]  [dash pattern={on 5.63pt off 4.5pt}]  (100,246.5) -- (278.68,376.32) ;
\draw [color={rgb, 255:red, 74; green, 144; blue, 226 }  ,draw opacity=1 ][line width=1.5]  [dash pattern={on 5.63pt off 4.5pt}]  (100,246.5) -- (194.32,376.32) ;

\draw (185,82.4) node [anchor=north west][inner sep=0.75pt]  [font=\Large]  {$1$};
\draw (102,133.4) node [anchor=north west][inner sep=0.75pt]  [font=\Large]  {$2$};
\draw (68,228.4) node [anchor=north west][inner sep=0.75pt]  [font=\Large]  {$3$};
\draw (95,324.4) node [anchor=north west][inner sep=0.75pt]  [font=\Large]  {$4$};
\draw (178,389.4) node [anchor=north west][inner sep=0.75pt]  [font=\Large]  {$5$};
\draw (277,391.4) node [anchor=north west][inner sep=0.75pt]  [font=\Large]  {$6$};
\draw (361,323.4) node [anchor=north west][inner sep=0.75pt]  [font=\Large]  {$7$};
\draw (392,231.4) node [anchor=north west][inner sep=0.75pt]  [font=\Large]  {$8$};
\draw (362,139.4) node [anchor=north west][inner sep=0.75pt]  [font=\Large]  {$9$};
\draw (276,80.4) node [anchor=north west][inner sep=0.75pt]  [font=\Large]  {$10$};

\end{tikzpicture}

}
\end{center}

\caption{An example polygon triangulation problem.}
\label{fig: polgyon triangulation}
\end{figure}
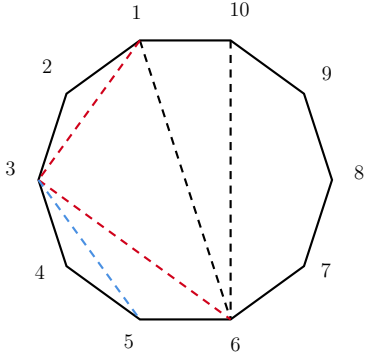

The polygon triangulation problem can be solved via dynamic programing, motivating our $\twoLWS^{\PT}$ problem.

\begin{definition}[$\twoLWS^{\PT}$]
Fix a tensor $w$. The $\twoLWS^{\PT}$ problem asks to compute the value of $T[n,n]$ given the dynamic programming recurrence relation:
\begin{align*}
    T[i,j] =
    \begin{cases}
        0 &\textup{ if } i+j \leq n+2\\
        \displaystyle\min_{0<k<i+j-n}\Big\{T[n-j+k,j]+T[i,j-k]+w[i,j,k]\Big\} &\textup{ otherwise}.
    \end{cases}
\end{align*} Under a change of variables/coordinates, this problem is equivalent to computing the value of $T[1,n]$ given the dynamic programming recurrence relation:
\begin{align*}
    T[i,j] = 
    \begin{cases}
        0 & \textup{ if } j-i \leq 1 \\
        \displaystyle\min_{i<k<j}\Big\{T[i,k]+T[k,j]+w[i,j,k]\Big\}&\textup{otherwise.}
    \end{cases}
\end{align*}
\end{definition}

It is not hard to see that polygon triangulation and $\twoLWS^{\PT}$ are the same problem: let $T[i,j]$ denote the weight of the sub-polygon containing nodes $i$ to $j$ and $w[i,j,k]$ be the weight of triangle $(v_i,v_j,v_k)$. More generally, any problem which splits an interval $[i, j]$ at some point $k$ where $k$ is between $i$ and $j$ can be understood as a $\twoLWS^{\PT}$ instance. Figure \ref{fig: 2D LWS^PT} captures the idea of $\twoLWS^{\PT}$.

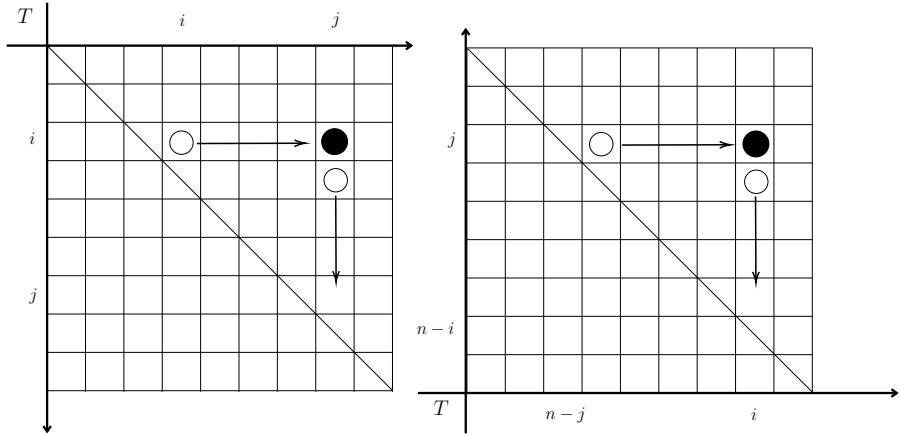
\begin{figure}[ht]
\begin{center}
\resizebox{12cm}{!}{

\tikzset{every picture/.style={line width=0.75pt}} 

\begin{tikzpicture}[x=0.75pt,y=0.75pt,yscale=-1,xscale=1]

\draw  [draw opacity=0] (76.95,461.72) -- (527.5,461.72) -- (527.5,913) -- (76.95,913) -- cycle ; \draw   (76.95,461.72) -- (76.95,913)(126.95,461.72) -- (126.95,913)(176.95,461.72) -- (176.95,913)(226.95,461.72) -- (226.95,913)(276.95,461.72) -- (276.95,913)(326.95,461.72) -- (326.95,913)(376.95,461.72) -- (376.95,913)(426.95,461.72) -- (426.95,913)(476.95,461.72) -- (476.95,913)(526.95,461.72) -- (526.95,913) ; \draw   (76.95,461.72) -- (527.5,461.72)(76.95,511.72) -- (527.5,511.72)(76.95,561.72) -- (527.5,561.72)(76.95,611.72) -- (527.5,611.72)(76.95,661.72) -- (527.5,661.72)(76.95,711.72) -- (527.5,711.72)(76.95,761.72) -- (527.5,761.72)(76.95,811.72) -- (527.5,811.72)(76.95,861.72) -- (527.5,861.72)(76.95,911.72) -- (527.5,911.72) ; \draw    ;
\draw [line width=2.25]  (76.95,405.82) -- (76.95,964.82)(551.43,461.72) -- (24.23,461.72) (81.95,957.82) -- (76.95,964.82) -- (71.95,957.82) (544.43,456.72) -- (551.43,461.72) -- (544.43,466.72)  ;
\draw    (76.95,461.72) -- (526.5,911) ;
\draw [color={rgb, 255:red, 0; green, 0; blue, 0 }  ,draw opacity=1 ][line width=1.5]    (272,589.38) -- (410.5,589.01) ;
\draw [shift={(413.5,589)}, rotate = 179.85] [color={rgb, 255:red, 0; green, 0; blue, 0 }  ,draw opacity=1 ][line width=1.5]    (14.21,-4.28) .. controls (9.04,-1.82) and (4.3,-0.39) .. (0,0) .. controls (4.3,0.39) and (9.04,1.82) .. (14.21,4.28)   ;
\draw [color={rgb, 255:red, 0; green, 0; blue, 0 }  ,draw opacity=1 ][line width=1.5]    (453,657.38) -- (453.49,767) ;
\draw [shift={(453.5,770)}, rotate = 269.75] [color={rgb, 255:red, 0; green, 0; blue, 0 }  ,draw opacity=1 ][line width=1.5]    (14.21,-4.28) .. controls (9.04,-1.82) and (4.3,-0.39) .. (0,0) .. controls (4.3,0.39) and (9.04,1.82) .. (14.21,4.28)   ;
\draw  [fill={rgb, 255:red, 255; green, 255; blue, 255 }  ,fill opacity=1 ] (236.74,588.25) .. controls (236.74,579.83) and (243.57,573) .. (251.99,573) .. controls (260.42,573) and (267.24,579.83) .. (267.24,588.25) .. controls (267.24,596.67) and (260.42,603.5) .. (251.99,603.5) .. controls (243.57,603.5) and (236.74,596.67) .. (236.74,588.25) -- cycle ;
\draw  [color={rgb, 255:red, 0; green, 0; blue, 0 }  ,draw opacity=1 ][fill={rgb, 255:red, 0; green, 0; blue, 0 }  ,fill opacity=1 ] (434.65,586.51) .. controls (434.88,577.18) and (442.62,569.81) .. (451.95,570.04) .. controls (461.27,570.26) and (468.65,578.01) .. (468.42,587.33) .. controls (468.19,596.66) and (460.45,604.04) .. (451.12,603.81) .. controls (441.8,603.58) and (434.42,595.84) .. (434.65,586.51) -- cycle ;
\draw [line width=2.25]  (560,914.72) -- (1184,914.72)(622.4,440.88) -- (622.4,967.38) (1177,909.72) -- (1184,914.72) -- (1177,919.72) (617.4,447.88) -- (622.4,440.88) -- (627.4,447.88)  ;
\draw  [fill={rgb, 255:red, 255; green, 255; blue, 255 }  ,fill opacity=1 ] (437.74,637.25) .. controls (437.74,628.83) and (444.57,622) .. (452.99,622) .. controls (461.42,622) and (468.24,628.83) .. (468.24,637.25) .. controls (468.24,645.67) and (461.42,652.5) .. (452.99,652.5) .. controls (444.57,652.5) and (437.74,645.67) .. (437.74,637.25) -- cycle ;
\draw  [draw opacity=0] (623.95,464.72) -- (1074.5,464.72) -- (1074.5,916) -- (623.95,916) -- cycle ; \draw   (623.95,464.72) -- (623.95,916)(673.95,464.72) -- (673.95,916)(723.95,464.72) -- (723.95,916)(773.95,464.72) -- (773.95,916)(823.95,464.72) -- (823.95,916)(873.95,464.72) -- (873.95,916)(923.95,464.72) -- (923.95,916)(973.95,464.72) -- (973.95,916)(1023.95,464.72) -- (1023.95,916)(1073.95,464.72) -- (1073.95,916) ; \draw   (623.95,464.72) -- (1074.5,464.72)(623.95,514.72) -- (1074.5,514.72)(623.95,564.72) -- (1074.5,564.72)(623.95,614.72) -- (1074.5,614.72)(623.95,664.72) -- (1074.5,664.72)(623.95,714.72) -- (1074.5,714.72)(623.95,764.72) -- (1074.5,764.72)(623.95,814.72) -- (1074.5,814.72)(623.95,864.72) -- (1074.5,864.72)(623.95,914.72) -- (1074.5,914.72) ; \draw    ;
\draw    (624.95,465.72) -- (1073.5,913) ;
\draw  [color={rgb, 255:red, 0; green, 0; blue, 0 }  ,draw opacity=1 ][fill={rgb, 255:red, 0; green, 0; blue, 0 }  ,fill opacity=1 ] (983.65,589.51) .. controls (983.88,580.18) and (991.62,572.81) .. (1000.95,573.04) .. controls (1010.27,573.26) and (1017.65,581.01) .. (1017.42,590.33) .. controls (1017.19,599.66) and (1009.45,607.04) .. (1000.12,606.81) .. controls (990.8,606.58) and (983.42,598.84) .. (983.65,589.51) -- cycle ;
\draw  [fill={rgb, 255:red, 255; green, 255; blue, 255 }  ,fill opacity=1 ] (985.74,639.25) .. controls (985.74,630.83) and (992.57,624) .. (1000.99,624) .. controls (1009.42,624) and (1016.24,630.83) .. (1016.24,639.25) .. controls (1016.24,647.67) and (1009.42,654.5) .. (1000.99,654.5) .. controls (992.57,654.5) and (985.74,647.67) .. (985.74,639.25) -- cycle ;
\draw [color={rgb, 255:red, 0; green, 0; blue, 0 }  ,draw opacity=1 ][line width=1.5]    (1000,658.38) -- (1000.49,768) ;
\draw [shift={(1000.5,771)}, rotate = 269.75] [color={rgb, 255:red, 0; green, 0; blue, 0 }  ,draw opacity=1 ][line width=1.5]    (14.21,-4.28) .. controls (9.04,-1.82) and (4.3,-0.39) .. (0,0) .. controls (4.3,0.39) and (9.04,1.82) .. (14.21,4.28)   ;
\draw  [fill={rgb, 255:red, 255; green, 255; blue, 255 }  ,fill opacity=1 ] (783.74,590.25) .. controls (783.74,581.83) and (790.57,575) .. (798.99,575) .. controls (807.42,575) and (814.24,581.83) .. (814.24,590.25) .. controls (814.24,598.67) and (807.42,605.5) .. (798.99,605.5) .. controls (790.57,605.5) and (783.74,598.67) .. (783.74,590.25) -- cycle ;
\draw [color={rgb, 255:red, 0; green, 0; blue, 0 }  ,draw opacity=1 ][line width=1.5]    (826,591.38) -- (964.5,591.01) ;
\draw [shift={(967.5,591)}, rotate = 179.85] [color={rgb, 255:red, 0; green, 0; blue, 0 }  ,draw opacity=1 ][line width=1.5]    (14.21,-4.28) .. controls (9.04,-1.82) and (4.3,-0.39) .. (0,0) .. controls (4.3,0.39) and (9.04,1.82) .. (14.21,4.28)   ;

\draw (52.5,573.4) node [anchor=north west][inner sep=0.75pt]  [font=\LARGE]  {$i$};
\draw (447,416.9) node [anchor=north west][inner sep=0.75pt]  [font=\LARGE]  {$j$};
\draw (247.83,419.73) node [anchor=north west][inner sep=0.75pt]  [font=\LARGE]  {$i$};
\draw (52.33,776.57) node [anchor=north west][inner sep=0.75pt]  [font=\LARGE]  {$j$};
\draw (37.5,411.9) node [anchor=north west][inner sep=0.75pt]  [font=\huge]  {$T$};
\draw (579.5,922.9) node [anchor=north west][inner sep=0.75pt]  [font=\huge]  {$T$};
\draw (992.83,933.73) node [anchor=north west][inner sep=0.75pt]  [font=\LARGE]  {$i$};
\draw (598,572.9) node [anchor=north west][inner sep=0.75pt]  [font=\LARGE]  {$j$};
\draw (725,932.28) node [anchor=north west][inner sep=0.75pt]  [font=\LARGE]  {$n-j$};
\draw (557,821.28) node [anchor=north west][inner sep=0.75pt]  [font=\LARGE]  {$n-i$};
\end{tikzpicture}

}
\end{center}
\caption{$\twoLWS^{\PT}$. To compute $T[i,j]$ (black circle), we are taking the minimum over the sum of all possible \textbf{pairs} of white circles (plus their respective weight values $w$). The solution to $\twoLWS^{\PT}$ is found at $T[1, n]$ in the left figure and at $T[n, n]$ in the right figure due to a change of variables.}
\label{fig: 2D LWS^PT}
\end{figure}

\section{$k$-dimensional Least Weight Subsequence $(\kLWS)$}
\label{sec: kLWS}

In this section, we discuss $\kLWS$ with rank and slice rank respectively. For $\kLWS$ with rank $d$, we prove the reductions in the following diagram. 

\[
\includegraphics[scale=0.15]{Pic01.png}
\] All our reductions preserve the rank of the problem, so this diagram shows that there exists truly sub-cubic algorithm for $\kLWS$ and $[\Static]\kLWS$ when the rank is constant (because $\MinIP$ does \cite{Williams18}). In addition, we show that $\kMinIP$ reduces to $\kLWS$ and $[\Static]\kLWS$, so their hardness are established. We delay our proof of $\SAT$ reducing to $\kMinIP$ to \Cref{sec: SAT to kMinIP} because it mimics the proof of $\SAT$ reducing to $\MinIP$ in \cite{Chen18}. 

In addition, we show that $\twoLWS,[\Static]\twoLWS$ with slice rank $3$ or above is $\APSP$-hard, and give truly sub-cubic algorithms for $\twoLWS,[\Static]\twoLWS$ with slice rank $1$.

\subsection{Rank $d$ $[\Static]\kLWS$ Hierarchy}
\label{sec: Static kLWS Hierarchy}

In this section we establish a hierarchy for $\kLWS$ and $[\Static]\kLWS$ with rank $d$.

\textbf{Notations}: 
\begin{itemize}
    \item $I_{j} = (\sum i_{\ell})-i_{j}$, $I_{j,t} = (\sum i_{\ell})-i_{j}-i_{t}$.
    \item $I_{j}' = (\sum i_{\ell}')-i_{j}'$, $I_{j,t}' = (\sum i_{\ell}')-i_{j}'-i_{t}'$.
    \item $D_{a,b}$ is the set of all $(i_1,\ldots,i_k)$ such that $a \leq i_1+\ldots+i_k<b$.
    \item $D_a = D_{a,a+1}$ is the set of all $(i_1,\ldots,i_k)$ such that $a = i_1+\ldots+i_k$.
\end{itemize}

\begin{theorem}
\label{thm: kMinIP to k-1 D LWS}
$(\kMinIP \rightarrow \kminusoneLWS)$ Suppose there exists an algorithm for $\kminusoneLWS$ with rank $d$ with running time $O(n^{k-\varepsilon})$ for some $\varepsilon>0$, then there exists an algorithm for $\kMinIP$ with rank $d$ with running time $O(n^{k-\delta})$ for some $\delta>0$. 
\end{theorem}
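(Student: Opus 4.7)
The plan is to reduce $\kMinIP_{n,d}$ (with vector sets $X_1,\ldots,X_k$) to a $\kminusoneLWS$ instance on a $(k-1)$-dimensional grid of side $O(n)$, generalizing the $k=2$ reduction $\MinIP \to \LWS$ implicit in~\cite{KPS17}. The structural challenge is that $(k-1)$ DP coordinates must encode $k$ vector-set choices, so I would resolve this by splitting one coordinate into two ``phases.''

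Concretely, I would work on the grid $[0,2n]^{k-1}$ and design the DP so that a canonical monotone path from $(0,\ldots,0)$ to the target $(2n,\ldots,2n)$ makes exactly one ``pick'' move per vector set. A first move in coordinate $1$ (from $0$ into the range $[1,n]$) picks $x_{1,i_1}\in X_1$; for $\ell = 2,\ldots,k-1$, a move in coordinate $\ell$ (from $0$ into $[1,n]$) picks $x_{\ell,i_\ell}\in X_\ell$; and a second-phase move in coordinate $1$ (from $[1,n]$ into $[n+1,2n]$, landing at $n+i_k$) picks $x_{k,i_k}\in X_k$. The aggregating weight for the second-phase coordinate-$1$ transition is set to $\langle x_{1,i_1},x_{2,i_2},\ldots,x_{k,i_k}\rangle$, which has the explicit CPD $\sum_{t=1}^d x_{1,i_1}[t]\,x_{2,i_2}[t]\cdots x_{k,i_k}[t]$; after reindexing $(j_1,j_2,\ldots,j_{k-1},i_1)$ with $j_1 = n+i_k$, $j_\ell = i_\ell$ for $\ell \geq 2$, this is manifestly a rank-$d$ tensor entry of $w_1$. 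All other canonical segments are given weight $0$, and every off-canonical transition is given a large weight $M$, implemented using $O(1)$ additional ``indicator'' coordinates appended to the vectors (each indicator lights up on a specified range of $j$'s or $i$'s), keeping the rank of each $w_\ell$ at $d+O(1)$.

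For correctness, I would choose $M$ strictly larger than the maximum absolute value $O(dW^k)$ of any honest $k$-fold inner product. Then any path from $(0,\ldots,0)$ to $(2n,\ldots,2n)$ that deviates from the canonical structure (wrong ordering of picks, repeated picks in one phase, triggering the second-phase transition before the other coordinates are populated, etc.) pays at least $M$, which dominates the aggregating contribution; while every canonical path has total cost exactly $\langle x_{1,i_1},\ldots,x_{k,i_k}\rangle$. Therefore $T[2n,\ldots,2n] = \min_{i_1,\ldots,i_k}\langle x_{1,i_1},\ldots,x_{k,i_k}\rangle$. The grid has $O(n^{k-1})$ cells and $O(n)$ predecessors per cell, matching the $O(n^k)$ scale of $\kMinIP$, so an $O(n^{k-\varepsilon})$ algorithm for $\kminusoneLWS$ with rank $d+O(1)$ yields an $O(n^{k-\delta})$ algorithm for $\kMinIP_{n,d}$.

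The main obstacle is the weight-tensor bookkeeping: each $w_\ell$ must simultaneously (i) carry the rank-$d$ aggregating inner product on its intended slice (only $w_1$ does this, on its second-phase slice), (ii) carry weight $0$ on all intended canonical segments (coordinate $1$ plays three roles: first-phase pick, second-phase pick, and the trailing move to $2n$), and (iii) carry weight $\geq M$ on every off-canonical configuration, all while using only $d+O(1)$ CPD terms. The indicator-coordinate technique from the $k=2$ case scales up to handle this, but the bookkeeping for $w_1$ is delicate because it must reconcile two very different ``source ranges'' ($i_1 \in [1,n]$ vs. $i_1 \in [n+1,2n]$) within a single low-rank tensor.
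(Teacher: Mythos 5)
Your construction is genuinely different from the paper's, and it seems to be salvageable, though it carries real bookkeeping debt. The paper avoids your ``canonical path plus $M$-penalty'' scheme entirely: it works on a grid of side $kn$, builds $k$ sets of auxiliary vectors $y_{\ell,j}$ of the \emph{same} length $d$ by zero-padding the $x_{\ell,\cdot}$'s (the first $k-1$ sets are $0^d$ on the first $(k-1)n$ indices and equal $x_{\ell,\cdot}$ on the last $n$; the $k$-th set is the reverse), and sets every $w_\ell[j_1,\ldots,j_{k-1},i]=\langle y_{1,j_1},\ldots,y_{k-1,j_{k-1}},y_{k,i}\rangle$. This makes the tensor vanish identically whenever any DP coordinate is outside the final block, so $T$ is automatically $0$ there, and an induction over the final block shows $T[kn,\ldots,kn]$ is exactly the minimum inner product. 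Crucially no penalty term is needed at all -- the zeros do the pruning -- and the rank stays \emph{exactly} $d$, matching the theorem statement verbatim. Your approach, by contrast, has one coordinate playing two roles (picking $x_1$ and then $x_k$) and relies on a large $M$ to kill off-canonical transitions. This can likely be pushed through (the bad sets $\{i_1=0,\,j_1\in[1,n]\}$, $\{i_1\in[1,n],\,j_1\in[n+1,2n],\,j_\ell\in[1,n]\ \forall\ell\ge2\}$, $\{i_1\in[n+1,2n),\,j_1=2n\}$ are each a single product of interval indicators, so $M\cdot\mathbbm{1}[\text{complement}]$ costs only $O(1)$ extra CPD terms, and the $w_\ell$ for $\ell\ge2$ can be taken identically zero), but the rank you land on is $d+O(1)$ rather than $d$, so as written the theorem statement doesn't literally follow -- it follows for all the downstream corollaries (where $d$ is $O(1)$ or $2^{O(\log^*n)}$), but not as an exact rank-preserving reduction. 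The exact bookkeeping of which local configurations must be penalized (e.g., ruling out two intermediate stops of coordinate $1$ in $[1,n]$, or stepping from $[n+1,2n)$ to another point of $[n+1,2n)$ rather than to $2n$) is precisely the content you would still need to write down carefully; the paper's zero-padding trick is worth internalizing because it makes all of that vanish.
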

\begin{proof}
Given an $\kMinIP$ instance with 
\[
X_1 = \{x_{11},\ldots,x_{1n}\},\ldots,X_{k} = \{x_{k1},\ldots,x_{kn}\},
\] such that $x_{ij} \in \{-W,\ldots,W, \infty\}^{d}$ for all $1 \leq i\leq k, 1 \leq j \leq n$, we define $k$ sets of vectors 
\[
Y_1 = \{y_{11},\ldots,y_{1n}\},\ldots,Y_{k-1} = \{y_{k-1,1},\ldots,y_{k-1,n}\},Y_k = \{y_{k1},\ldots,y_{kn}\}
\] as follows: for all $1 \leq  \ell \leq k-1$,
\begin{align*}
    y_{\ell,j} = 
    \begin{cases}
        0^d & \textup{ if } 1 \leq j \leq (k-1)n\\
        x_{\ell, j\bmod n} &\textup{ if } (k-1)n+1 \leq j \leq kn.
    \end{cases} 
\end{align*} In addition, let
\begin{align*}
   y_{kj} = 
   \begin{cases}
       x_{k,j\bmod n} & \textup{ if } 1\leq j \leq (k-1)n\\
       0^d & \textup{ if } (k-1)n+1 \leq j \leq kn.
   \end{cases}
\end{align*} We claim that running $\kminusoneLWS_{n,d}$ algorithm with 
\[
w_{\ell}[i_1,\ldots,i_k] = \langle y_{1,i_{1}},\ldots,y_{k,i_{k}}\rangle
\] for all $\ell$ will give us $T[kn,\ldots,kn] = \min \langle x_{1,i_1},\ldots,x_{k,i_k} \rangle$. First notice that by our construction, when $(i_1,\ldots,i_{k-1}) \notin [(k-1)n+1,kn]^{k-1}$, we have $w_{\ell}[i_1,\ldots,i_k] = 0$. Therefore, $T[i_1,\ldots,i_k] = 0$ for all $(i_1,\ldots,i_k)$ such that $(i_1,\ldots,i_{k-1}) \notin [(k-1)n+1,kn]^{k-1}$. 

Now we use induction to show that 
\[
T\Big[(k-1)n+j_1,\ldots,(k-1)n+j_{k-1}\Big] = \min\Big\langle x_{1,i_1},\ldots,x_{k,i_k} \Big\rangle
\] where the minimum is taken over all $1 \leq i_1 \leq j_1,\ldots,1 \leq i_{k-1} \leq j_{k-1}$ and $1 \leq i_k \leq n$. This would suffice because if $j_{\ell} = n$ for all $1 \leq  \ell \leq k-1$, we would get the minimal inner product. The base case is when $j_{\ell} = 1$ for all $1 \leq \ell \leq k-1$. Then we have 
\begin{align*}
    &T\Big[(k-1)n+1,\ldots,(k-1)n+1\Big] \\
    &= \min_{1 \leq \ell \leq k-1} \Bigg\{\min_{1 \leq i_{\ell}<(k-1)n+1}\Big\{T\Big[(k-1)n+1,\ldots,(k-1)n+1,i_{\ell},(k-1)n+1,\ldots,(k-1)n+1\Big]+\\
    & \qquad w_{\ell}\Big[(k-1)n+1,\ldots,(k-1)n+1,i_{\ell}\Big]\Big\}\Bigg\}\\
    &= \min_{1 \leq \ell \leq k-1}\Bigg\{\min_{1 \leq i_{\ell}<(k-1)n+1}\Big\{w_{\ell}\Big[(k-1)n+1,\ldots,(k-1)n+1,i_{\ell}\Big]\Big\}\Bigg\}\\
    &= \min_{1 \leq \ell \leq n}\Big\langle x_{11},x_{21},\ldots,x_{k-1,1},x_{k,\ell} \Big\rangle.
\end{align*} For the induction step, we have
\begin{align*}
    &T\Big[(k-1)n+j_1,\ldots,(k-1)n+j_{k-1}\Big]\\
    &= \min_{1 \leq \ell \leq k-1} \Bigg\{\min_{1\leq i_{\ell}<(k-1)n+j_{\ell}}\Big\{T\Big[(k-1)n+j_{1},\ldots,(k-1)n+j_{\ell-1},(k-1)n+i_{\ell},(k-1)n+j_{\ell+1},\ldots,(k-1)n+j_{k-1}\Big]+\\
    & \qquad w_{\ell}\Big[(k-1)n+j_{1},\ldots,(k-1)n+j_{k-1},i_{\ell}\Big]\Big\}\Bigg\}\\
    &= \min_{1 \leq \ell \leq k-1} \Bigg\{\min_{1 \leq i_{\ell}<(k-1)n+1}\Big\{w_{\ell}\Big[(k-1)n+j_{1},\ldots,(k-1)n+j_{k-1},i_{\ell}\Big]\Big\},\\
    &\qquad \min_{(k-1)n+1 \leq i_{\ell} < (k-1)n+j_{\ell}}\Big\{T\Big[(k-1)n+j_{1},\ldots,(k-1)n+j_{\ell-1},(k-1)n+i_{\ell},(k-1)n+j_{\ell+1},\ldots,(k-1)n+j_{k-1}\Big]\Big\}\Bigg\}\\
    &= \min_{1\leq \ell \leq k-1}\Bigg\{\min_{1\leq \ell \leq n}\Big\langle x_{1,j_{1}},\ldots,x_{k-1,j_{k-1}},x_{k,\ell}\Big\rangle,\\
    &\qquad \min_{(k-1)n+1 \leq i_{\ell} < (k-1)n+j_{\ell}}\Big\{T\Big[(k-1)n+j_{1},\ldots,(k-1)n+j_{\ell-1},(k-1)n+i_{\ell},(k-1)n+j_{\ell+1},\ldots,(k-1)n+j_{k-1}\Big]\Big\}\Bigg\}.
\end{align*} By induction hypothesis, 
\[
\min_{(k-1)n+1 \leq i_{\ell} < (k-1)n+j_{\ell}}\Big\{T\Big[(k-1)n+j_{1},\ldots,(k-1)n+j_{\ell-1},(k-1)n+i_{\ell},(k-1)n+j_{\ell+1},\ldots,(k-1)n+j_{k-1}\Big]\Big\}
\] is the minimum over all $\langle x_{1,i_{1}},\ldots,x_{1,i_{k}} \rangle$ over
\[
(i_1,\ldots,i_k) \in [1,j_{1}]\times \ldots [1,j_{\ell-1}] \times [1,j_{\ell}-1] \times [1,j_{\ell+1}]\times \ldots \times [1,j_{k-1}] \times [1,n].
\] Thus we are taking the minimum over all $(i_1,\ldots,i_k)$ in
\[
\{j_1\} \times\ldots \times \{j_k\}\times [1,n]\bigcup_{1 \leq \ell \leq k-1}\Big([1,j_{1}]\times \ldots [1,j_{\ell-1}] \times [1,j_{\ell}-1] \times [1,j_{\ell+1}]\times \ldots \times [1,j_{k-1}] \times [1,n]\Big) = [1,j_1] \times \ldots \times [1,j_{k-1}]\times[1,n],
\] which concludes the induction.
\end{proof}

\begin{theorem}
\label{thm: kLWS to Static kLWS}
$(\kLWS \rightarrow [\Static]\kLWS)$ Suppose there exists an algorithm for $[\Static]\kLWS_{n,N,d}$ with running time $O(N^{2-\varepsilon}\cdot n^{k-1})$ for some $\varepsilon>0$, then there exists an algorithm for $\kLWS_{n,d}$ with running time $O(n^{k+1-\delta})$ for some $\delta>0$.  
\end{theorem}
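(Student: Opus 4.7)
The plan is to generalize the divide-and-conquer argument of \cite[Lemma 3.5]{KPS17} from one dimension to the multi-dimensional setting. In one dimension one splits the index range into halves and uses $[\Static]\LWS$ to propagate information across the split; the natural analog here is to split by the \emph{sum of coordinates} $i_1+\cdots+i_k$, which ranges over $[k,kn]$, since the bands $D_{a,b}$ in the definition of $[\Static]\kLWS$ are parameterized by this sum. This is the correct notion of ``time'' for the DP because the recurrence for $T[j_1,\ldots,j_k]$ only references entries of strictly smaller coordinate sum.

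I would introduce a recursive procedure $\mathsf{Solve}(a,b)$ with the following invariant: before the call, $T[i_1,\ldots,i_k]$ is finalized for every point with sum $<a$, and for every point on $D_{a,b}$ the table entry already equals the minimum over all contributions coming from points of sum $<a$; after the call, every entry on $D_{a,b}$ is finalized. The base case $b-a=1$ is immediate, since every point on the single-sum slice $D_a$ has no predecessors on that slice. For the recursive step, set $m=(a+b)/2$ (padding the outer range to a power of two so halves always have equal integer width) and do three things: first, recursively call $\mathsf{Solve}(a,m)$, which finalizes $T$ on $D_{a,m}$; next, invoke $[\Static]\kLWS$ with $N=m-a$ on the consecutive bands $D_{a,m}$ and $D_{m,b}$, taking the pointwise minimum of its output $T'$ with the current partial $T$ on $D_{m,b}$; finally, recursively call $\mathsf{Solve}(m,b)$. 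Correctness is by induction: after the first step $T$ is finalized on all of $D_{k,m}$, and the static call updates every entry of $D_{m,b}$ to incorporate all contributions from $D_{a,m}$, which combined with the precondition gives exactly the precondition required by the third step.

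For the running time, let $\mathcal{T}(N)$ denote the cost of $\mathsf{Solve}$ on a range of width $N$. By hypothesis the static call costs $O(N^{2-\varepsilon} n^{k-1})$, yielding the recurrence
\[
\mathcal{T}(N) \leq 2\,\mathcal{T}(N/2) + O(N^{2-\varepsilon} n^{k-1}),
\]
which solves to $\mathcal{T}(N) = O(N^{2-\varepsilon} n^{k-1})$ because the geometric sum of the per-level overhead is dominated by the top level. Running $\mathsf{Solve}$ on the full range of width $O(kn)=O(n)$ (for fixed $k$) yields total time $O(n^{k+1-\varepsilon})$, so we may take $\delta=\varepsilon$ (or slightly smaller to absorb logarithmic factors introduced by rounding).

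The main obstacle I anticipate is not algorithmic but bookkeeping: one must maintain the invariant carefully so that after every step each entry reflects contributions from exactly the right prior region, and verify that the halving always produces two bands of equal width, as demanded by the input format of $[\Static]\kLWS$. Boundary issues---initializing $T[1,\ldots,1]=0$ with all other entries $+\infty$, and dealing with range widths that are not powers of two---can be handled by padding with $w_\ell \equiv \infty$ outside the legal grid, contributing only lower-order overhead.
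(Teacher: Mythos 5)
Your proposal is correct and takes essentially the same approach as the paper: a divide-and-conquer recursion over anti-diagonal bands indexed by the coordinate sum, maintaining for each band the partial minima coming from all strictly earlier bands (your invariant is exactly the paper's auxiliary function $t[\,\cdot\,]$), with the $[\Static]\kLWS$ oracle used to propagate contributions from the first half-band to the second between the two recursive calls. The only cosmetic difference is in the runtime analysis: the paper bounds the recurrence $\mathcal{T}(N)\le 2\mathcal{T}(N/2)+O(N^{2-\varepsilon}n^{k-1})$ conservatively by $O(N^{2-\varepsilon}n^{k-1}\log N)$ and absorbs the logarithm into a slightly smaller exponent $\delta$, while you invoke the geometric-sum argument directly to drop the log factor when $\varepsilon<1$; either way one obtains an $O(n^{k+1-\delta})$ total bound, so the conclusion stands.
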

\begin{proof}
Given an $\kLWS$ instance, we define a subproblem
\[
\mathcal{S}\Bigl(D_{\alpha,\beta},\Big\{t\Big[j_1,\ldots,j_k\Big]:(j_1,\ldots,j_k)\in D_{\alpha,\beta}\Big\}\Bigr)
\] as follows: Given $D_{\alpha,\beta}$ and $t[j_1,\ldots,j_k]$ for all $(j_1,\ldots,j_k)\in D_{\alpha,\beta}$ where
\begin{align*}
t\Big[j_1,\ldots,j_k\Big] = \min_{1 \leq \ell \leq k}\Big\{\min_{1\leq i_{\ell}<\alpha-J_{\ell}}\Big\{ T\Big[j_1,\ldots,j_{\ell-1},i_{\ell},j_{\ell+1},\ldots,j_k\Big]+w_{\ell}\Big[j_1,\ldots,j_k,i_{\ell}\Big]\Big\},\infty\Big\},
\end{align*} computes 
\[
T\Big[j_1,\ldots,j_k\Big] = \min_{1 \leq  \ell \leq k}\Big\{\min_{1 \leq i_{\ell}<j_{\ell}}\Big\{T\Big[j_1,\ldots,j_{\ell-1},i_{\ell},j_{\ell+1},\ldots,j_k\Big]+w_{\ell}\Big[j_1,\ldots,j_k,i_{\ell}\Big]\Big\}\Big\}
\] for all $(j_1,\ldots,j_k)\in D_{\alpha,\beta}$. Notice that a call 
\[
\mathcal{S}\Bigl(D_{k,kn},\Bigl\{t\Big[j_1,\ldots,j_k\Big]:(j_1,\ldots,j_k)\in D_{k,kn}\Bigr\}\Bigr)
\] with 
\begin{align*}
t\Big[j_1,\ldots,j_k\Big] &= 
\begin{cases}
    \displaystyle w_{\ell}\Big[j_1,\ldots,j_k,1\Big] & \textup{ if } j_{\ell} = 1 \textup{ for some }\ell\\
    \infty &\textup{otherwise}
\end{cases}
\end{align*} will solve the instance because only those who have a coordinate with $1$ will be assigned a value.

Now we solve $\mathcal{S}$ using Algorithm 1 below.
\begin{algorithm}
\label{alg: reduction to static}
\caption{$\mathcal{S} \rightarrow [\Static]\kLWS$}
        \If {$\alpha = \beta$}{\textbf{Return} $T\Big[j_1,j_2,\ldots,j_k\Big] = t\Big[j_1,j_2,\ldots,j_k\Big]$ for all $(j_1,j_2,\ldots,j_k) \in D_{\alpha}$}
        $m \leftarrow \Big\lceil \frac{\beta-\alpha}{2} \Big\rceil$
        \\
        $\Bigl\{T\Big[j_1,j_2,\ldots,j_k\Big] : j_1,j_2,\ldots,j_k \in D_{\alpha, \alpha + m }\Bigr\} \leftarrow S\Bigl(D_{\alpha, \alpha + m }, \Bigl\{t\Big[j_1,j_2,\ldots,j_k\Big] \in D_{\alpha, \alpha + m} \Bigr\}\Bigr)$
        \\
        Solve a $[\Static]\kLWS$ instance on $D_{\alpha, \alpha + m}, D_{\alpha+m, \beta}$ with correctly computed values
        \[
        \Big\{T\Big[j_1,j_2,\ldots,j_k\Big] : (j_1,j_2,\ldots,j_k) \in D_{\alpha, \alpha + m}\Big\}
        \] and output 
        \[
        \Big\{T'\Big[j_1,j_2,\ldots,j_k\Big] : (j_1,j_2,\ldots,j_k) \in D_{\alpha + m, \beta}\Big\}
        \]
        \\
        Let $t'\Big[j_1,j_2,\ldots,j_k\Big] = \min\Big\{t\Big[j_1,j_2,\ldots,j_k\Big], T'\Big[j_1,j_2,\ldots,j_k\Big]\Big\}$ for all $(j_1,j_2,\ldots,j_k) \in D_{\alpha+m, \beta}$
        \\
        $
        \Bigl\{T\Big[j_1,j_2,\ldots,j_k\Big] : (j_1,j_2,\ldots,j_k) \in  D_{\alpha+m, \beta }\Bigr\} \leftarrow S\Bigl(D_{\alpha+m, \beta}, \Bigl\{t'\Big[j_1,j_2,\ldots,j_k\Big] : (j_1,j_2,\ldots,j_k) \in  D_{\alpha+m, \beta} \Bigr\}\Bigr)
        $
        \\
        \If
        {
            $\beta - \alpha = 2m$
        }{
            \begin{align*}
                T\Big[j_1,j_2,\ldots,j_k\Big]
                =
                \min\Bigl\{t\Big[j_1,j_2,\ldots,j_k\Big],\min_{\alpha-J_{\ell}\leq i_{\ell}<\beta-J_{\ell}}\Big\{T\Big[j_1,\ldots,j_{\ell-1},i_{\ell},j_{\ell+1},\ldots,j_{k}\Big]+w\Big[j_1,\ldots,j_k,i_{\ell}\Big]\Big\}
                \Bigr\}
            \end{align*} for all $(j_1,j_2,\ldots,j_k) \in D_{\beta}$.
        }
        \textbf{Return} $\Big\{T\Big[j_1,j_2,\ldots,j_k\Big] : (j_1,j_2,\ldots,j_k) \in D_{\alpha, \beta}\Big\}$
\end{algorithm}


To see that the algorithm is correct, we use induction on $\beta-\alpha$. When $\alpha = \beta$ we want to compute $T[j_1,\ldots,j_k]$ for all $(j_1,\ldots,j_k) \in D_{\alpha}$, but by definition $t[j_1,\ldots,j_k] = T[j_1,\ldots,j_k]$ so we are done. 

Now line 4 of the algorithm correctly outputs $\Big\{T\Big[j_1,\ldots,j_k\Big]:(j_1,\ldots,j_k) \in D_{\alpha,\alpha+m-1}\Big\}$ because we input the correct $t\Big[j_1,\ldots,j_k\Big]$ and by induction hypothesis. In line 6 we compute for all $(j_1,\ldots,j_k) \in D_{\alpha+m,\beta-1}$:
\begin{align*}
    t'\Big[j_1,\ldots,j_k\Big] &= \min\Big\{t\Big[j_1,\ldots,j_k\Big],T'\Big[j_1,\ldots,j_k\Big]\Big\}\\
    &= \min\Bigg\{\min_{1 \leq \ell \leq k}\Big\{\min_{1 \leq i_{\ell}<\alpha-J_{\ell}}\Big\{T\Big[j_1,\ldots,j_{\ell-1},i_{\ell},j_{\ell+1},\ldots,j_{k}\Big]+w\Big[j_1,\ldots,j_k,i_{\ell}\Big]\Big\}\Big\},\\
    &\hspace{4em} \min_{1 \leq \ell \leq k}\Big\{\min_{\alpha-J_{\ell} \leq i_{\ell}<\alpha+m-J_{\ell}}\Big\{T\Big[j_1,\ldots,j_{\ell-1},i_{\ell},j_{\ell+1},\ldots,j_{k}\Big]+w\Big[j_1,\ldots,j_k,i_{\ell}\Big]\Big\}\Big\}\Bigg\}\\
    &= \min_{1 \leq \ell \leq k}\Big\{\min_{1\leq i_{\ell}<\alpha+m-J_{\ell}}\Big\{T\Big[j_1,\ldots,j_{\ell-1},i_{\ell},j_{\ell+1},\ldots,j_{k}\Big]+w\Big[j_1,\ldots,j_k,i_{\ell}\Big]\Big\}\Big\}.
\end{align*}

Therefore, these are the correct values for all $(j_1,\ldots,j_k) \in D_{\alpha+m,\beta-1}$ for applying $\mathcal{S}$ again in line 7, where we correctly output $\Big\{T\Big[j_1,\ldots,j_k\Big] : (j_1,\ldots,j_k) \in D_{\alpha + m, \beta-1} \Big\}$. Finally, if $\beta - \alpha = 2m$, our algorithm computes the minimum over all $i_{\ell}$ such that $(j_1,\ldots,j_{\ell-1},i_{\ell},j_{\ell+1},\ldots,j_k)\in D_{1,\beta-1}$, so we are outputting the correct values.

The runtime of our algorithm for $\mathcal{S}$, when $\beta-\alpha = N$, can be expressed as
\[
T_{\mathcal{S}}(n,N) \leq 2T_{\mathcal{S}}\Bigl(n,\frac{N}{2}\Bigr)+T_{\Static}\Bigl(n,\frac{N}{2}\Bigr)+O(N n^{k-1})
\] because we have 2 recursive calls with size $\frac{N}{2}$, run a $[\Static]$ algorithm with length $\frac{N}{2}$, and computes $t'[i_1,\ldots,i_k]$ for $O(N n^{k-1})$ values. By assumption we have $T_{\Static}\Bigl(\frac{N}{2}\Bigr) \leq O(N^{2-\varepsilon} n^{k-1})$, so  this recursive formula becomes
\[
T_{\mathcal{S}}(n,N) \leq 2T_{\mathcal{S}}\Bigl(n,\frac{N}{2}\Bigr)+O(N^{2-\varepsilon} n^{k-1}).
\] Solving it gives $T_{\mathcal{S}}(n,N) \leq O(N^{2-\varepsilon}\cdot n^{k-1}\cdot \log N) \leq O(N^{2-\delta}n^{k-1})$ for some $\delta>0$. Therefore, 
\[
T_{\mathcal{S}}(n,n) \leq O(n^{k+1-\delta}).
\]

\end{proof}

Since $\SAT$ reduces to $\kMinIP_{n,2^{O(\log^{*}n)}}$ (\Cref{sec: SAT to kMinIP}), we immediately get hardness results of $\kLWS$ and $[\Static]\kLWS$ from $\SETH$:
\begin{corollary}
Assuming $\SETH$, there is no $O(n^{k+1-\varepsilon})$ time algorithm for $\kLWS$ or $[\Static]\kLWS$ with rank at least $2^{O(\log^{*}n)}$ for any $\varepsilon>0$.
\end{corollary}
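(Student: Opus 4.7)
The plan is to chain together three rank-preserving reductions already established in the paper. First, by Theorem~\ref{thm: SAT reduces to kMinIP} applied with $k$ replaced by $k+1$, $\SETH$ implies there is no $O(n^{k+1-\varepsilon})$ algorithm for $(k+1)\MinIP_{n, 2^{O(\log^{*}n)}}$ for any $\varepsilon > 0$. Second, by Theorem~\ref{thm: kMinIP to k-1 D LWS} applied with $k$ replaced by $k+1$, any $O(n^{k+1-\varepsilon})$ algorithm for $\kLWS$ with rank $d$ yields an $O(n^{k+1-\delta})$ algorithm for $(k+1)\MinIP$ on vectors of length $d$. Third, by Theorem~\ref{thm: kLWS to Static kLWS}, any $O(N^{2-\varepsilon} n^{k-1})$ algorithm for $[\Static]\kLWS_{n,N,d}$ yields an $O(n^{k+1-\delta})$ algorithm for $\kLWS_{n,d}$.

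Now I would combine these: suppose for contradiction there is an $O(n^{k+1-\varepsilon})$ algorithm for $\kLWS$ with rank $d = 2^{O(\log^{*}n)}$. The second reduction turns this into an $O(n^{k+1-\delta})$ algorithm for $(k+1)\MinIP_{n, 2^{O(\log^{*}n)}}$, directly contradicting the first step, hence $\SETH$. For $[\Static]\kLWS$, the natural running time is $O(N^2 n^{k-1})$; reading the stated bound $O(n^{k+1-\varepsilon})$ in the setting $N = \Theta(n)$ so that it matches $O(N^{2-\varepsilon'} n^{k-1})$, a hypothetical fast algorithm for $[\Static]\kLWS$ with rank $2^{O(\log^{*}n)}$ would, by the third reduction, yield an $O(n^{k+1-\delta})$ algorithm for $\kLWS$ with the same rank, reducing to the case just handled.

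The one point that really needs to be checked is that \emph{rank is preserved} throughout the chain. In Theorem~\ref{thm: kMinIP to k-1 D LWS}, each constructed tensor is of the form $w_{\ell}[i_1,\ldots,i_k] = \langle y_{1,i_1},\ldots,y_{k,i_k}\rangle$ where the $y$-vectors have length $d$, so $w_{\ell}$ has rank at most $d$ by definition. In Theorem~\ref{thm: kLWS to Static kLWS}, the reduction calls $[\Static]\kLWS$ with the same weight tensors $w_{\ell}$ as the original $\kLWS$ instance, just restricted to sub-bands $D_{\alpha,\beta}$, so the rank is trivially unchanged. Thus setting $d = 2^{O(\log^{*}n)}$ at the $\SETH$ end propagates all the way through to give the stated hardness; there is no real technical obstacle beyond verifying that the parameter substitutions (shifting $k$ to $k+1$ in the earlier theorems) line up with the naive running times $O(n^{k+1})$ for $\kLWS$ and $O(N^2 n^{k-1})$ for $[\Static]\kLWS$.
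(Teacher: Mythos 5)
Your proposal is correct and is essentially the paper's intended argument: the paper proves this corollary with a single remark ("Since $\SAT$ reduces to $\kMinIP_{n,2^{O(\log^{*}n)}}$ ... we immediately get hardness results of $\kLWS$ and $[\Static]\kLWS$"), and you have simply unpacked that chain, correctly shifting $k \mapsto k+1$ in \Cref{thm: SAT reduces to kMinIP} and \Cref{thm: kMinIP to k-1 D LWS}, observing that the constructed tensors $w_\ell[\cdot] = \langle y_{1,i_1},\ldots,y_{k,i_k}\rangle$ have rank $\le d$ by definition, and using the contrapositive of \Cref{thm: kLWS to Static kLWS} (with $N=\Theta(n)$ so that $O(N^{2-\varepsilon}n^{k-1})$ corresponds to $O(n^{k+1-\varepsilon})$) to transfer hardness from $\kLWS$ to $[\Static]\kLWS$. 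No gaps.
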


Finally, we show that just like $\kMinIP$, $[\Static]\kLWS$ also exhibits a hierarchy.

\begin{theorem}
\label{thm: static kLWS to static k-1 LWS}
$([\Static]\kLWS \rightarrow [\Static]\kminusoneLWS)$ Suppose there exists an algorithm for $[\Static]\kminusoneLWS_{n,N,d}$ with running time $O(N^{2-\varepsilon}\cdot n^{k-2})$ for some $\varepsilon>0$, then there exists an algorithm for $[\Static]\kLWS_{n,N,d}$ with running time $O(N^{2-\delta}\cdot n^{k-1})$ for some $\delta>0$.
\end{theorem}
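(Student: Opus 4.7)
The plan is to split the $k$-way outer minimum in the $[\Static]\kLWS$ recurrence direction-by-direction: for each $\ell \in \{1,\dots,k\}$, compute
\[
M_\ell[j_1,\dots,j_k] = \min_{a-I_\ell \le i_\ell < a+N-I_\ell} \Big\{ T[j_1,\dots,i_\ell,\dots,j_k] + w_\ell[j_1,\dots,j_k,i_\ell] \Big\}
\]
for all $(j_1,\dots,j_k) \in D_{a+N,a+2N}$, and then set $T'[j_1,\dots,j_k] = \min_\ell M_\ell[j_1,\dots,j_k]$ by a final $O(Nn^{k-1})$ pointwise minimum pass. So it suffices to compute each $M_\ell$ in time $O(N^{2-\varepsilon}n^{k-1})$.

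To compute $M_\ell$, I would pick some auxiliary coordinate $m \neq \ell$ (say $m=k$ if $\ell \neq k$ and $m=1$ if $\ell=k$) and slice on it. For each fixed value $c \in [n]$ of $j_m$, the restriction of $M_\ell$ to the slice $\{j_m = c\}$ is an instance of $[\Static]\kminusoneLWS_{n,N,d}$ on the remaining $k-1$ coordinates, with input band $\tilde{D}_{a-c,a+N-c}$ holding the values $T[j_1,\dots,j_m=c,\dots,j_k]$ and output band $\tilde{D}_{a+N-c,a+2N-c}$ (these bands shift by exactly $c$, matching the required form of $[\Static]\kminusoneLWS$). In the $(k-1)$-dimensional sub-instance, I place the slice of $w_\ell$ at $j_m=c$ in the $\ell$-th direction (reindexed to the $(k-1)$-dim labeling), and I place a dummy $+\infty$ tensor in each of the other $k-2$ directions so that the sub-instance's outer minimum over its $k-1$ directions reduces to the single direction we care about.

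Rank preservation is straightforward: if $w_\ell = \sum_{r=1}^d x_1^{(r)} \otimes \cdots \otimes x_{k+1}^{(r)}$ is a rank-$d$ decomposition, then fixing $j_m=c$ yields $\sum_r x_m^{(r)}[c] \cdot \bigotimes_{i \neq m} x_i^{(r)}$, still of rank at most $d$; and the dummy all-$+\infty$ tensor (represented by sending $\infty \mapsto W'$ for a sufficiently large number $W' > 3kNW$ that exceeds any legitimate DP value) has rank $1 \le d$. Summing over $\ell$ and $c$, we make $k \cdot n = O(n)$ calls to $[\Static]\kminusoneLWS_{n,N,d}$, each costing $O(N^{2-\varepsilon} n^{k-2})$, for a total of $O(N^{2-\varepsilon} n^{k-1})$, plus the $O(Nn^{k-1})$ final merge. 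Taking $\delta = \varepsilon$ gives the claimed bound.

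The main obstacle to get right cleanly is the bookkeeping on the band structure and index renaming after slicing (the shift from $D_{a,a+N}$ to $\tilde D_{a-c,a+N-c}$, and the reindexing from $\ell$ in $[k]$ to $\ell'$ in $[k-1]$), together with justifying that we may safely use a ``large but finite'' surrogate for $+\infty$ without altering the computed minimum on the directions we do care about; both are routine once set up, but they are where a careful proof must spend its words.
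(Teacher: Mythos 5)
Your reduction is correct, and it takes a genuinely different (though related) route from the paper's. Both proofs slice on a coordinate and make $\Theta(n)$ calls to the $(k-1)$-dimensional static solver, but you differ in how the $k$-way outer minimum is split. The paper fixes the first coordinate $j_1 = j$ once and observes that the minimum over directions $2,\ldots,k$ is \emph{already} a $[\Static]\kminusoneLWS$ instance on the remaining coordinates, using the natural slices of $w_2,\ldots,w_k$ and no extra tensors; the leftover direction $\ell=1$ is then handled by a separate $\LWS$-style pass, and they iterate over $j$. You instead compute each direction-minimum $M_\ell$ independently, choosing an auxiliary slicing coordinate $m \ne \ell$ and neutralizing the other $k-2$ directions of the sub-instance with an all-large dummy tensor. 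Your approach is more uniform (the same recipe for every $\ell$, no special-cased direction), directly yields $\delta = \varepsilon$, and avoids any separate appeal to a fast $\LWS$/$[\Static]\LWS$ solver for the leftover direction, which is a real simplification; the paper's phrasing for the direction-$1$ cost is actually a loose point in their write-up. The price is that you must simulate $+\infty$ with a finite surrogate $W'$ and threshold outputs back to $\infty$ afterward; in the static setting the given $T$-values are not a priori bounded by $knW$, so your stated bound $W' > 3kNW$ should really be phrased in terms of $\max|T| + W$ (one pass over the input computes this), after which the argument goes through. This is the ``routine'' point you flagged, and flagging it is appropriate.
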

\begin{proof} 
Given an $[\Static]\kLWS_{n,d}$ instance with $D_{a,a+N},D_{a+N,a+2N}$ together with correctly computed values $T[j_1,\ldots,j_k]$ for all $1 \leq \ell \leq k, (i_1,\ldots,i_k) \in D_{a,a+N}$, we want to compute 
\begin{align*}
    T\Big[j_1,\ldots,j_k\Big] = \min_{1 \leq  \ell \leq k}\Bigg\{\min_{a-J_{\ell}\leq i_{\ell}<a+N-J_{\ell}}\Big\{T\Big[j_1,\ldots,j_{\ell-1},i_{\ell},j_{\ell+1},\ldots,j_{k}\Big]+w_{\ell}\Big[j_1,\ldots,j_{k},i_{\ell}\Big]\Big\}\Bigg\} 
\end{align*} for all $(j_1,j_2,\ldots,j_k) \in D_{a+N,a+2N}$. Fix some $n-a-N \leq j \leq n$. For any $(j,j_2,\ldots,j_k) \in D_{a+N,a+2N}$ we have 
\begin{align*}
    T\Big[j,j_2,\ldots,j_k\Big] &= \min_{1 \leq  \ell \leq k}\Bigg\{\min_{a-J_{\ell}\leq i_{\ell}<a+N-J_{\ell}}\Big\{T\Big[j,j_2,\ldots,j_{\ell-1},i_{\ell},j_{\ell+1},\ldots,j_{k}\Big]+w_{\ell}\Big[j,j_2,\ldots,j_{k},i_{\ell}\Big]\Big\}\Bigg\}\\
    &= \min\Bigg\{\min_{2\leq \ell \leq k}\Big\{\min_{1\leq i_{\ell}<j_{\ell}}T\Big[j,j_2,\ldots,j_{\ell-1},i_{\ell-1},j_{\ell+1},\ldots,j_k\Big]+w_{\ell}\Big[j,j_2,\ldots,j_{\ell-1},i_{\ell},j_{\ell+1},\ldots,j_k\Big]\Big\},\\
    &\qquad \min_{1\leq i_1 < j_1}\Big\{T\Big[i_1,j_2,\ldots,j_k\Big]+w_1\Big[i_1,j_2,\ldots,j_k\Big]\Big\}
    \Bigg\}
\end{align*} We can compute the first term in the minimum using a $[\Static]\kminusoneLWS$ algorithm with time $O(N^{2-\varepsilon}\cdot n^{k-2})$ and the second term using a $\LWS$ algorithm with time at most $O(N^2\cdot n)$. This is because after we fix $j$, $w[j,\ldots]$ still has rank $d$ but one less dimension. Repeat this process for all the $j$ on all $k$ coordinates to solve $[\Static]\kLWS$, and the total running time is at most
\[
kn\cdot \Big(O(N^2\cdot n)+O(N^{2-\varepsilon}\cdot n^{k-2})\Big) = O(N^{2-\delta}\cdot n^{k-1})
\] for some $\delta>0$.
\end{proof}

\subsection{Slice Rank $\twoLWS$}
\label{sec: slice rank twoLWS}

In this section, we show that $\twoLWS,[\Static]\twoLWS$ with even slice rank $3$ is $\APSP$-hard, but $\twoLWS,[\Static]\twoLWS$ with slice rank $1$ is truly sub-cubic.

\begin{theorem}
\label{thm: APSP to twoLWS}
Assuming the $\APSP$ conjecture, there is no truly sub-cubic algorithm for $\twoLWS$ with slice rank $3$.
\end{theorem}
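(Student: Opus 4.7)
The plan is to reduce from the $\NegativeTriangle$ problem, which is sub-cubic equivalent to $\APSP$. Given $n \times n$ matrices $A, B, C$ with integer entries in $\{-W,\ldots,W\}$, I would construct a $\twoLWS$ instance on an $N \times N$ grid with $N = 2n+2$ whose weight tensors $w_1, w_2$ have slice rank at most $3$, such that the value $T[N,N]$ determines whether $\min_{i,j,k}\{A[i,j]+B[j,k]+C[i,k]\} < 0$.

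The weights are designed so that an intended ``template'' path $(1,1) \to (1,j+1) \to (i+1,j+1) \to (i+1,n+1+k) \to (N,n+1+k) \to (N,N)$ accumulates $A[i,j]$ at step 2, $B[j,k]$ at step 3, and $C[i,k]$ at step 4, while every other transition costs a large default value $L$ (say $L = 10W$). Concretely I would set
\begin{align*}
w_1[p,q,p'] &= (A[p-1,q-1]-L)\cdot\id{p'=1}\id{p\in[2,n+1]}\id{q\in[2,n+1]} \\
&\quad + (C[p'-1,q-n-1]-L)\cdot\id{p=N}\id{q\in[n+2,2n+1]}\id{p'\in[2,n+1]} + L, \\
w_2[p,q,q'] &= (B[q'-1,q-n-1]-L)\cdot\id{p\in[2,n+1]}\id{q\in[n+2,2n+1]}\id{q'\in[2,n+1]} + L.
\end{align*}
Each additive term is a slice-rank-$1$ tensor: the first two terms of $w_1$ factor as a vector in one coordinate ($p'$ and $p$ respectively) times a matrix in the remaining two, and the constant $L$ is trivially slice-rank-$1$. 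So $w_1$ has slice rank at most $3$ and $w_2$ has slice rank at most $2$.

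The correctness claim is that $T[N,N] = 2L + \min\{0,\,\min_{i,j,k}(A[i,j]+B[j,k]+C[i,k])\}$, so subtracting $2L$ recovers the negative-triangle decision. For the upper bound, the template path supplies $A[i,j]+B[j,k]+C[i,k]+2L$ for each choice of $(i,j,k)$, and the short two-edge path $(1,1)\to(1,N)\to(N,N)$ supplies $2L$. The main obstacle is the matching lower bound: showing no other path achieves a smaller weight. The key structural observation is that the three ``specific'' (non-default) cases can each occur at most once along any monotone path, because their coordinate constraints (source row $p'=1$ for the $A$-edge, target row $p=N$ for the $C$-edge, and transitioning from a column in $[2,n+1]$ to a column in $[n+2,2n+1]$ for the $B$-edge) cannot be revisited once passed. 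A path using all three specific edges must use at least five edges in the template configuration, which forces the indices $(i,j,k)$ to be consistent across the three specific edges through shared source/target coordinates. Any deviation inserts an additional default edge, contributing an extra $L > 3W$ that exceeds the maximum savings from choosing a different specific edge. Consequently $T[N,N] < 2L$ iff a negative triangle exists, and a truly sub-cubic algorithm for $\twoLWS$ with slice rank $3$ would yield one for $\NegativeTriangle$, contradicting the $\APSP$ conjecture.
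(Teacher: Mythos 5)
Your proposal is correct, but it follows a genuinely different route from the paper's proof. The paper reduces from the single--weight--function form of $\NegativeTriangle$ (a graph with one weight function $w$): it uses the \emph{same} tensor for both $w_1$ and $w_2$, namely $\alpha[i,j,k] = f_1(i,k)g_1(j)+f_2(i,j)g_2(k)+f_3(k,j)g_3(i)$ on a $2n\times 2n\times 2n$ grid, arranged so that $\alpha$ is identically $0$ outside the region $i,j\in[n+1,2n],\,k\in[1,n]$ and equals the triangle weight inside it; the correctness argument is a clean induction showing that $T[i,j]$ for $i,j\in[n+1,2n]$ equals the minimum triangle weight over a growing index range. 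You instead reduce from the tripartite (three-matrix) negative-triangle formulation on a $(2n+2)^2$ grid, use two different tensors $w_1, w_2$, and replace ``zero outside the useful region'' with a large additive penalty $L=10W$ that makes every non-template transition expensive; correctness then follows from an explicit combinatorial analysis of monotone paths. The cost structure you set up does work: the first and last edges of any path are necessarily default; each of the $A$-, $B$-, $C$-type edges can appear at most once and, when all three appear, order is forced to $A,B,C$; a $5$-edge path with all three specific edges automatically shares coordinates consistently and contributes exactly $2L + A[i,j]+B[j,k]+C[i,k]$; and any other configuration has at least $3$ default edges, so its cost is at least $3L-3W > 2L$ (one should also separately rule out $4$-edge paths with two specific edges, which is easy since the coordinate constraints force at least five edges whenever two specific types are used). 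Both constructions realize slice rank $3$; the paper's zero-padding plus a single shared tensor gives a shorter inductive proof, while your penalty-based gadget reduction is more in the flavor of standard fine-grained reductions and makes the intended ``triangle path'' visually explicit. One small thing worth making precise in a full write-up is the quantitative claim ``contributing an extra $L>3W$ exceeds the maximum savings'': the cleanest version is simply that any path with more than $5$ edges, or fewer than $3$ specific edges, costs at least $3L-3W > 2L$, so only the $2$-edge default path and the $5$-edge template paths matter.
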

\begin{proof}
We reduce $\NegativeTriangle$ to $\twoLWS$ with slice rank $3$. Given an undirected graph $G = (V,E)$ where $V = \{v_1,\ldots,v_n\}$, we use $w$ to denote the weight function of an edge or a triangle. For convenience we let $w(v_a,v_a) = \infty$. We define both our tensors to be 
\[
\alpha[i,j,k] = f_1(i,k)\cdot g_1(j)+f_2(i,j)\cdot g_2(k)+f_3(k,j)\cdot g_3(i),
\] where
\[
f_1(i,k) = 
\begin{cases}
    w(v_{i-n},v_k) \textup{ if } i \in [n+1,2n], k \in [1,n]\\
    0 \hspace{1.5cm}\textup{ otherwise}
\end{cases},
g_1(j) = 
\begin{cases}
    1 \textup{ if } j \in [n+1,2n]\\
    0 \textup{ otherwise}
\end{cases}
\]
\[
f_2(i,j) = 
\begin{cases}
    w(v_{i-n},v_{j-n}) \textup{ if } i,j \in [n+1,2n]\\
    0 \hspace{2cm}\textup{ otherwise}
\end{cases},
g_2(k) = 
\begin{cases}
    1 \textup{ if } k \in [1,n]\\
    0 \textup{ otherwise}
\end{cases}
\]
\[
f_3(k,j) = 
\begin{cases}
    w(v_k,v_{j-n}) \textup{ if } k \in [1,n], j \in [n+1,2n]\\
    0 \hspace{1.5cm}\textup{ otherwise}
\end{cases},
g_3(i) = 
\begin{cases}
    1 \textup{ if } i \in [n+1,2n]\\
    0 \textup{ otherwise}.
\end{cases}
\] We claim that running $\twoLWS$ with $\alpha$ will solve the $\NegativeTriangle$ instance by $T[2n,2n]$ being the minimum weight of all triangles. In fact, we prove that for all $n+1 \leq i,j \leq 2n$, $T[i,j]$ is the minimum weight of all triangles $v_a,v_b,v_c$ such that $1 \leq a \leq i-n, 1 \leq b \leq j-n, 1 \leq c \leq n$.

Observe that:
\begin{itemize}
    \item When $i,j,k \in [1,n]$, we have $\alpha[i,j,k] = 0$. Therefore, $T[i,j] = 0$ for all $1 \leq i,j \leq n$.
    \item When $i \in [n]$ and $j \in [n+1,2n]$, $\alpha[i,j,k] = 0$. Therefore, $T[i,j] = 0$ when $i \in [n], j\in [n+1,2n]$.
    \item When $j \in [n]$ and $i \in [n+1,2n]$, $\alpha[i,j,k] = 0$. Therefore, $T[i,j] = 0$ when $j \in [n], j \in [n+1,2n]$.
    \item When $i,j \in [n+1,2n]$, $\alpha[i,j,k] = w(v_{i-n},b_{j-n},v_k)$ if $k \in [n]$ and $\alpha[i,j,k] = 0$ if $k \in [n+1,2n]$.
\end{itemize} Finally, we use induction on $(i,j)$ to prove the claim. When $i = j = n+1$, we have
\[
T[n+1,n+1] = \min_{1 \leq k \leq n}\Big\{\alpha[n+1,n+1,k]\Big\} = \min_{1\leq  k\leq n}\Big\{w(v_{i-n},v_{j-n},v_k)\Big\} = \infty.
\] When $i = n+2,j = n+1$, 
\[
T[n+2,n+1] = \min_{1 \leq k \leq n}\Big\{\alpha[n+2,n+1,k]\Big\} = \min_{1 \leq k \leq n}\Big\{w(v_2,v_1,v_k)\Big\}.
\] Similarly, when $i = n+1,j= n+2$, 
\[
T[n+1,n+2] = \min_{1\leq k \leq n}\{w(v_1,v_2,v_k)\}.
\] Now for general $(i,j) \in [n+1,2n]^2$, we have
\begin{align*}
    T[i,j] &= \min\Big\{\min_{1 \leq k<i}\Big\{T[k,j]+\alpha[i,j,k]\Big\},\min_{1\leq k <j}\Big\{T[i,k]+\alpha[i,j,k]\Big\}\Big\}\\
    &= \min\Big\{\min_{1 \leq k\leq n}\Big\{T[k,j]+\alpha[i,j,k]\Big\},\min_{n+1 \leq k<i}\Big\{T[k,j]+\alpha[i,j,k]\Big\},\min_{1\leq k \leq n}\Big\{T[i,k]+\alpha[i,j,k]\Big\},\\
    &\quad \min_{n+1\leq k <j}\Big\{T[i,k]+\alpha[i,j,k]\Big\}\Big\}\\
    &= \min\Big\{\min_{1\leq k\leq n}\Big\{w(v_{i-n},v_{j-n},v_k)\Big\},\min_{n+1 \leq k<i}\Big\{T[k,j]\Big\},\min_{n+1\leq k<j}\Big\{T[i,k]\Big\}\Big\}.
\end{align*} By induction hypothesis we know we are taking the minimum over all triangles $(v_a,v_b,v_c)$ such that 
\begin{align*}
(a,b,c)& \in [1,i-n-1] \times [1,j-n] \times [n] \bigcup [1,i-n]\times [1,j-n-1] \times [n] \bigcup \{i-n\}\times \{j-n\} \times [n] \\
&= [1,i-n] \times [1,j-n] \times [n].    
\end{align*} Therefore our claim is proved.

\end{proof}

Now we show that $[\Static]\twoLWS$ with slice rank $1$ can be solved in truly sub-cubic time (which implies $\twoLWS$ with slice rank $1$ is also truly sub-cubic) but there is no truly sub-cubic algorithm for $\twoLWS$ with slice rank $3$ assuming $\APSP$ conjecture.

\begin{theorem}
\label{thm: static twoLWS with slice rank 1 is truly sub-cubic}
$[\Static]\twoLWS_{n,N}$ with slice rank $1$ can be solved in $O(n^2\cdot N^{1-\varepsilon})$ for some $\varepsilon>0$.
\end{theorem}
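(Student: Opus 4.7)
The plan is a case analysis on which of the three possible slice-rank-$1$ decompositions applies to each of the input tensors $w_1$ and $w_2$. Since the two inner minimizations in the $[\Static]\twoLWS$ recurrence are symmetric under swapping the roles of the two target coordinates, I focus on the first branch $\min_{k}\{T[k,j]+w_1[i,j,k]\}$, noting that the same argument handles the $w_2$-branch after exchanging $i$ and $j$. Slice rank $1$ of $w_1$ means that one can write $w_1[i,j,k]=\alpha(x)\beta(y,z)$ with $\{x,y,z\}=\{i,j,k\}$, giving three sub-cases.

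In the two ``easy'' sub-cases $w_1[i,j,k]=\alpha(i)\beta(j,k)$ and $w_1[i,j,k]=\alpha(k)\beta(i,j)$, I would apply the lower-envelope (convex-hull) trick. For each fixed $j$ in the target band, the $k$-th summand becomes a line whose slope and intercept depend only on $(k,j)$: in the first sub-case $\ell_k(t)=T[k,j]+t\cdot\beta(j,k)$ evaluated at $t=\alpha(i)$, and in the second sub-case $\ell_k(t)=T[k,j]+\alpha(k)\cdot t$ evaluated at $t=\beta(i,j)$. Constructing the lower envelope of these $O(N)$ lines takes $O(N\log N)$ time and each of the $O(N)$ subsequent queries costs $O(\log N)$. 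Summing over all $O(n)$ choices of $j$ gives $\widetilde{O}(nN)$ total, which is comfortably within the $O(n^2N^{1-\varepsilon})$ budget.

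The hard sub-case is $w_1[i,j,k]=\alpha(j)\beta(i,k)$, where the would-be slope $\beta(i,k)$ depends on the query index $i$ so that the envelope trick fails. My plan here is to reduce to $(\min,+)$ matrix multiplication with polynomially bounded entries. Rewriting $T'[i,j]=\min_{k}\{T[k,j]+\alpha(j)\beta(i,k)\}$ and bucketing the target columns $j$ by the value of $\alpha(j)\in\{-W,\ldots,W\}$, within each bucket the effective matrix $\alpha(j)\beta$ is the same for every column, so the bucketed computation is a single rectangular $(\min,+)$ MM between $\beta$ (of size $n\times N$) and the corresponding sub-matrix of $T$. Since all relevant entries are bounded by $\poly(n)$, the Alon--Galil--Margalit polynomial-method algorithm computes each such product in truly sub-cubic time.

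The main obstacle will be making the second-case bound scale as $O(n^2 N^{1-\varepsilon})$ uniformly in $N$: when $N$ is small the naive $O(nN^2)$ computation already lies within budget, whereas for $N$ close to $n$ one needs the bucketed $(\min,+)$-MM approach, possibly combined with a further split according to the sign and magnitude of $\alpha(j)$ so that the contribution of each bucket is controlled. Choosing an appropriate cutoff between these two regimes, and applying the symmetric argument to the $w_2$-branch before taking the entry-wise minimum of the two partial results, completes the proof.
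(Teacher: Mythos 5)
Your handling of the two ``easy'' sub-cases is correct and, up to presentation, is the same as the paper's argument. The paper fixes the target coordinate $j$ for the branch $\min_k\{T[k,j]+w_1[i,j,k]\}$ and observes that in those two sub-cases the matrix $\tilde w[k,i]:=w_1[i,j,k]$ (for fixed $j$) has rank $1$; it then applies the known sub-quadratic algorithm for $[\Static]\LWS$ with rank $1$, once per value of $j$. The lower-envelope data structure you describe is exactly what that rank-$1$ $[\Static]\LWS$ algorithm does internally, so for these two sub-cases the proposals coincide.

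The third (``hard'') sub-case $w_1[i,j,k]=\alpha(j)\beta(i,k)$ is where your argument has a genuine gap. Bucketing the target columns $j$ by the value $\alpha(j)\in\{-W,\dots,W\}$ and then calling a fast $\minplus$ routine on each bucket does not yield a polynomial speedup, for two independent reasons. First, the entries of $T$ and $\beta$ can be as large as $W=\poly(n)$; the Alon--Galil--Margalit / polynomial-method algorithms for $\minplus$ that you invoke require entries of magnitude $n^{o(1)}$, and with polynomial-size entries the best known algorithms (Williams, Chan--Williams) give only a $2^{\Theta(\sqrt{\log n})}$ shaving, never an $n^{\varepsilon}$ one. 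Second, even granting a sub-cubic $\minplus$ oracle on each bucket, the total work across the $\Theta(W)$ buckets is $\Theta(n^2N)$ up to lower-order factors, not $O(n^2N^{1-\varepsilon})$. The proposed remedies (a cutoff against the naive $O(nN^2)$ bound, and splitting by sign/magnitude of $\alpha$) do not repair this: the obstruction is that $T[k,j]$ is adversarial, so for $N=\Theta(n)$ the quantity $\min_k\{T[k,j]+c\cdot\beta(i,k)\}$ with a single fixed scalar $c\neq 0$ already embeds a full $n\times n\times n$ $\minplus$ instance. Any route that passes through an unconditional truly sub-cubic $\minplus$ algorithm cannot close this case.

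This is precisely the sub-case where the ``fix a coordinate and run constant-rank $[\Static]\LWS$'' template is delicate: fixing $j$ leaves the matrix $\alpha(j_0)\cdot\beta(i,k)$, whose rank is that of $\beta$, and fixing $i$ does not turn $T[k,j]$ into a one-dimensional table. You should treat this sub-case as unresolved rather than assert that a cutoff completes the proof. As written, your argument establishes the bound only for two of the three slice-rank-$1$ forms of $w_1$ (and symmetrically for two of the three forms of $w_2$), so the proof is incomplete.
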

\begin{proof}
The idea is similar to the proof of \Cref{thm: static kLWS to static k-1 LWS}. We reduce $[\Static]\twoLWS$ with slice rank $1$ to $[\Static]\LWS$ with rank $1$, for which we know there exists an $O(N^{1-\varepsilon}\cdot n)$ time algorithm. 

Given an instance of $[\Static]\twoLWS$ with slice rank $1$, tensors $w_1,w_2$, there are 3 possibilities for $w_1$:
\begin{itemize}
    \item $w_1[i,j,k] = f(i,k)\cdot g(j)$ for some $f,g$. In this case, we fix an $i \in D_{a+N,a+2N}$ such that $w_1[i,j,k]$ becomes a matrix with rank $2$. Now we can run $[\Static]\LWS$ algorithm with rank $2$ (at most $n$ times) to compute 
    \[
    \min_{1 \leq k'<j'} \Big\{T[i,k']+w_1[i,j',k']\Big\}
    \] for all $(i,j') \in D_{a+N,a+2N}$ in time $O(N^{1-\varepsilon}n)$ for some $\varepsilon>0$. Doing this for all $i$, we can compute  
    \[
    \min_{1 \leq k<j} \Big\{T[i,k]+w_1[i,j,k]\Big\}
    \] for all $(i,j) \in D_{a+N,a+2N}$ in time at most $O(N^{1-\varepsilon}\cdot n^2)$.

    \item $w_1[i,j,k] = f(i,j)\cdot g(k)$ for some $f,g$. This is similar to the previous case but instead of fixing $i$, we fix $j$. Similarly, we can compute 
    \[
    \min_{1\leq k<j}\Big\{T[i,k]+w_1[i,j,k]\Big\}
    \] for all $(i,j)\in D_{a+N,a+2N}$ in time $O(N^{1-\varepsilon}\cdot n^2)$. 
    
    \item $w_1[i,j,k] = f(j,k)\cdot g(i)$ for some $f,g$. Again we can run $[\Static]\LWS$ for each $j$ and compute
    \[
    \min_{1\leq k<j}\Big\{T[i,k]+w_1[i,j,k]\Big\}
    \] for all $(i,j)\in D_{a+N,a+2N}$ in time $O(N^{1-\varepsilon}\cdot n^2)$. 
\end{itemize} The analysis for $w_2[i,j,k]$ is the same, and thus we can compute
\[
\min_{1\leq k<i}\Big\{T[k,j]+w_2[i,j,k]\Big\}
\] for all $(i,j)\in D_{a+N,a+2N}$ in time $O(N^{1-\varepsilon}\cdot n^2)$. Finally take the pairwise minimum to give
\[
\min\Big\{\min_{1\leq k<j}\Big\{T[i,k]+w_1[i,j,k]\Big\},\min_{1\leq k<i}\Big\{T[k,j]+w_2[i,j,k]\Big\}\Big\}
\] for all $(i,j)\in D_{a+N,a+2N}$.
\end{proof}

\begin{theorem}
\label{thm: twoLWS slice rank 1 is truly sub-cubic}
$\twoLWS$ with slice rank 1 is truly sub-cubic.    
\end{theorem}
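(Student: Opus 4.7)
The plan is to combine the reduction from $\kLWS$ to $[\Static]\kLWS$ (\Cref{thm: kLWS to Static kLWS}) specialized to $k=2$ with the truly sub-cubic static algorithm just established in \Cref{thm: static twoLWS with slice rank 1 is truly sub-cubic}. The reduction operates purely on the DP table $T$ and treats the weight tensors $w_1, w_2$ as fixed inputs throughout, so any structural property of the tensors, including slice rank, is preserved in every recursive subproblem. Thus the whole argument boils down to confirming that the static routine has the right shape to plug into the divide-and-conquer template.

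First, I would re-examine the running time of the static algorithm in a form suitable for the template. \Cref{thm: kLWS to Static kLWS} solves to $O(n^{3-\delta})$ whenever the static subroutine runs in time $O(N^{2-\varepsilon}\cdot n)$ on a band of width $N$. In the proof of \Cref{thm: static twoLWS with slice rank 1 is truly sub-cubic}, each of the three slice-rank-$1$ factorizations is handled by fixing a single coordinate and calling $[\Static]\LWS$ with rank at most $2$ on intervals of length $N$. Each such call runs in truly sub-quadratic time $O(N^{2-\varepsilon})$ by the known rank-$O(1)$ $[\Static]\LWS$ algorithm, and only the $O(\min(N, n))$ values of the fixed coordinate whose corresponding row/column intersects both the source band $D_{a,a+N}$ and target band $D_{a+N,a+2N}$ contribute nontrivially. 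Accounting this way gives total static cost $O(\min(N,n)\cdot N^{2-\varepsilon}) \le O(N^{2-\varepsilon}\cdot n)$, matching the required form.

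Next, I would feed this into the divide-and-conquer recurrence proved in \Cref{thm: kLWS to Static kLWS},
\[
T_{\mathcal{S}}(n, N) \le 2\, T_{\mathcal{S}}(n, N/2) + O\bigl((N/2)^{2-\varepsilon}\cdot n\bigr) + O(Nn),
\]
and observe that since $2-\varepsilon > 1$ the work per recursion level forms a decreasing geometric series dominated by the root, solving to $T_{\mathcal{S}}(n, N) = O(N^{2-\varepsilon}\cdot n)$. Setting $N = \Theta(n)$ at the top level then yields an $O(n^{3-\delta})$ algorithm for $\twoLWS$ with slice rank $1$, establishing the theorem.

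The main obstacle is the bookkeeping in the second step: one must ensure that the $N$-exponent in the static routine is strictly greater than $1$ so that the divide-and-conquer recursion is root-dominated rather than leaf-dominated, which is exactly what would otherwise cause the total cost to degrade to $\Theta(n^3)$. Once this accounting is made explicit, the theorem is an immediate consequence of combining the two preceding results, with no new combinatorial ideas required.
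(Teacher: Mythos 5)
Your proposal is essentially the paper's proof: apply the divide-and-conquer reduction of \Cref{thm: kLWS to Static kLWS} (which preserves slice rank since it only manipulates the table $T$ and never touches the tensors) to the static algorithm of \Cref{thm: static twoLWS with slice rank 1 is truly sub-cubic}. What you add, usefully, is the observation that the running time as literally stated in \Cref{thm: static twoLWS with slice rank 1 is truly sub-cubic}, namely $O(n^2 N^{1-\varepsilon})$, does \emph{not} directly fit the $O(N^{2-\varepsilon}\cdot n^{k-1})$ template of \Cref{thm: kLWS to Static kLWS} for all $N$ (it only dominates once $N$ is close to $n$), so a tighter re-accounting of the static routine's cost is needed to make ``immediately follows'' literally true.

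One small inaccuracy: you assert that only $O(\min(N,n))$ values of the fixed coordinate $i$ meet both bands. In fact, when $a+N$ is near $n$, up to $\Theta(n)$ rows intersect both $D_{a,a+N}$ and $D_{a+N,a+2N}$. Your stated bound $O(nN^{2-\varepsilon})$ still holds, but for a different reason: with $m_i$ denoting the width of the interval for row $i$, one has $m_i\le N$ and $\sum_i m_i = O(Nn)$, so $\sum_i m_i^{2-\varepsilon}\le N^{1-\varepsilon}\sum_i m_i = O(nN^{2-\varepsilon})$. With that correction, the cost has the form $O(N^{2-\varepsilon}n)$ and plugs directly into the recurrence of \Cref{thm: kLWS to Static kLWS}, giving the truly sub-cubic bound as you conclude.
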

\begin{proof}
Immediately follows from \Cref{thm: static twoLWS with slice rank 1 is truly sub-cubic} and the fact that our reduction from $\twoLWS$ to $[\Static]\twoLWS$ in \Cref{thm: kLWS to Static kLWS} preserves slice rank.
\end{proof}

\section{Polygon Triangulation $(\twoLWS^{\PT})$}

\label{sec: polygon triangulation}

In this section, we discuss the polygon triangulation problem $\twoLWS^{\PT}$ and its connections with $\twoLWS$. It was shown in \cite{HS02,HS82,HS84,LG21}, that if $w[i,j,k] = x_i\cdot x_j\cdot x_k$ for all $i,j,k$ with $x_i,x_j,x_k >0$ for all $i$, then $\twoLWS^{\PT}$ can be solved in $O(n^2)$ time. We establish several conditional hardness results of $\twoLWS^{\PT}$ based on $\SETH$ and $\APSP$ conjecture. Namely, $\twoLWS$ where $w_1 = w_2$ can be reduced to $\twoLWS^{\PT}$ with rank/slice rank unchanged, and thus finding the optimal triangulation for certain weight functions (rank $2^{O(\log^{*}n)}$ or slice rank $3$) is hard under $\SETH$. 

\subsection{Low Rank Polygon Triangulation is $\SETH$-hard}

\begin{theorem}
\label{thm: twoLWS reduces to PT}
$(\twoLWS\rightarrow \twoLWS^{\PT})$. There exists an $O(nd)$ time reduction from $\twoLWS_n$ with rank $d$, $w_1 = w_2 = w$, to $\twoLWS^{\PT}_{2n}$ with rank $d$.
\end{theorem}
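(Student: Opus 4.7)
The plan is to embed the $n \times n$ $\twoLWS$ table into one quadrant of the $2n \times 2n$ $\twoLWS^{\PT}$ table via the correspondence $T_{LWS}[p,q] \longleftrightarrow T_{PT}[n-p+1,\, n+q]$, and to engineer the PT weight tensor $w_{PT}$ so that the splitting recurrence of $\twoLWS^{\PT}$ at $(n-p+1, n+q)$ collapses to the two branches of the $\twoLWS$ recurrence at $(p,q)$. Under this map, $T_{LWS}[n,n]$ is sent to $T_{PT}[1, 2n]$, which is exactly the output of $\twoLWS^{\PT}_{2n}$, and the $\twoLWS$ base case $T_{LWS}[1,1]=0$ matches the PT base case $T_{PT}[n,n+1]=0$ (since $b-a=1$).

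To collapse the recurrence, I would force $T_{PT}[a,b] = 0$ whenever $a$ and $b$ lie on the same side of $n$ (both $\le n$ or both $> n$). Then for $a \le n < b$, the PT recurrence splits on whether the split point $k$ is in $(a, n]$ or $(n, b)$: in the first range the left child $T_{PT}[a,k]$ vanishes while the right child $T_{PT}[k, b]$ reconstructs a $\twoLWS$ value (the ``decrease first coordinate'' branch), and in the second range the right child vanishes while the left child reconstructs a $\twoLWS$ value (the ``decrease second coordinate'' branch). Setting $w_{PT}[a, b, c] = w[n-a+1,\, b-n,\, n-c+1]$ when $c \le n$ and $w_{PT}[a, b, c] = w[n-a+1,\, b-n,\, c-n]$ when $c > n$ (with $a \le n < b$) then matches the desired $\twoLWS$ weight term in each branch.

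The main obstacle is preserving rank $d$, since the desired formula for $w_{PT}$ looks piecewise. Starting from the input decomposition $w[i,j,k] = \sum_{\ell=1}^d x_{i,\ell}\, y_{j,\ell}\, z_{k,\ell}$, I would define vectors $X_a, Y_b, Z_c \in \R^d$ for $a,b,c \in [2n]$ by: $X_a = x_{n-a+1}$ for $a \le n$ and $X_a = \mathbf 0$ otherwise; $Y_b = y_{b-n}$ for $b > n$ and $Y_b = \mathbf 0$ otherwise; and $Z_c = z_{n-c+1}$ for $c \le n$, $Z_c = z_{c-n}$ for $c > n$. The zeros in $X$ and $Y$ automatically kill all weight entries with all three indices on the same side of $n$, which inductively yields the zero-subtable property, while for $a \le n < b$ the piecewise definition of $Z_c$ reproduces precisely the two weight formulas above. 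Crucially, the same $x$ and $z$ vectors are reused across both branches of the $\twoLWS$ recurrence, which is valid \emph{exactly} because $w_1 = w_2$ in the hypothesis. The construction outputs three vector lists of length $2n$ in $\R^d$ and so runs in $O(nd)$ time; correctness then follows by induction on $b-a$ using the case split above.
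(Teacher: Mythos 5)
Your proposal is correct and takes essentially the same approach as the paper's proof: embed the $\twoLWS$ table into the $(a \le n,\ b > n)$ quadrant of the PT table, zero out the $X$-vectors on the upper half and the $Y$-vectors on the lower half so that entries with $a,b$ on the same side of $n$ vanish, and double up the $Z$-vectors so that the two halves of the PT split range reproduce the two branches of the $\twoLWS$ recurrence (using $w_1 = w_2$). The only difference is cosmetic: you work with the original $\twoLWS$ formulation (answer at $T[n,n]$) and build the flip $p \mapsto n-p+1$ into your $X$- and $Z$-vectors, whereas the paper first re-indexes $\twoLWS$ so its first coordinate is decreasing (answer at $T[1,n]$), making the corresponding vectors $\mu'_i = \mu_i$ and $\tau'_k = \tau_k$ for $i,k \le n$ without the flip.
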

\begin{proof}
Given an $\twoLWS_n$ instance $T$ with rank $d$ tensor $w[i,j,k] = \langle \mu_i,\sigma_j,\tau_k\rangle$ and
recurrence relation
\[
T[i,j] = \min\Big\{\min_{1\leq k<j}\Big\{T[i,k]+w[i,j,k]\Big\},\min_{i<k\leq n}\Big\{T[k,j]+w[i,j,k]\Big\}\Big\},
\] we want to compute $T[1,n]$. We construct an $\twoLWS^{\PT}_{2n}$ instance $T'$ as follows.
\[
\mu_i' = 
\begin{cases}
    \mu_i \textup{ if } i \in [1,n]\\
    0^d \textup{ if } i \in [n+1,2n],
\end{cases}
\sigma_j' = 
\begin{cases}
    0^d \textup{ if } j \in [1,n]\\
    \sigma_{j-n} \textup{ if } j \in [n+1,2n],
\end{cases}
\tau_k' = 
\begin{cases}
    \tau_k \textup{ if } k \in [1,n]\\
    \tau_{k-n} \textup{ if } k \in [n+1,2n],
\end{cases}
\] and define a $2n \times 2n \times 2n$ tensor as $w'[i,j,k]=\langle \mu_i',\sigma_j',\tau_k'\rangle$.
We make a few observations:
\begin{itemize}
    \item $w'[i,j,k] = 0$ when $(i,j) \notin [1,n]\times [n+1,2n]$, and thus $T'[i,j] = 0$ for all $(i,j) \notin [1,n] \times [n+1,2n]$.
    \item When $(i,j)\in [n]\times [n+1,2n]$, $w'[i,j,k] = w[i,j-n,k]$ for all $k \in [n]$ and $w'[i,j,k] = w[i,j-n,k-n]$ for all $k \in [n+1,2n]$.
\end{itemize} We now prove that for all $(i,j) \in [n]\times [n+1,2n]$, we have $T'[i,j] = T[i,j-n]$, which will suffice since when $i = 1, j = 2n$, we have $T'[1,2n] = T[1,n]$.

We proceed by induction. The base case is $T'[n,n+1] = 0$. Now for each $(i,j) \in [1,n] \times [n+1,2n]$, by induction hypothesis we have
\begin{align*}
    T'[i,j] &= \min_{i<k<j} \Big\{T'[i,k]+T'[k,j]+w'[i,j,k]\Big\}\\
    &= \min\Big\{\min_{i<k \leq n} \Big\{T'[i,k]+T'[k,j]+w'[i,j,k]\Big\},\min_{n+1 \leq k <j}\Big\{T'[i,k]+T'[k,j]+w'[i,j,k]\Big\}\Big\}\\
    &= \min\Big\{\min_{i<k \leq n} \Big\{T'[k,j]+\langle\mu_i',\sigma_j',\tau_k'\rangle\Big\},\min_{1 \leq k <j-n}\Big\{T'[i,k-n]+\langle\mu_i',\sigma_j',\tau_k'\rangle\Big\}\Big\}\\
    &= T[i,j-n].
\end{align*} The time for our reduction is $O(nd)$, so the proof is complete.
\end{proof}

\begin{corollary}
\label{cor: PT hardness rank}
Under $\SETH$, there is no truly sub-cubic algorithm for $\twoLWS^{\PT}$ with weight function whose rank is $2^{O(\log^{*}n)}$ or above.
\end{corollary}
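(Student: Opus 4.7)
The plan is to chain together the reductions already established in the paper: $\SAT \to \threeMinIP \to \twoLWS \to \twoLWS^{\PT}$, carefully tracking the rank of the underlying tensor and making sure the intermediate $\twoLWS$ instance satisfies the extra hypothesis $w_1 = w_2$ required by \Cref{thm: twoLWS reduces to PT}.

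More concretely, suppose for contradiction that there were a truly sub-cubic algorithm for $\twoLWS^{\PT}$ on instances whose weight tensor has rank $2^{O(\log^{*} n)}$. By \Cref{thm: twoLWS reduces to PT}, this would yield a truly sub-cubic algorithm for $\twoLWS_n$ instances with rank $2^{O(\log^{*} n)}$ satisfying $w_1 = w_2$. Applying \Cref{thm: kMinIP to k-1 D LWS} with $k=3$, this would in turn give an algorithm for $\threeMinIP_{n,2^{O(\log^{*} n)}}$ running in time $O(n^{3-\varepsilon})$, which contradicts \Cref{thm: SAT reduces to kMinIP} (and hence $\SETH$).

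The one thing that requires a small check, and which I would spell out carefully, is that the $\twoLWS$ instance produced by the reduction in \Cref{thm: kMinIP to k-1 D LWS} (specialized to $k=3$) actually satisfies $w_1 = w_2$. Inspecting that construction, the weight tensors $w_1, w_2$ are both defined by the same formula $w_{\ell}[i_1,i_2,i_3] = \langle y_{1,i_1}, y_{2,i_2}, y_{3,i_3}\rangle$, independent of $\ell$, so indeed $w_1 = w_2$ and the resulting tensor has rank at most $d = 2^{O(\log^{*} n)}$ (as it is a single inner-product tensor of $d$-dimensional vectors). The reduction of \Cref{thm: twoLWS reduces to PT} preserves the rank, so the $\twoLWS^{\PT}$ instance handed to the hypothetical fast algorithm still has rank $2^{O(\log^{*} n)}$.

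The main (and essentially only) obstacle is this bookkeeping: verifying that the $w_1 = w_2$ hypothesis of \Cref{thm: twoLWS reduces to PT} is satisfied by the output of \Cref{thm: kMinIP to k-1 D LWS}, and that the rank and problem-size parameters compose correctly so that a truly sub-cubic algorithm for $\twoLWS^{\PT}$ at the end of the chain translates, via polynomial losses in $\varepsilon$ at each step, to a truly sub-quadratic-in-size (equivalently, $O(n^{3-\delta})$ in the $\twoLWS$ size parameter) algorithm for $\threeMinIP_{n,2^{O(\log^{*} n)}}$, which is what the $\SETH$-based lower bound of \Cref{thm: SAT reduces to kMinIP} rules out. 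Once this composition is checked, the corollary follows immediately.
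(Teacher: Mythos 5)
Your proposal is correct and takes essentially the same approach as the paper: both observe that the reduction of \Cref{thm: kMinIP to k-1 D LWS} produces $w_1 = w_2$ (since both tensors are defined by the identical formula $w_\ell[i_1,i_2,i_3] = \langle y_{1,i_1}, y_{2,i_2}, y_{3,i_3}\rangle$), so \Cref{thm: twoLWS reduces to PT} applies and rank is preserved throughout the chain $\SAT \to \threeMinIP \to \twoLWS \to \twoLWS^{\PT}$.
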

\begin{proof}
When we reduce $\threeMinIP$ to $\twoLWS$ in \Cref{thm: kMinIP to k-1 D LWS}, the tensors $w_{\ell}$ that we use are all the same, so \Cref{thm: kMinIP to k-1 D LWS} immediately gives a reduction from $\threeMinIP$ to $\twoLWS$ with $w_1 = w_2$ (preserving rank), which further reduces to $\twoLWS^{\PT}$ (preserving rank) by \Cref{thm: twoLWS reduces to PT}.
\end{proof}

\subsection{Constant Slice Rank Polygon Triangulation is $\APSP$-hard}

In fact, the we can modify the reduction in \Cref{thm: twoLWS reduces to PT} such that it preserved slice rank as well.
\begin{theorem}
\label{thm: twoLWS reduces to PT slice}
$(\twoLWS\rightarrow \twoLWS^{\PT}$, slice rank version$)$ There exists an $O(nd)$ time reduction from $\twoLWS_n$ with slice rank $d$, $w_1 = w_2 = w$, to $\twoLWS^{\PT}_{2n}$ with slice rank $d$.
\end{theorem}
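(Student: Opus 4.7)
The plan is to apply the padding construction from Theorem \ref{thm: twoLWS reduces to PT} term by term to a slice rank decomposition of $w$, showing that each slice rank 1 summand can be extended to a slice rank 1 tensor on $[2n]^3$ with the same zero pattern and value structure as the rank reduction. Write $w = \sum_{s=1}^{d} w_s$ where each $w_s$ is slice rank 1. By definition each $w_s$ has one of three forms: (i) $w_s[i,j,k] = a_s[i]\cdot B_s[j,k]$, (ii) $w_s[i,j,k] = a_s[j]\cdot B_s[i,k]$, or (iii) $w_s[i,j,k] = a_s[k]\cdot B_s[i,j]$.

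For each summand I would extend $a_s$ and $B_s$ to $[2n]$ and $[2n]^2$ separately so that the product vanishes outside $[n]\times[n+1,2n]\times[2n]$, and agrees with $w_s$ (appropriately shifted in the $k$-coordinate) inside. Concretely, in form (i), take $a_s'[i] = a_s[i]$ for $i \in [n]$ and $a_s'[i] = 0$ for $i \in [n+1,2n]$; take $B_s'[j,k] = 0$ when $j \in [n]$, $B_s'[j,k] = B_s[j-n,k]$ when $j \in [n+1,2n]$ and $k \in [n]$, and $B_s'[j,k] = B_s[j-n,k-n]$ when $j,k \in [n+1,2n]$. In form (ii), zero $a_s'$ on $[n]$ and set $a_s'[j]=a_s[j-n]$ on $[n+1,2n]$, while zeroing $B_s'$ when $i \in [n+1,2n]$ and doubling along $k$ otherwise. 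In form (iii), extend $a_s'$ periodically in $k$ and zero $B_s'$ outside $[n]\times[n+1,2n]$. In each case the extended tensor $w_s' := a_s' \otimes B_s'$ (with the appropriate axis for $a_s'$) is still slice rank 1.

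Setting $w' := \sum_{s=1}^{d} w_s'$, we obtain a tensor of slice rank at most $d$ that exhibits the identical zero pattern and per-entry values as the tensor $w'$ constructed in the proof of Theorem \ref{thm: twoLWS reduces to PT}: $w'[i,j,k] = 0$ unless $(i,j)\in [n]\times[n+1,2n]$, and on that region $w'[i,j,k] = w[i,j-n,k]$ when $k\in[n]$ and $w'[i,j,k] = w[i,j-n,k-n]$ when $k\in[n+1,2n]$. Because the subsequent induction in that proof depended only on these two properties (and not on how $w'$ was built from $\mu',\sigma',\tau'$), the very same induction shows $T'[i,j] = T[i,j-n]$ for all $(i,j)\in[n]\times[n+1,2n]$, and in particular $T'[1,2n] = T[1,n]$. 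The construction touches $O(n)$ entries of each of the $d$ slice rank 1 pieces, giving $O(nd)$ time.

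The only real obstacle is the bookkeeping in form (i)–(iii): each of the three slice orientations requires a slightly different choice of which factor absorbs the zero-padding and which factor absorbs the $k$-doubling, so one must check in each case that the extended factors still pair up into a slice rank 1 tensor. Once those three cases are verified, the reduction is immediate from Theorem \ref{thm: twoLWS reduces to PT}.
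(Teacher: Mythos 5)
Your proposal is correct and is essentially the same argument the paper uses: decompose $w$ into slice rank 1 summands, case on which of the three axes carries the vector factor, and in each case pad/double the two factors so that the resulting slice rank 1 piece on $[2n]^3$ is zero outside $[n]\times[n+1,2n]\times[2n]$ and reproduces $w[i,j-n,\cdot]$ inside; then invoke the induction from Theorem \ref{thm: twoLWS reduces to PT}, which only depends on those two properties of the padded tensor. Your three cases (i)--(iii) match the paper's cases $f(j,k)g(i)$, $f(i,k)g(j)$, and $f(i,j)g(k)$ respectively, with the same per-case constructions.
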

\begin{proof}
Given an $\twoLWS$ instance with $n \times n$ tensor $w$ with rank $d$, by the proof of \Cref{thm: twoLWS reduces to PT}, we know it suffices to construct a $2n \times 2n$ tensor $w'$ such that 
\begin{itemize}
    \item $w'[i,j,k] = 0$ when $(i,j) \notin [1,n]\times [n+1,2n]$.
    \item When $(i,j)\in [n]\times [n+1,2n]$, $w'[i,j,k] = w[i,j-n,k]$ for all $k \in [n]$ and $w'[i,j,k] = w[i,j-n,k-n]$ for all $k \in [n+1,2n]$.
\end{itemize} Now for each slice in $w$, there are three possibilities. For each case we convert the slice into a new slice with dimension $2n \times 2n \times 2n$ such that $w'$ is the sum of them. 
\begin{itemize}
    \item $f(i,k)\cdot g(j)$: let 
    \[
    f'(i,k) = 
    \begin{cases}
        f(i-n,k) \textup{ if } i \in [1,n], k \in [1,n]\\
        f(i-n,k-n) \textup{ if } i\in [1,n], k \in [n+1,2n]\\
        0 \textup{ otherwise}
    \end{cases},
    g'(j) = 
    \begin{cases}
        g(j) \textup{ if } j \in [n+1,2n]\\
        0 \textup{ otherwise}.
    \end{cases}
    \] As a result, $f'(i,k)\cdot g'(j) = f(i,k)\cdot g(j-n)$ when $i \in [1,n], j \in [n+1,2n]$ and $k \in [n]$, $f'(i,k)\cdot g'(j) = f(i,k-n)\cdot g(j-n)$ when $i,j \in [n+1,2n]$ and $k \in [n+1,2n]$, and $f'(i,k)\cdot g'(j) = 0$ if $(i,j)\notin [1,n] \times [n+1,2n]$. Thus it satisfies the conditions above. 
    \item $f(i,j)\cdot g(k)$: let 
    \[
    f'(i,j) = 
    \begin{cases}
        f(i,j-n) \textup{ if } i \in [1,n],j \in [n+1,2n]\\
        0 \textup{ otherwise}
    \end{cases},
    g'(k) = 
    \begin{cases}
        g(k) \textup{ if } k \in [n]\\
        g(k-n) \textup{ if } k \in [n+1,2n].
    \end{cases}
    \] As a result, $f'(i,j)\cdot g'(k) = f(i,j-n)\cdot g(k)$ when $(i,j) \in [1,n] \times [n+1,2n]$ and $f'(i,j)\cdot g'(k) = 0$ otherwise. Again it satisfies the conditions above.

    \item $f(j,k)\cdot g(i)$: let 
    \[
    f'(j,k) = 
    \begin{cases}
        f(j-n,k) \textup{ if } j \in [n+1,2n], k \in [n]\\
        f(j-n,k-n) \textup{ if } j \in [n+1,2n], k \in [n+1,2n]\\
        0 \textup{ otherwise}.
    \end{cases},
    g'(i) = 
    \begin{cases}
        g(i) \textup{ if } i \in [n]\\
        0 \textup{ otherwise}.
    \end{cases}
    \] When $(i,j) \in [n] \times [n+1,2n], k\in [n]$, $f'(j,k)\cdot g'(i) = f(j-n,k)\cdot g(i)$, and when $(i,j) \in [n] \times [n+1,2n], k\in [n+1,2n]$, $f'(j,k)\cdot g'(i) = f(j-n,k-n)\cdot g(i)$. When $(i,j) \notin [n] \times [n+1,2n]$, $f'(j,k)\cdot g'(i) = 0$. It satisfies the conditions above. 
\end{itemize} Therefore, the sum of these slices must also satisfy the conditions of $w'$ imposed in \Cref{thm: twoLWS reduces to PT}, and the reduction takes $O(nd)$ time. 

\end{proof}

\begin{corollary}
\label{cor: PT hardness slice rank}
Under $\APSP$ conjecture, there is no truly sub-cubic algorithm for $\twoLWS^{\PT}$ with weight function whose slice rank is $3$ or above.
\end{corollary}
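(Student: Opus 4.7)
The plan is to chain together the two tools already developed in the preceding section: the $\APSP$-hardness of $\twoLWS$ with slice rank $3$ (\Cref{thm: APSP to twoLWS}), and the slice-rank-preserving reduction from $\twoLWS$ (with $w_1=w_2$) to $\twoLWS^{\PT}$ (\Cref{thm: twoLWS reduces to PT slice}). The only thing one has to verify is that the reduction in \Cref{thm: APSP to twoLWS} actually outputs an instance in which the two weight tensors $w_1,w_2$ coincide, since that is the hypothesis required by \Cref{thm: twoLWS reduces to PT slice}.

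Inspecting the proof of \Cref{thm: APSP to twoLWS}, the reduction from $\NegativeTriangle$ defines a single tensor $\alpha[i,j,k] = f_1(i,k)g_1(j) + f_2(i,j)g_2(k) + f_3(k,j)g_3(i)$ and sets both $w_1$ and $w_2$ of the resulting $\twoLWS$ instance equal to $\alpha$. In particular, $w_1 = w_2 = \alpha$ and $\alpha$ has slice rank at most $3$ by construction (it is a sum of three slice-rank-$1$ tensors, one along each axis). Hence the hypothesis of \Cref{thm: twoLWS reduces to PT slice} is satisfied with $d=3$.

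Putting the pieces together: first I would recall that, assuming the $\APSP$ conjecture, any algorithm running in time $O(n^{3-\varepsilon})$ for $\twoLWS$ with slice rank $3$ would yield a truly sub-cubic algorithm for $\NegativeTriangle$, contradicting the conjecture (this is exactly \Cref{thm: APSP to twoLWS} applied to the special case $w_1=w_2$, which the given reduction already produces). Next I would apply \Cref{thm: twoLWS reduces to PT slice} to this special case: an $O(N^{3-\varepsilon})$-time algorithm for $\twoLWS^{\PT}$ with slice rank $3$ on instances of size $N=2n$ would, via the $O(nd)$-time reduction, yield an $O(n^{3-\varepsilon})$-time algorithm for $\twoLWS$ with slice rank $3$ and $w_1=w_2$, and hence a truly sub-cubic algorithm for $\NegativeTriangle$. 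This contradicts the $\APSP$ conjecture. The same argument applies verbatim to any weight function of slice rank $\geq 3$, since a hard instance of slice rank $3$ is in particular an instance of slice rank $\geq 3$.

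There is essentially no obstacle beyond the bookkeeping check that $w_1=w_2$ in the $\APSP$-hardness reduction; once that is observed, the corollary is immediate from the two theorems. The only minor care needed is to confirm that the rescaling $n \mapsto 2n$ introduced by \Cref{thm: twoLWS reduces to PT slice} does not affect the sub-cubic parameter, which it does not since $(2n)^{3-\varepsilon} = O(n^{3-\varepsilon})$.
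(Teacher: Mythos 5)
Your proposal is correct and follows essentially the same route as the paper: observe that the reduction in \Cref{thm: APSP to twoLWS} already produces $w_1=w_2=\alpha$ with slice rank $3$, then apply the slice-rank-preserving reduction of \Cref{thm: twoLWS reduces to PT slice} to pass from $\twoLWS$ to $\twoLWS^{\PT}$. In fact you are slightly more careful than the paper's write-up, which cites \Cref{thm: twoLWS reduces to PT} (the tensor-rank version) where it clearly means \Cref{thm: twoLWS reduces to PT slice}.
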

\begin{proof}
When we reduce $\APSP$ to $\twoLWS$ in \Cref{thm: APSP to twoLWS}, the tensors $\alpha$ that we use are the same, so \Cref{thm: APSP to twoLWS} immediately gives a reduction from $\APSP$ to $\twoLWS$ with $w_1 = w_2$, which further reduces to $\twoLWS^{\PT}$ (preserving slice rank) by \Cref{thm: twoLWS reduces to PT}.    
\end{proof}

\section{Applications of $\kLWS$}
\label{sec: applications}

In \Cref{sec: kLWS}, we have shown that $\kLWS$ can solve $\kMinIP$ and $\APSP$ with different tensors. In this section we discuss more applications of $\kLWS$.



\subsection{Higher Dimension Airplane Refueling}
\label{sec: airplane refueling}

The airplane refueling problem was brought by \cite{HL87} as an example of $\LWS$. 

\begin{definition}[Airplane Refueling]
Suppose an airplane needs to fly between 2 given airports which are distance $R$ apart. Suppose there are $n-1$ different refueling stops at distance $x_1,\ldots,x_{n-1}$ from the departure point and all stops lie on the segment between departure and destination points. We can let $0 = x_0 < x_1 < \ldots < x_n = R$. The cost of flying $\ell$ miles is $(k-\ell)^2$ for some $k>0$ (we prefer flying close to $k$ miles), and the goal is to fly from departure point to arrival point while minimizing the cost.    
\end{definition}

It is not hard to see setting $w[i,j] = (x_j - x_i - k)^2$ in $\LWS$ will solve the problem since $T[j]$ is always the minimum cost of flying from $x_0$ to $x_j$. $w$ has rank $4$ because
\[
w[i,j] = x_j^2\cdot 1 + 1 \cdot x_i^2+(-2x_j)\cdot (x_i+2k)+k\cdot (2x_i+k),
\] and \cite{HL87} shows that airplane refueling can be solved in linear time.

In the real world, it is usually unlikely that all refueling stops are located on a single line. In addition, the plane can move in multiple directions. The higher dimension airplane refueling problem is motivated by these observations. 

\begin{definition}[Higher Dimension Airplane Refueling]
Suppose an airplane needs to fly between two given airports on a $k$-dimensional grid with $n$ points at each dimension. Each point in the grid represents a refueling stop, and the cost of flying from stop $(i_{1},\ldots,i_{\ell-1},j_{\ell},i_{\ell+1},\ldots,i_k)$ to $(i_1,\ldots,i_k)$ to  is $c(i_1,\ldots,i_k,j_{\ell})$. The problem asks the minimum cost of flying from $(1,\ldots,1)$ to $(n,\ldots,n)$.
\end{definition}

Notice that this is closer to real-world scenario where trains need to travel on railways, or we are driving in a city with well-organized roads.

Setting $w[i_1,\ldots,i_{k+1}] = c[i_1,\ldots,i_{k+1}]$ in $\kLWS$ will solve the problem because $T[i_1,\ldots,i_k]$ will always be the minimum cost of flying from $(1,\ldots,1)$ to $(i_1,\ldots,i_k)$. If we were to follow the cost function suggested in \cite{HL87}, then we have $c[i_1,\ldots,i_k,j_{\ell}] = (L-(i_{\ell}-j_{\ell}))^2$, which has constant rank and thus it can be solved in time $O(n^{k+1-\varepsilon})$ for some $\varepsilon>0$. 

Another natural scenario is that the cost of flying from $(i_1,\ldots,i_{\ell-1},j_{\ell},i_{\ell+1},i_k)$ to $(i_1,\ldots,i_k)$ only depends on $(i_1,\ldots,i_k)$. It mimics the scenario that the airplane is charged a fee upon arrival.

\begin{definition}[Arrival Fee Airplane Refueling]
Suppose an airplane needs to fly between two given airports on a $k$-dimensional grid with $n$ points at each dimension. Each point in the grid represents a refueling stop, and the cost of flying from stop $(i_{1},\ldots,i_{\ell-1},j_{\ell},i_{\ell+1},\ldots,i_k)$ to $(i_1,\ldots,i_k)$ to  is $c(i_1,\ldots,i_k)$. The problem asks the minimum cost of flying from $(1,\ldots,1)$ to $(n,\ldots,n)$.    
\end{definition}

In the arrival fee airplane refueling problem with dimension $k$, the tensor has slice rank $1$, so by \Cref{thm: static twoLWS with slice rank 1 is truly sub-cubic} it can be solved in time $O(n^{k+1-\varepsilon})$ for some $\varepsilon>0$.

\subsection{Multiple Nested Boxes}
\label{sec: multiple nested boxes}

Nested boxes problem, or box stacking problem, is a famous example with a DP solution. 

\begin{definition}[Nested Boxes]
Given $n$ boxes in $d$ dimension of size $(b_1,\ldots,b_d)$, find the longest chain such that each box fits into the next (without rotation). We say that box $a$ of size $(a_1,\ldots,a_d)$ fits into box $b$ of size $(b_1,\ldots,b_d)$ if $a_i \leq b_i$ for all $1 \leq i \leq d$.    
\end{definition}

\cite{KPS17} proves that nested boxes is sub-quadratic equivalent to the vector domination problem defined in \cite{ILPS14} and both can be solved by $\LWS$: sort the boxes by volumn in increasing order as $B_1,\ldots,B_n$ and set $w_{ij}$ to be $-1$ if $B_j$ contains $B_i$ and $0$ otherwise. 

It is natural to consider the case where we are allowed to have multiple locations to put the boxes, which motivates our multiple nested boxes problem.

\begin{definition}[Multiple Nested Boxes]
Given $n$ boxes in $d$ dimension of size $(b_1,\ldots,b_d)$ and $k$ piles, find the maximum number of boxes we can use such that each in each pile, each box fits into the next (without rotation). We say that box $a$ of size $(a_1,\ldots,a_d)$ fits into box $b$ of size $(b_1,\ldots,b_d)$ if $a_i \leq b_i$ for all $1 \leq i \leq d$.    
\end{definition}

\begin{theorem}
\label{thm: multiple nested boxes}
Multiple nested boxes with $k$ dimension can be solved in time $O(n^{k+1-\varepsilon})$ for some $\varepsilon>0$.   
\end{theorem}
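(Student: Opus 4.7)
The plan is to formulate multiple nested boxes as a $\kLWS$ instance whose weight tensors have slice rank $1$, and then invoke the slice-rank-$1$ algorithm for $\kLWS$. First, sort the $n$ boxes in nondecreasing order of volume, so that whenever $B_j$ fits componentwise inside $B_i$ we must have $j \leq i$; define the indicator
\[
\phi[i, j] \;=\; \begin{cases} -1 & \text{if } B_j \text{ fits componentwise inside } B_i, \\ \infty & \text{otherwise.} \end{cases}
\]
Let $T[i_1, \ldots, i_k]$ be the negation of the maximum number of boxes we can use across the $k$ piles when pile $\ell$'s current outermost box is $B_{i_\ell}$; with a sentinel super-box appended so that the target state is $(n, \ldots, n)$, this table satisfies the $\kLWS$ recurrence with weight tensors
\[
w_\ell\bigl[i_1, \ldots, i_k, j\bigr] \;=\; \phi\bigl[i_\ell, j\bigr],
\]
whose transition from $(i_1, \ldots, j, \ldots, i_k)$ to $(i_1, \ldots, i_\ell, \ldots, i_k)$ adds one box to pile $\ell$ precisely when $B_j$ fits inside $B_{i_\ell}$. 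The answer to the multiple nested boxes instance is then $-T[n, \ldots, n]$.

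The crucial structural observation is that the weight tensor $w_\ell$ depends only on the two coordinates $i_\ell$ and $j$, and in particular is constant in the coordinate $i_{\ell'}$ for every $\ell' \neq \ell$ (which exists whenever $k \geq 2$). Fixing any such $\ell'$, we can write
\[
w_\ell\bigl[i_1, \ldots, i_k, j\bigr] \;=\; \mathbf{1}[i_{\ell'}]\,\cdot\, f_\ell\bigl[i_1, \ldots, \widehat{i_{\ell'}}, \ldots, i_k, j\bigr],
\]
where $\mathbf{1}$ denotes the all-ones vector of length $n$ and $f_\ell$ is the $k$-dimensional tensor $\phi[i_\ell, j]$, viewed as a tensor in the remaining coordinates. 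This is a slice-rank-$1$ decomposition of $w_\ell$.

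The final step is to invoke an algorithm for $\kLWS$ with slice rank $1$ that runs in time $O(n^{k+1-\varepsilon})$. For $k=2$ this is exactly \Cref{thm: twoLWS slice rank 1 is truly sub-cubic}. For general $k$, the same divide-and-conquer template extends inductively: the slice-rank-$1$ decomposition of each $w_\ell$ identifies a coordinate $s$ that the tensor does not genuinely depend on; fixing $i_s$ costs a factor of $n$ and reduces each inner step to a $(k-1)$-dimensional $\kLWS$ subproblem whose weight tensors inherit slice rank $1$, which we handle by the inductive hypothesis. Combining with \Cref{thm: kLWS to Static kLWS} to pass between the full and static versions, and summing the truly-subquadratic savings across the $n^{k-1}$ fiber instances, gives total time $O(n^{k+1-\varepsilon})$. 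The main obstacle is verifying that this recursive extension of the slice-rank-$1$ algorithm telescopes cleanly to the claimed bound for all $k$; the nested-boxes encoding itself and the slice-rank analysis are routine once the sorted-by-volume ordering and the sentinel construction are fixed.
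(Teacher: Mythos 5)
Your approach matches the paper's essentially line for line: sort the boxes by volume, encode containment as a weight tensor $w_\ell$ that depends only on coordinates $i_\ell$ and $j$ (hence slice rank $1$), and then peel off one coordinate at a time via the $[\Static]\kminusoneLWS$ reduction to obtain the $O(n^{k+1-\varepsilon})$ bound. The one substantive deviation is your choice of $\infty$ for the "does not fit" entries; the paper (following the nested-boxes encoding of \cite{KPS17}) uses $0$ instead, which is cleaner here because the $0$-cost transitions let $T[i_1,\ldots,i_k]$ mean "best solution with pile $t$'s outer box of index \emph{at most} $i_t$" (so no sentinel box is needed and the answer sits at $T[n,\ldots,n]$), and because it keeps the residual tensor after fixing a coordinate genuinely finite, so the rank-$1$ factorization required by the $[\Static]\LWS$ machinery is immediate rather than requiring a careful treatment of $\infty$ entries.
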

\begin{proof}
We first sort all the boxes by their volume in increasing order $B_1,\ldots,B_n$, and let 
\begin{align*}
    w_{\ell}[i_1,\ldots,i_k,j] = 
    \begin{cases}
        -1 & \textup{ if } B_j \textup{ fits into } B_{i_{\ell}}\\
        0 & \textup{ otherwise.}
    \end{cases}
\end{align*} To see that this indeed solves multiple nested boxes, we claim that $-T[i_1,\ldots,i_k]$ is the maximum number of boxes over all assignments such that the rank of the outer box in $t$-th pile is at most $i_t$. This will solve the multiple nested boxes when $i_1 = \ldots = i_k = n+1$. 

We proceed by induction. When $i_1 = \ldots = i_k = 1$, we have $T[i_1,\ldots,i_k] = 0$. There is no box with rank $0$ so we cannot put any boxes. Now for general $(i_1,\ldots,i_k)$, by recurrence we have 
\[
T\Big[i_1,\ldots,i_k\Big] = \min_{1 \leq \ell \leq k}\Big\{\min_{1 \leq j_{\ell}<i_{\ell}}\Big\{T\Big[i_1,\ldots,i_{\ell-1},j_{\ell},i_{\ell+1},i_{k}\Big]+w_{\ell}\Big[i_1,\ldots,i_k,j_{\ell}\Big]\Big\}\Big\}.
\] For any assignment $(u_1,\ldots,u_k)$ to the piles 
such that the $t$-th pile outer box has rank $u_t \leq i_t$, it can be achieved from adding $B_{u_{\ell}}$ to the assignment $(u_1,\ldots,u_{\ell-1},u_{\ell}',u_{\ell+1},\ldots,u_k)$ with the guarantee that $B_{u_{\ell}'}$ fits inside $B_{u_{\ell}}$. This case is covered by the right-hand-side of the equation because by induction hypothesis,
\[
-T\Big[u_1,\ldots,u_{\ell-1},u_{\ell}',u_{\ell+1},\ldots,u_k\Big]-w_{\ell}\Big[u_1,\ldots,u_k,u_{\ell}'\Big]
\] is the maximum over the assignments under this procedure.

Therefore, right-hand-side of the equation is exactly all possible ways to achieve the assignment $(u_1,\ldots,u_k)$ such that $u_t \leq i_t$ for all $t$, so our $\kLWS$ instance indeed solves the multiple nested boxes with $k$ dimension. In addition, notice that $w_{\ell}$ only depends on its $\ell$-th and last coordinate, so it can be expressed as a matrix. The same reasoning in \Cref{thm: static twoLWS with slice rank 1 is truly sub-cubic} shows that it can be reduced to $[\Static]\kminusoneLWS$ with rank $1$, which implies that it can be solved in time $O(n^{k+1-\varepsilon})$ for some $\varepsilon>0$.

\end{proof}

\section{Applications of $\twoLWS^\PT$} \label{sec:PT}


In section \ref{sec: polygon triangulation}, we showed we can solve polygon triangulation and $\twoLWS^\PT$ with different tensors. In this section, we discuss more applications of $\twoLWS^\PT$.

\subsection{Matrix-Chain Multiplication}

The matrix-chain multiplication problem was introduced in \cite{Godbole73} and is defined as follows:

\begin{definition}[Matrix-Chain Multiplication]
\label{def: matrix-chain multiplication}
    Given a chain of $n$ matrices $A_1, \hdots, A_n$ where matrix $A_i$ has dimension $d_{i-1} \times d_{i}$, find the order of matrix multiplications which minimizes the number of scalar multiplications using the straightforward matrix multiplication algorithm.
\end{definition}

Recall that multiplying an $n \times m$ matrix by an $m \times p$ matrix using the straightforward algorithm uses $n \cdot m \cdot p$ scalar multiplications. Moreover, the order in which you multiply a chain of matrices determines the number of scalar multiplications performed. For instance, consider three matrices $A_1, A_2, A_3$ with dimensions $(10, 20)$, $(20, 30)$, and $(30, 40)$ respectively. Multiplying $(A_1, A_2) A_3$ takes $(10 \cdot 20 \cdot 30) + (10 \cdot 30 \cdot 40) = 18000$ scalar multiplications while multiplying $A_1 (A_2 A_3)$ takes $(20 \cdot 30 \cdot 40) + (10 \cdot 20 \cdot 40) = 32000$ scalar multiplications.

Matrix-chain multiplication is a $\twoLWS^\PT$ problem where $T[i, j]$ is the cost of multiplying matrices $A_i, \ldots, A_j$ and we want to find the $k$ which minimizes the cost of multiplying $(A_i, \ldots, A_k)$ by $(A_{k+1}, \ldots, A_j)$. Multiplying matrices $(A_i, \ldots, A_k)$ would result in a matrix of dimension $(d_{i-1}, d_k)$ and multiplying $(A_{k+1}, \ldots, A_j)$ would result in a matrix of dimension $(d_k, d_j)$. Thus $T[i, j]$ equals the cost of multiplying all matrices $A_i, \ldots, A_k$ (i.e. $T[i, k]$) and all matrices $A_{k+1}, \ldots, A_j$ (i.e. $T[k, j]$) plus the cost of multiplying those two resultant matrices together (i.e. $d_{i-1} d_k d_j$). Setting $w_1[i, j, k] = w_2[i, j ,k] = d_{i-1} d_k d_j$ would solve this problem.

Let us construct a vector $d = [d_0, d_1, \ldots, d_n]$ with the dimensions of our matrices $A_1, \ldots, A_n$. Then $w$ has a tensor rank of 1 because it can be represented as the product of different entries of $d$, namely $w[i, j, k] = d[i-1] \cdot d[j] \cdot d[k]$. Moreover, there exists an $O(n \log n)$ time algorithm for this problem \cite{HS82, HS84}. Corollary \ref{cor: PT hardness rank} helps explain why this speedup is possible.

\subsection{Optimal Binary Search Tree}

The optimal binary search tree construction problem was introduced in \cite{Knuth71, GM59} and is defined as follows:
\begin{definition}[Optimal Binary Search Tree]
    Given a sequence of $n$ distinct keys $h_1, \hdots, h_n$ in sorted order where the probability of accessing key $h_i$ is $p_i$, construct a binary search tree from these keys which minimizes the expected access time.
\end{definition}

This problem is a $\twoLWS^\PT$ instance where $T[i, j]$ is the minimum cost binary search tree with keys $h_i, \ldots, h_j$. We want to chose a key $h_k$ to be the root of the sub-tree containing keys $h_i, \ldots, h_j$ which minimizes the expected access time. The expected access time for a key $h_t$ is $p_t \cdot (d_t + 1)$, the key's probability times its depth in the tree (i.e. the number of times this item has been accessed). We can compute this quantity incrementally, adding the probability $p_t$ of key $h_t$ once at every level it appears in the tree, summing up $p_t$ a total of $d_t$ times. Thus the expected cost of accessing keys $h_i, \ldots, h_j$ is $w[i, j, k] = \sum_{t=i}^j p_t$.

$w$ has a slice rank of 1 because it can be written as $w[i, j, k] = a[k] \cdot b[i, j]$ where $a[k] = 1$ and $b[i, j] = \sum_{t=i}^j p_t$. This observation, together with Corollary \ref{cor: PT hardness slice rank}, recovers the known $O(n^2)$ time algorithm for this problem \cite{Yao80, Yao82}.

\bibliographystyle{alpha}
\bibliography{sample} 

\appendix
\section{Reduction from $\SAT$ to $\kMinIP$ with rank $2^{O(\log^{*}n)}$}
\label{sec: SAT to kMinIP}

In this section, we reduce $\SAT$ to $\kMinIP$ with rank $2^{O(\log^{*}n)}$. Our reduction proceeds as follows:
\[
\includegraphics[scale=0.2]{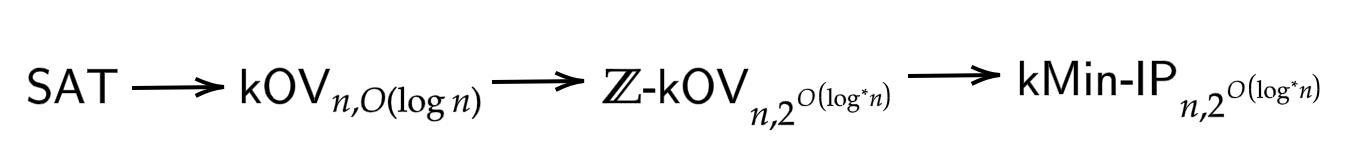}
\]

It is well-known that $\SAT$ reduces to $\OV$ \cite{Williams05}. The same proof can be used to give a reduction from $\SAT$ to $\kOV$, which we include here for completeness.
\begin{lemma}
$(\SAT \rightarrow \kOV_{n,O(\log n)})$. Suppose there exists an algorithm for $\kOV_{n,O(\log n)}$ with running time $O(n^{k-\varepsilon})$ for some $\varepsilon>0$, then there exists an algorithm for $\kSAT$ with running time $2^{(1-\delta)n}\cdot \poly(n)$ for some $\delta>0$, thus refuting $\SETH$.
\end{lemma}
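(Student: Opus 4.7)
The plan is to adapt Williams's classical reduction from $\kSAT$ to $\OV$ to the $k$-dimensional setting. First, I would invoke the sparsification lemma of Impagliazzo, Paturi, and Zane to reduce to a $\kSAT$ instance $\varphi$ on $n$ variables with $m = O(n)$ clauses, at the cost of only a subexponential blowup, which is harmless for refuting $\SETH$. Then I would partition the variable set into $k$ disjoint groups $V_1, \ldots, V_k$, each of size $n/k$, so that any assignment to $V = V_1 \cup \cdots \cup V_k$ decomposes as a $k$-tuple $(\alpha_1, \ldots, \alpha_k)$ of partial assignments $\alpha_i \in \{0,1\}^{n/k}$.

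Next, for each group $i \in [k]$ and each partial assignment $\alpha_i \in \{0,1\}^{n/k}$, I would define a vector $v_{\alpha_i} \in \{0,1\}^m$ by setting
\[
    v_{\alpha_i}[c] =
    \begin{cases}
        0 & \text{if } \alpha_i \text{ satisfies clause } c \text{ (restricted to } V_i\text{)},\\
        1 & \text{otherwise.}
    \end{cases}
\]
This produces $k$ sets $X_1, \ldots, X_k$, each of size $N := 2^{n/k}$, of $m$-dimensional $\{0,1\}$-vectors. I would then verify the central correctness claim: the full assignment $(\alpha_1,\ldots,\alpha_k)$ satisfies $\varphi$ if and only if $\langle v_{\alpha_1}, \ldots, v_{\alpha_k}\rangle = \sum_{c=1}^m \prod_{i=1}^k v_{\alpha_i}[c] = 0$, since the product in coordinate $c$ vanishes exactly when some group satisfies clause $c$.

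For the runtime bookkeeping, note that $m = O(n) = O(k \log N)$, so the constructed instance is a $\kOV_{N, O(\log N)}$ instance (with $k$ absorbed into the $O(\cdot)$). Preparing the $kN$ vectors takes $\poly(n) \cdot N$ time. An $O(N^{k-\varepsilon})$-time algorithm for $\kOV_{N, O(\log N)}$ would then decide satisfiability of $\varphi$ in time
\[
\poly(n) \cdot 2^{(n/k)(k-\varepsilon)} = \poly(n) \cdot 2^{(1-\varepsilon/k)n},
\]
yielding the desired $\delta := \varepsilon/k > 0$ improvement and refuting $\SETH$. The step that requires the most care is verifying the higher-dimensional inner-product condition, but this is really a routine product-is-zero-iff-some-factor-is-zero argument, so no genuine obstacle is expected; the reduction is essentially a direct lifting of the $k=2$ case.
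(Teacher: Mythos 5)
Your proposal is correct and follows the same strategy as the paper: partition the $n$ variables into $k$ groups of size $n/k$, for each group and each of the $N = 2^{n/k}$ partial assignments build a clause-indicator vector in $\{0,1\}^m$ (entry $0$ if the partial assignment already satisfies the clause, else $1$), observe that the full assignment satisfies $\varphi$ exactly when the $k$-wise inner product $\sum_c \prod_i v_{\alpha_i}[c]$ vanishes, and feed the resulting $k$ sets of size $N$ into the $\kOV$ oracle. Both proofs obtain $2^{(1-\delta)n}\poly(n)$ with $\delta = \varepsilon/k$.

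The one substantive difference is that you explicitly invoke the IPZ sparsification lemma to first bring the clause count down to $m = O(n)$, which is what justifies the vector dimension being $O(\log N)$ as demanded by the $\kOV_{n,O(\log n)}$ parameterization. The paper's writeup skips this step: it builds vectors of dimension $m$ and immediately calls the $\kOV_{n,O(\log n)}$ algorithm, but for a generic $k'\SAT$ instance $m$ can be as large as $\Theta(n^{k'})$, which is $\polylog(N)$ rather than $O(\log N)$. So your version closes a small gap that the paper leaves implicit; otherwise the two arguments are essentially identical.
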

\begin{proof}
For a $\kSAT$ formula $\varphi$ with $n$ variables and $m$ clauses, partition the variables $x_1,\ldots,x_n$ into $k$ subsets each with $n/k$ variables: 
\[
S_1 = \{x_1,\ldots,x_{\frac{n}{k}}\},\ldots,S_k = \{x_{n-\frac{n}{k}+1},\ldots,x_n\}.
\] For each $1 \leq i \leq k$, let $A_i = \{\alpha:S_i \rightarrow \{0,1\}\}$. Therefore, $|A_i| = 2^{n/k}$ for all $1 \leq i \leq k$. For each $\alpha \in A_i$, construct a vector $v_{\alpha} \in \{0,1\}^{m}$ such that 
\begin{align*}
V_{\alpha}[i] = 
\begin{cases}
    0 &\textup{ if $\alpha$ satisfies clause $i$}\\
    1 &\textup{otherwise.}
\end{cases}    
\end{align*} For each clause, $\alpha$ gives a partial assignment to it. If that assignment already makes the clause true, then $v_{\alpha}[i] = 0$. Let $Y_i = \{v_{\alpha}:\alpha \in A_i\}$, then $|Y_i| = 2^{n/k}$ for all $i$. Now notice that there exists $y_{i,j_i}$ such that $\langle y_{1,j_{1}},\ldots,y_{k,j_{k}} \rangle = 0$ if and only if $\varphi$ is satisfiable. 

Therefore running $\kOV_{n,O(\log n)}$ on $Y_1,\ldots,Y_k$ will tell us whether $\varphi$ is satisfiable or not. The running time of this algorithm is $(2^{n/k})^{k-\varepsilon}\cdot \poly(m)  =2^{(1-\delta)n}\cdot \poly(n)$.
\end{proof}

To give a reduction from $\kOV_{n,O(\log n)}$ to $\mathbb{Z}\textup{-}\kOV_{n,2^{O(\log^{*}n)}}$, we need the following results.

\begin{definition}[$\CRR$ encoding]
Let $b,\ell$ be two positive integers. For any vector $z \in \mathbb{Z}^{b}$ and a set of $b$ primes $q_i$, let $\CRR(z;\{q_i\})$ denote the unique integer in $\Bigl[\prod q_i\Bigr]$ such that $\CRR(\{z_i\};\{q_i\}) \equiv z_i \bmod q_i$ for all $1 \leq i\leq b$. For a vector $x \in \mathbb{Z}^{b\cdot \ell}$, partition $x$ into $\ell$ blocks, each of length $b$ and let $x^i$ be the $i$-th block. Define
\[
\CRR_{b,\ell}(x,\{q_i\}) = \Bigl(\CRR(x^1, \{q_i\}),\CRR(x^2, \{q_i\}),\ldots,\CRR(x^{\ell},\{q_i\})\Bigr).
\]
\end{definition}

\begin{theorem}
\label{thm: function}
Let $b,\ell$ be two sufficiently large integers. There is a reduction $\psi_{b,\ell}:\{0,1\}^{b\cdot \ell} \rightarrow \mathbb{Z}^{\ell}$ and a set $V_{b,\ell} \subset \mathbb{Z}$ such that for every $x_1,\ldots,x_k \in \{0,1\}^{b\cdot \ell}$, 
\[
\langle x_1,x_2,\ldots,x_k \rangle = 0 \Longleftrightarrow \Big\langle\psi_{b,\ell}(x_1),\ldots,\psi_{b,\ell}(x_k)\Big\rangle \in V_{b,\ell}
\] and $0 \leq \psi_{b,\ell}(x)_i < \ell^{(3k)^{\log^{*}b}\cdot b}$ for all possible $x$ and $i \in [\ell]$. Moreover, the computation of $\psi_{b,\ell}(x)$ takes $\poly(b\cdot \ell)$ time, and $V_{b,\ell}$ can be constructed in $O\Big(\ell^{O\big((3k)^{\log^{*}b}\cdot b\big)}\cdot\poly(b,\ell)\Big)$ time.
\end{theorem}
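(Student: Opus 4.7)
The plan is to adapt Chen's CRR-based reduction~\cite{Chen18} from $\SAT$ (equivalently 2-$\OV$) to $\MinIP$, generalizing it to handle $k$-fold inner products. The construction will be iterative: at each level a CRR encoding compresses a block of $b$ bits into a single integer entry, and then a digit-chopping step in a small base restores a vector structure whose entries are much smaller. Each level shrinks the effective ``block length'' from $b$ to roughly $\log b$, so $\log^{*} b$ levels suffice to reduce it to a constant. The factor $(3k)^{\log^{*}b}$ in the exponent will arise because each level blows up entry magnitudes by a $k$-th power (the $k$-fold product) plus a small constant overhead (CRR and digit slack).

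\textbf{A single CRR layer.} For $x \in \{0,1\}^{b \cdot \ell}$, partition $x$ into $\ell$ blocks $x^1, \ldots, x^\ell \in \{0,1\}^b$, choose $b$ distinct primes $q_1, \ldots, q_b$ each strictly larger than $\ell$, and set $\psi^{(1)}(x)_t = \CRR(x^t; \{q_i\})$. Since each bit of $x_j^t$ lies in $\{0,1\}$, one has $\prod_{j=1}^k \psi^{(1)}(x_j)_t \equiv \prod_{j=1}^k x_j^t[i] \pmod{q_i}$ for every $i \in [b]$. Summing over $t$ and reducing modulo $q_i$ gives $\sum_t \prod_j x_j^t[i]$, which is bounded by $\ell < q_i$, so no wraparound occurs. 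Thus $\langle x_1, \ldots, x_k \rangle = \sum_{t,i} \prod_j x_j^t[i]$ can be recovered coordinate-by-coordinate from the residues of the single integer $\sum_t \prod_j \psi^{(1)}(x_j)_t$, and in particular orthogonality is witnessed by all $b$ of those residues being $0$.

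\textbf{Recursion and the set $V_{b,\ell}$.} A single CRR layer produces entries of magnitude $Q := \prod_i q_i \leq (Cb\log b)^b$, so the $k$-fold inner product can still be as large as $\ell \cdot Q^k$, exponential in $b$. To shrink this, expand each $\psi^{(1)}(x)_t$ in digits of some small base $B$, obtaining a new vector of dimension $\ell \cdot b'$ with entries in $[0,B)$ and $b' = O(\log Q / \log B)$, and then re-apply the CRR compression with fresh primes to this vector. Choosing $B$ so that $b' \approx \log b$ and tracking the exponent through the recursion yields an entry bound of $\ell^{(3k)^{\log^{*}b} \cdot b}$ after $\log^{*} b$ levels, because each level multiplies the ambient exponent by at most $3k$ (a factor $k$ from the $k$-fold product and an additional factor at most $3$ from CRR expansion and digit slack). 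The encoder $\psi_{b,\ell}$ is defined as the composition of all levels, and can clearly be evaluated in $\poly(b\ell)$ time. The set $V_{b,\ell}$ is then the set of integers $z$ in the range of the top-level inner product whose level-by-level decoding, via residues modulo the primes at each level, certifies that every per-coordinate product at the original bit level vanishes; enumerating these valid $z$ takes time $\ell^{O((3k)^{\log^{*}b} \cdot b)}\cdot \poly(b,\ell)$ as claimed.

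\textbf{Main obstacle.} The technical heart of the proof is the bookkeeping that guarantees CRR compression and digit chopping interact correctly across levels: the primes at each level must be chosen large enough (relative to the magnitude of the $k$-fold products that appear at that level) to prevent wraparound, yet small enough to keep the entry bound tight, and the digit base at each level must be synchronized with the next level's prime sizes so that residues can be read off coordinate-by-coordinate all the way down to the original bits. Passing from Chen's $k=2$ argument to general $k$ changes the entry-size recurrence from roughly $s \mapsto s^2 \cdot \poly$ to $s \mapsto s^k \cdot \poly$, and showing that a $\log^{*}b$-fold iteration of the latter still terminates at the stated bound is precisely where the $(3k)^{\log^{*}b}$ factor must be established carefully.
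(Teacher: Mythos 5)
The core technique you chose for the recursion---CRR-compress, then expand each resulting integer in digits of a small base $B$, then re-apply CRR to the digit vector---does not work, and it is not what the paper (or Chen~\cite{Chen18}) does. The failure is at the heart of the recursion step: the inner product of digit-chopped vectors carries essentially no information about the original inner product (or its residues modulo the first-layer primes). Concretely, if $a = a_0 + a_1 B$ and $c = c_0 + c_1 B$, then $ac = a_0 c_0 + (a_0 c_1 + a_1 c_0)B + a_1 c_1 B^2$, whereas the inner product of the digit vectors $(a_0, a_1)$ and $(c_0, c_1)$ is $a_0 c_0 + a_1 c_1$; the cross-terms $a_0 c_1 + a_1 c_0$ (and the base-$B$ weights) are discarded. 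For general $k$ the discrepancy is worse. So after the digit-chop the key invariant---that the $k$-fold inner product of encodings, reduced modulo each prime, recovers the per-coordinate product count of the original bits---is destroyed, and your ``level-by-level decoding'' of $V_{b,\ell}$ has nothing to decode.

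The paper's recursion goes in the opposite direction. Rather than compressing first and then trying to re-expand a large CRR integer, it picks a micro-block size $b_{\micro}$ with $\ell^{(3k)^{\log^{*}b_{\micro}}\cdot b_{\micro}} = b$, partitions each length-$b$ block into $b/b_{\micro}$ micro-blocks, and \emph{recursively} applies $\psi_{b_{\micro},\ell}$ to the $b/b_{\micro}$ columns of micro-blocks. By the inductive size bound these recursive encodings are integers strictly less than $b$. Only then does it CRR-compress, using $b/b_{\micro}$ primes in $[b^k\ell, b^{2k}\ell^2]$, chosen large enough that the $k$-fold product of the ($<b$)-sized micro-encodings plus the length-$\ell$ sum never wraps. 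The crucial point your plan misses is that the outer CRR's modular arithmetic then yields exactly $\bigl\langle \psi_{b_{\micro},\ell}(x_1^{[t]}),\ldots,\psi_{b_{\micro},\ell}(x_k^{[t]})\bigr\rangle \bmod q_t$, which by the inductive hypothesis (membership in $V_{b_{\micro},\ell}$) witnesses orthogonality of the $t$-th micro-block column. That is what makes the recursion sound: you always CRR-compress already-recursively-encoded small integers, never digit-decompose a big CRR value. Your heuristic for the $(3k)^{\log^*b}$ exponent and your guess at the runtime of constructing $V_{b,\ell}$ are in the right ballpark, but without replacing the digit-chopping step by the bottom-up micro-block recursion, the argument has a real gap.
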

\begin{proof}
When $b$ is small (compared to $\ell$), we will define  $\psi_{b,\ell}$ directly; when $b$ is large, we will recursively define $\psi_{b,\ell}$.  

\textbf{Direct $\CRR$ for small $b$:} When $b < \ell$, we pick $b$ primes $q_1,\ldots,q_b$ in $[\ell+1,\ell^2]$, and we use them for our $\CRR$ encoding. Let 
\[
\psi_{b,\ell}(x) = \CRR_{b,\ell}(x,\{q_i\}) = \Bigl(\CRR(x^1, \{q_i\}),\CRR(x^2, \{q_i\}),\ldots,\CRR(x^{\ell},\{q_i\})\Bigr)
\] with $q_1,\ldots,q_b$ for any $x \in \{0,1\}^{b\cdot \ell}$. Notice that for any $x_1,\ldots,x_k \in \{0,1\}^{b\cdot \ell}, 1 \leq t \leq b$, we have
\begin{align*}
    \Big\langle \psi_{b,\ell}(x_1),\ldots,\psi_{b,\ell}(x_k)\Big\rangle &\equiv \Big\langle\CRR_{b,\ell}(x_1,\{q_j\}),\ldots,\CRR_{b,\ell}(x_k,\{q_j\})\Big\rangle \bmod q_t\\
    &\equiv \sum_{d=1}^{\ell}\prod_{i=1}^{k} \CRR(x_{i}^{d},\{q_j\}) \bmod q_t\\
    &\equiv \sum_{d=1}^{\ell} \prod_{i=1}^{k}x_i^{d}[t] \bmod q_t.
\end{align*} Since 
\[
\sum_{d=1}^{\ell} \prod_{i=1}^{k}x_i^{d}[t] \leq \ell
\] and $q_t > \ell$, we know 
\[
\sum_{d=1}^{\ell} \prod_{i=1}^{k}x_i^{d}[t] = 0 \Longleftrightarrow \Big\langle \psi_{b,\ell}(x_1),\ldots,\psi_{b,\ell}(x_k)\Big\rangle  \equiv 0 \bmod q_t.
\] As a result, $\langle x_1,\ldots,x_k\rangle = 0$ if and only if 
\[
\Big\langle \psi_{b,\ell}(x_1),\ldots,\psi_{b,\ell}(x_k)\Big\rangle  \equiv 0 \bmod q_t
\] for all $1 \leq t \leq b$. Finally, $\psi_{b,\ell}(x)_d = \CRR(x^d,\{q_j\}) \leq \ell^{2b}<\ell^{(3k)^{\log^{*}b}\cdot b}$. In addition, 
\[
\Big\langle \psi_{b,\ell}(x_1),\ldots,\psi_{b,\ell}(x_k)\Big\rangle \leq \ell\cdot (\ell^{2b})^k \leq \ell^{(3k)^{\log^{*}b}\cdot kb+1}.
\] Let $V_{b,\ell}$ be the set of all integers in $[0,\ell^{(3k)^{\log^{*}b}\cdot kb+1}]$ that are multiples of all $q_j$'s, and we know
\[
\langle x_1,\ldots,x_k \rangle = 0 \Longleftrightarrow \Big\langle \psi_{b,\ell}(x_1),\ldots,\psi_{b,\ell}(x_k)\Big\rangle \in V_{b,\ell}
\] for all $x_1,\ldots,x_k \in \{0,1\}^{b\cdot \ell}$.

\textbf{Recursive Construction for larger $b$:} When $b \geq \ell$, we let $b_{\micro}$ be such that
\[
\ell^{(3k)^{\log^{*}b_{\micro}}\cdot b_{\micro}} = b
\] and pick $b/b_{\micro}$ primes $q_1,\ldots,q_{b/b_{\micro}}$ in $[b^{k}\ell,b^{2k}\ell^{2}]$. We use induction and assume that the theorem is true for all integers smaller than $b$, and use $q_1,\ldots,q_{b/b_{\micro}}$ for our $\CRR$ encoding. Given a vector $x \in \{0,1\}^{b,\ell}$, we again use $x^i$ to denote the $i$-th block. We further partition $x^i$ into blocks of size $b_{\micro}$, and use $x^{i,j}$ to denote the $j$-th block of $x^i$. We use $x^{[j]}$ to denote the concatenation of the $j$-th micro group in each of the $\ell$ groups. Since $b_{\micro}<b$, we can use induction hypothesis to construct $\psi_{b_{\micro},\ell}:\{0,1\}^{b_{\micro}\cdot \ell} \rightarrow \mathbb{Z}^{\ell}$ such that for every $x,y,z \in \{0,1\}^{b_{\micro}\cdot\ell}$, 
\[
\langle x_1,\ldots,x_k \rangle = 0 \Longleftrightarrow \Big\langle\psi_{b_{\micro},\ell}(x_1),\ldots,\psi_{b_{\micro},\ell}(x_1)\Big\rangle \in V_{b_{\micro},\ell}.
\] We first define an encoding $S: \{0,1\}^{b\cdot \ell} \rightarrow \mathbb{Z}^{b/b_{\micro}\cdot \ell}$:
\begin{align*}
S(x) := \Big(&\psi_{b_{\micro},\ell}(x^{[1]})_1,\psi_{b_{\micro},\ell}(x^{[2]})_1,\ldots,\psi_{b_{\micro},\ell}(x^{[b/b_{\micro}]})_1, \\
& \psi_{b_{\micro},\ell}(x^{[1]})_2,\psi_{b_{\micro},\ell}(x^{[2]})_2,\ldots,\psi_{b_{\micro},\ell}(x^{[b/b_{\micro}]})_2, \\
&\ldots, \ldots, \ldots, \ldots,\\
& \psi_{b_{\micro},\ell}(x^{[1]})_{\ell},\psi_{b_{\micro},\ell}(x^{[2]})_{\ell},\ldots,\psi_{b_{\micro},\ell}(x^{[b/b_{\micro}]})_{\ell}\Bigr).
\end{align*} We partition $S(x)$ into $\ell$ blocks shown above and denote them as $S^{1}(x),\ldots,S^{\ell}(x)$. We then define
\[
\psi_{b,\ell}(x) := \Big(\CRR(S^1(x);\{q_j\}),\CRR(S^2(x);\{q_j\}),\ldots,\CRR(S^{\ell}(x);\{q_j\})\Bigr).
\] Now we show that this encoding satisfies the conditions in the theorem statement. For any $x,y,z \in \{0,1\}^{b\cdot \ell}$, we know
\begin{equation}
\label{eq: construction of psi}
\begin{split}
    \langle x_1,\ldots,x_k \rangle = 0 &\Longleftrightarrow \Big\langle x_1^{[j]},\ldots,x_k^{[j]}\Big\rangle = 0 \quad \forall j \in [b/b_{\micro}]\\
    &\Longleftrightarrow \Big\langle\psi_{b_{\micro},\ell}(x_1^{[j]}),\ldots,\psi_{b_{\micro},\ell}(x_k^{[j]})\Big\rangle \in V_{b_{\micro}} \ \forall j \in [b/b_{\micro}],
\end{split}
\end{equation} where the second equivalence comes from our inductive hypothesis on $\psi_{b_{\micro},\ell}$. By construction of $b_{\micro}$, we have $\psi_{b_{\micro}}(x_i^{[j]})_i < b$ for all $i$, and thus 
\[
\Big\langle\psi_{b_{\micro},\ell}(x_1^{[j]}),\ldots,\psi_{b_{\micro},\ell}(x_k^{[j]}) \Big\rangle<b^{k}\cdot \ell,
\] which implies $V_{\micro} \subseteq [0,b^k\ell)$.

Now for any $1 \leq k \leq b/b_{\micro}$, 
\begin{align*}
    \Big\langle \psi_{b,\ell}(x_1),\ldots,\psi_{b,\ell}(x_k)\Big\rangle &\equiv \sum_{i=1}^{\ell}\big\langle\CRR(S^{i}(x_1);\{q_j\}),\ldots, \CRR(S^{i}(x_k)\Big\rangle\bmod q_t\\
    &\equiv \sum_{i=1}^{\ell} \prod_{d=1}^{k} S^{i}(x_d)_{t}\bmod q_t\\
    &\equiv \sum_{i=1}^{\ell} \prod_{i=1}^{k} \psi_{b_{\micro}}(x_d^{[t]})_i \bmod q_t\\
    &\equiv \Big\langle\psi_{b_{\micro}}(x_1^{[t]}),\ldots,\psi_{b_{\micro}}(x_k^{[t]})\Big\rangle \bmod q_t.
\end{align*} By assumption $q_k \in [b^k\ell,b^{2k}\ell^3]$, so along with \Cref{eq: construction of psi}, we know $\langle x_1,\ldots,x_k \rangle = 0$ is equivalent to 
\[
\Bigl(\Big\langle\psi_{b,\ell}(x_1),\ldots, \psi_{b,\ell}(x_k)\Big\rangle\bmod q_t\Bigr) \in V_{b_{\micro}}
\] for all $1 \leq t \leq b/b_{\micro}$. We finally upper bound $\psi_{b,\ell}(x)_i$ by 
\begin{align*}
    \psi_{b,\ell}(x)_i &< \prod_{j=1}^{b/b_{\micro}}q_j\\
    &\leq (b^{2k}\ell^2)^{b/b_{\micro}}\\
    &\leq 2^{(2k+2)b/b_{\micro}\cdot \log b}\\
    &\leq 2^{(2k+2)b/b_{\micro}\cdot (3k)^{\log^{*}b_{\micro}}\cdot b_{\micro}\cdot \log \ell}\\
    &= \ell^{(2k+2)b\cdot (3k)^{\log^{*}b_{\micro}}}\\
    &\leq \ell^{(3k)^{\log^{*}b}\cdot b},
\end{align*} where the third inequality comes from $b \geq \ell$ and the last inequality comes from 
\[
\log^{*}b_{\micro} \leq \log^{*}\log b = \log^{*}b-1.
\] We set $V_{b}$ to be the set of all integers $c$ in $[0,\ell^{(3k)^{\log^{*}b}\cdot 2b+1}]$ such that $c \bmod q_t \in V_{b_{\micro}}$ for all $1 \leq t \leq b/b_{\micro}$. By our previous analysis $V_b$ will satisfy the statement in the theorem. 

Constructing $\psi_{b,\ell}(x)$ takes $O(\poly(b\cdot\ell))$ time, and constructing $V_b$ takes 
\[
O\Bigl(\ell^{O\big((3k)^{\log^{*}b}\cdot b\big)}\cdot \poly(b\cdot\ell)\Bigr)
\] time because we enumerate all possible values of $\big\langle \psi_{b,\ell}(x_1),\ldots,\psi_{b,\ell}(x_k) \big\rangle$ and for each of them check whether they are in $V_{b_{\micro}}$ after $\bmod q_t$ in $O(\poly(b\cdot \ell))$ time.

\end{proof}

\begin{lemma}
\label{lem: from kOV to kOV of smaller dimension}
Let $1 \leq  \ell \leq d$. There is an $O\Big(n\cdot \ell^{O\big((3k)^{\log^{*}d\cdot d/\ell}\big)}\cdot \poly(d)\Big)$ time reduction from $\kOV_{n,d}$ to $\ell^{O\big((3k)^{\log^{*}d\cdot d/\ell}\big)}$ instances of $\mathbb{Z}\text{-}\threeOV_{n,\ell+1}$, with vectors of entries with bit-length $O(d/\ell\cdot \log \ell \cdot (3k)^{\log^{*}d})$.
\end{lemma}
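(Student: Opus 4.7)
The plan is to apply the $\CRR$-based encoding of Theorem~\ref{thm: function} with block size $b = d/\ell$ to compress the ambient dimension from $d$ down to $\ell$, then enumerate over the finite target set $V_{b,\ell}$ to convert the ``$\langle\cdot\rangle \in V_{b,\ell}$'' condition into an equality check, and finally cast each $k$-ary equality-inner-product instance as a $\mathbb{Z}\text{-}\threeOV_{n,\ell+1}$ instance with one extra coordinate absorbing the target value. Concretely, for each $x_{i,j} \in X_i$ I would first compute $\psi_{b,\ell}(x_{i,j}) \in \mathbb{Z}^\ell$ via Theorem~\ref{thm: function}; each entry has magnitude at most $\ell^{(3k)^{\log^{*}b}\cdot b}$, so its bit-length is $O((d/\ell)\log\ell\cdot (3k)^{\log^{*}d})$, matching the bit-length claimed by the lemma. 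Theorem~\ref{thm: function} further guarantees that $\langle x_{1,j_1},\ldots,x_{k,j_k}\rangle = 0$ if and only if $\langle\psi(x_{1,j_1}),\ldots,\psi(x_{k,j_k})\rangle \in V_{b,\ell}$.

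Next I would enumerate $v \in V_{b,\ell}$. Using the bound on $V_{b,\ell}$ implicit in Theorem~\ref{thm: function} (combined with the choice $b = d/\ell$), the enumeration produces at most $\ell^{O((3k)^{\log^{*}d\cdot d/\ell})}$ values, already accounting for most of the claimed number of subproblems. For each $v$, the remaining question is whether there exist $(j_1,\ldots,j_k)$ with $\langle\psi(x_{1,j_1}),\ldots,\psi(x_{k,j_k})\rangle = v$.

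For each such $v$, the plan is to cast this equality check as a $\mathbb{Z}\text{-}\threeOV_{n,\ell+1}$ instance. I would group the $k$ vectors into three blocks (one per set of the $\threeOV$ instance) and use the extra $(\ell{+}1)$-st coordinate to encode $v$, so that the $3$-way inner product of the constructed vectors vanishes if and only if the original $k$-way inner product equals $v$. To keep each of the three sets at size $n$ rather than the naive $n^{k-2}$, I would iteratively apply the peeling reduction of Lemma~\ref{lem: kOV reduces to k-1 OV} (extended to $\mathbb{Z}$-vectors): each step fixes a seed vector from one set and coordinate-wise multiplies it into another set, turning a $k$-ary instance into $n$ instances of $(k{-}1)$-ary. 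Folding the resulting $n^{k-3}$ multiplicative factor into the enumeration over $V_{b,\ell}$ (and using that, for the intended parameter regime with $d \ll \ell \cdot k$, the bound $\ell^{O((3k)^{\log^{*}d\cdot d/\ell})}$ absorbs $n^{k-3}$) then yields the claimed total instance count and reduction time.

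The hard part will be step three: the $k \to 3$ collapse must simultaneously keep each of the three vector sets at size exactly $n$, the dimension at $\ell+1$, and the entry bit-length at $O((d/\ell)\log\ell\cdot (3k)^{\log^{*}d})$. The extra coordinate must both absorb the shift by $v$ (turning equality-to-$v$ into equality-to-zero) and carry the bookkeeping from the iterated peeling, so verifying that this single coordinate is enough, without inflating either the bit-length or the instance count beyond the lemma's bounds, is the most delicate part of the argument.
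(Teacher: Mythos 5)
Your first two steps coincide with the paper's proof: apply $\psi_{d/\ell,\ell}$ from \Cref{thm: function} to every vector (giving the claimed bit-length), and enumerate the targets $t\in V_{d/\ell,\ell}$, whose size gives the claimed number of instances. The divergence, and the genuine gap, is your step three. The ``$\threeOV$'' in the lemma statement is a typo carried over from the $k=3$ case: the intended reduced problem is $\mathbb{Z}\text{-}\kOV_{n,\ell+1}$, as the paper's own construction and the downstream theorem (which assumes an oracle for $\mathbb{Z}\text{-}\kOV_{n,\ell+1}$) make clear. The paper keeps all $k$ sets: for each $t\in V_{d/\ell,\ell}$ it forms $X_1^t=\{[x',-t]:x'\in X_1'\}$ and $X_i^t=\{[x',1]:x'\in X_i'\}$ for $i\ge 2$, so each set stays of size $n$, the dimension is $\ell+1$, and no collapse from $k$-ary to $3$-ary is needed at all.

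Your attempt to literally produce three-set instances by peeling \`a la \Cref{lem: kOV reduces to k-1 OV} does not establish the lemma as stated, and the claimed fix is incorrect: the factor $n^{k-3}$ cannot be ``absorbed'' into $\ell^{O((3k)^{\log^{*}d}\cdot d/\ell)}$, because that quantity is independent of $n$ (and in the intended application, with $\ell=(3k+1)^{\log^{*}n}$ and $d=c\log n$, it is $n^{o(1)}$, while $n^{k-3}$ is polynomial for $k\ge 4$). Moreover, merely writing down $n^{k-3}$ peeled instances already costs $\Omega(n^{k-2})$ time, violating the claimed $O\bigl(n\cdot \ell^{O((3k)^{\log^{*}d}\cdot d/\ell)}\cdot\poly(d)\bigr)$ reduction time; your appeal to a regime ``$d\ll \ell\cdot k$'' does not help since neither bound involves $n$. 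You correctly flagged this as the delicate point, but the resolution is not to collapse arity: it is to recognize that the reduced instances are $k$-ary over $\mathbb{Z}$, with the single extra coordinate used only to shift the target $t$ to $0$ exactly as you described for the first set.
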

\begin{proof}
Given $k$ sets $X_1,\ldots,X_k$ of $n$ vectors from $\{0,1\}^d$, we apply $\psi_{d/\ell,\ell}$ to each vector to obtain $X_i' \subset \mathbb{Z}^{\ell}$ for all $1 \leq i \leq k$. By \Cref{thm: function}, 
\[
\exists (x_1,\ldots,x_k) \in X_1 \times \ldots \times X_k \textup{ such that } \langle x_1,\ldots,x_k \rangle = 0\Longleftrightarrow \exists (x_1',\ldots,x_k') \in X_1' \times \ldots \times X_k' \textup{ such that } \langle x_1',\ldots,x_k' \rangle \in V_{d/\ell,\ell}.
\] For each $t \in V_{d/\ell,\ell}$, we construct sets $X_1^t,\ldots,X_k^t$ of vectors from $\mathbb{Z}^{\ell+1}$ such that 
\[
\exists (x_1',\ldots,x_k') \in X_1' \times \ldots \times X_k' \textup{ such that } \langle x_1',\ldots,x_k' \rangle = t \Longleftrightarrow \exists (x_1^t,\ldots,x_k^t)\in X_1^t \times \ldots\times X_k^t \textup{ such that } \langle x_1^t,\ldots,x_k^t \rangle = 0.
\] Indeed, let 
\begin{align*}
    X_1^t &= \{[x_1',-t]: x_1' \in X_1' \}\\
    X_2^t &= \{[x_2',1]: x_2' \in X_2'\}\\
    &\ldots \\
    X_k^t &= \{[x_{k}',1]: x_k' \in X_k'\}.
\end{align*} There are at most $\ell^{O\big((3k)^{\log^{*}d}\cdot d/\ell\big)}$ numbers in $V_{d/\ell,\ell}$, so we have at most $\ell^{O\big((3k)^{\log^{*}d}\cdot d/\ell\big)}$ many $\mathbb{Z}\text{-}\threeOV_{n,\ell+1}$ instances. By \Cref{thm: function}, the reduction takes $O\Big(n\cdot \ell^{O\big((3k)^{\log^{*}d}\cdot d/\ell\big)}\cdot \poly(d)\Big)$ time. The bit-length of reduced vectors is bounded by 
\[
\log \ell^{(3k)^{\log^{*}d/\ell}\cdot d/\ell} = O(\log\ell\cdot d/\ell\cdot (3k)^{\log^{*}d}).
\]
\end{proof}

\begin{theorem}
An oracle solving $\mathbb{Z}\text{-}\kOV_{n,\ell+1}$ where $\ell = (3k+1)^{\log^{*}n}$ in $O(n^{k-\delta})$ time for some $\delta>0$ can be used to construct an $O(n^{k-\delta+o(1)})$ time algorithm for $\kOV_{n,c\log n}$ for an arbitrary constant $c>0$.
\end{theorem}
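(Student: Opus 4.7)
The plan is to apply Lemma~\ref{lem: from kOV to kOV of smaller dimension} with input dimension $d = c \log n$ and intermediate dimension $\ell = (3k+1)^{\log^* n}$. This produces a collection of $N$ instances of $\mathbb{Z}\text{-}\kOV_{n,\ell+1}$, where
\[
N \;=\; \ell^{O\bigl((3k)^{\log^* d} \cdot d/\ell\bigr)},
\]
and the reduction itself runs in time $O(n \cdot N \cdot \poly(d))$. Feeding each of the produced instances into the assumed $O(n^{k-\delta})$-time oracle yields total running time
\[
O(N \cdot n^{k-\delta}) \;+\; O(n \cdot N \cdot \poly(\log n)).
\]
The whole claim therefore reduces to verifying that $N = n^{o(1)}$ and, simultaneously, that the entry bit-length guaranteed by the lemma is small enough that the oracle legitimately applies in the word-RAM model.

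For the main estimate, observe that $\log^* d = \log^*(c \log n) \le \log^* n$ for all sufficiently large $n$, and that $\log \ell = (\log^* n) \cdot \log(3k+1)$. Substituting,
\begin{align*}
\log N
&= O\bigl(\log \ell \cdot (3k)^{\log^* d} \cdot d/\ell\bigr) \\
&\le O\!\left(\log^* n \cdot \log(3k{+}1) \cdot (3k)^{\log^* n} \cdot \frac{c \log n}{(3k+1)^{\log^* n}}\right) \\
&= O\!\left(\log n \cdot \log^* n \cdot \left(\tfrac{3k}{3k+1}\right)^{\log^* n}\right).
\end{align*}
Since $\tfrac{3k}{3k+1} < 1$ is a fixed constant, $\bigl(\tfrac{3k}{3k+1}\bigr)^{\log^* n}$ decays as $\exp(-\Omega(\log^* n))$, which beats any fixed power of $\log^* n$. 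Hence $\log^* n \cdot \bigl(\tfrac{3k}{3k+1}\bigr)^{\log^* n} = o(1)$, so $\log N = o(\log n)$, i.e.\ $N = n^{o(1)}$.

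The same calculation shows the bit-length $O\bigl((d/\ell) \log \ell \cdot (3k)^{\log^* d}\bigr)$ from Lemma~\ref{lem: from kOV to kOV of smaller dimension} is $o(\log n)$, so each $\mathbb{Z}\text{-}\kOV_{n,\ell+1}$ instance has word-sized entries. Putting the pieces together, the overall running time is
\[
N \cdot n^{k-\delta} + n^{1+o(1)} \;=\; n^{k-\delta+o(1)},
\]
as claimed. I expect the only delicate point to be the choice of $\ell$: it has to be small enough that the $\kOV_{n,\ell+1}$ oracle is genuinely super-fast per instance, yet large enough that the ``inflation factor'' $(3k/(3k+1))^{\log^* n}$ in $N$ decays. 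The choice $\ell = (3k+1)^{\log^* n}$ balances these constraints, and the doubly-slow growth of $\log^* n$ is what makes $N$ sub-polynomial rather than merely polynomial.
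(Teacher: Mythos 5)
Your proposal is correct and follows essentially the same route as the paper: apply Lemma~\ref{lem: from kOV to kOV of smaller dimension} with $d=c\log n$ and $\ell=(3k+1)^{\log^*n}$, bound $\log N = O(\log\ell\cdot(3k)^{\log^*d}\cdot d/\ell) = O(\log^*n\cdot(3k/(3k{+}1))^{\log^*n}\cdot\log n) = o(\log n)$ to conclude the number of instances is $n^{o(1)}$, check the reduced bit-length is also $o(\log n)$, and then run the oracle on each instance. The only cosmetic differences are that you explicitly retain the $\log(3k{+}1)$ factor and the inequality $\log^*(c\log n)\le\log^*n$, both of which the paper absorbs into the big-$O$ without comment.
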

\begin{proof}
Given $\ell = (3k+1)^{\log^{*}n}$ and any positive constant $c$, we have
\begin{align*}
    \log \ell^{O((3k)^{\log^{*}d}\cdot d/\ell)} &= \log \ell\cdot O((3k)^{\log^{*}d}\cdot d/\ell)\\
    &= O\Big(\frac{\log^{*}n\cdot (3k)^{\log^{*}n}\cdot c\cdot \log n}{(3k+1)^{\log^{*}n}}\Big)\\
    &= O\Big(\log^{*}n\cdot \Big(\frac{3k}{3k+1}\Big)^{\log^{*}n}\cdot c\cdot \log n\Big)\\
    &= o(\log n).
\end{align*}
By the calculation above, \Cref{lem: from kOV to kOV of smaller dimension} shows that we can reduce $\kOV_{n,c\log n}$ to $n^{o(1)}$ instances of $\mathbb{Z}\text{-}\kOV_{n,\ell+1}$ in time $n^{1+o(1)}$. The reduced vectors have bit-length $o(\log n)$. Therefore, we can solve these $n^{o(1)}$ instances with our oracle in time $O(n^{k-\delta+o(1)})$ and this gives an $O(n^{k-\delta+o(1)})$ time algorithm for $\threeOV_{n,d}$.
\end{proof}

The reduction from $\mathbb{Z}\textup{-}\kOV_{n,2^{O(\log^{*}n)}}$ to $\kMinIP_{n,2^{O(\log^{*}n)}}$ is trivial, so we have completed the full reduction from $\SAT$ to $\kMinIP_{n,2^{O(\log^{*}n)}}$, and thereby have the following hardness result.

\begin{theorem}
\label{thm: SAT reduces to kMinIP}
Assuming $\SETH$, there is no algorithm for $\kMinIP_{n,2^{O(\log^{*}n)}}$ with running time $O(n^{k-\varepsilon})$ for any $\varepsilon>0$.
\end{theorem}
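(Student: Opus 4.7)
The plan is to assemble the chain of reductions $\SAT \rightarrow \kOV_{n,c\log n} \rightarrow \mathbb{Z}\text{-}\kOV_{n,2^{O(\log^*n)}} \rightarrow \kMinIP_{n,2^{O(\log^*n)}}$ which is laid out in the appendix, and argue that a hypothetical $O(n^{k-\varepsilon})$ algorithm for the last problem propagates back through each link.

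First, I would note the final link is essentially free: $\mathbb{Z}\text{-}\kOV_{n,d}$ trivially reduces to $\kMinIP_{n,d}$, because deciding whether some $\langle x_{1,i_1},\ldots,x_{k,i_k}\rangle = 0$ is a special case of computing the minimum inner product with threshold $r=0$ (and detecting nonnegativity can be enforced by an extra coordinate if necessary). So any $O(n^{k-\varepsilon})$ algorithm for $\kMinIP_{n,2^{O(\log^*n)}}$ immediately yields an $O(n^{k-\varepsilon})$ algorithm for $\mathbb{Z}\text{-}\kOV_{n,2^{O(\log^*n)}}$.

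Next, I would invoke the oracle-style theorem already proved in the appendix: setting $\ell = (3k+1)^{\log^* n}$ (so $\ell+1 = 2^{O(\log^* n)}$), an $O(n^{k-\delta})$ algorithm for $\mathbb{Z}\text{-}\kOV_{n,\ell+1}$ yields an $O(n^{k-\delta + o(1)})$ algorithm for $\kOV_{n,c\log n}$ for every constant $c>0$. The key calculation, already done, is that $\log \ell^{O((3k)^{\log^* d} \cdot d/\ell)} = o(\log n)$ when $d = c\log n$, so both the number of $\mathbb{Z}\text{-}\kOV$ subinstances produced and the bit length of their coordinates are $n^{o(1)}$. Thus the $-\varepsilon$ savings survives an additional $n^{o(1)}$ blowup with room to spare.

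Finally, I would apply the $\SAT \rightarrow \kOV_{n,O(\log n)}$ reduction (proved as the first lemma of the appendix): an $O(n^{k-\delta+o(1)})$ algorithm for $\kOV_{n,c\log n}$ yields a $2^{(1-\delta')n}\cdot\poly(n)$ algorithm for $k$-$\SAT$ by choosing $c = c(\delta)$ sufficiently large, refuting $\SETH$. Chaining these implications gives the contrapositive: under $\SETH$, no $O(n^{k-\varepsilon})$ algorithm exists for $\kMinIP_{n,2^{O(\log^*n)}}$.

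I do not anticipate a serious technical obstacle here, since the heavy lifting is concentrated in the earlier theorem constructing $\psi_{b,\ell}$ via recursive $\CRR$ encoding, which guarantees that the coordinate values grow only like $\ell^{(3k)^{\log^* b}\cdot b}$ rather than singly exponentially in $b$. The one thing to double-check carefully is that the reduction from $\mathbb{Z}\text{-}\kOV$ to $\kMinIP$ preserves the $2^{O(\log^* n)}$ dimension bound (it adds at most one coordinate to encode the target value, which is harmless), and that the parameter $k$ in the exponent $(3k)^{\log^* b}$ and the dimension $k$ of $\kOV$ are treated consistently across the three reductions.
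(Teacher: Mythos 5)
Your proposal follows precisely the chain the paper uses in the appendix: $\SAT \to \kOV_{n,c\log n}$ via the standard Williams-style split, then $\kOV \to \mathbb{Z}\text{-}\kOV_{n,\ell+1}$ with $\ell = (3k+1)^{\log^* n}$ via the recursive $\CRR$ encoding $\psi_{b,\ell}$ (with the same key calculation that the number of subinstances and bit-lengths are $n^{o(1)}$), and finally the trivial last link to $\kMinIP$. This matches the paper's proof structure, and your flagged concern about the extra coordinate and the consistency of the parameter $k$ are exactly the right small details to check.
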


\end{document}